\newcolumntype{N}{@{}m{0pt}@{}}
\newtheorem{theorem}{Theorem}
\newtheorem{proposition}{Proposition}
\newtheorem{corollary}{Corollary}[theorem]
\newtheorem{lemma}{Lemma}
\theoremstyle{definition}
\theoremstyle{remark}
\newtheorem*{remark}{Remark}
\def\blk{\color{black}}
\begin{document}



\title{Corrections to the Hamiltonian Induced by  Finite-Strength Coupling to the Environment}

\author{Marcin {\L}obejko}
\affiliation{Institute of Theoretical Physics and Astrophysics, Faculty of Mathematics,
Physics and Informatics, University of Gda\'nsk, 80-308 Gda\'nsk, Poland}
\affiliation{International Centre for Theory of Quantum Technologies, University of Gda\'nsk, 80-308 Gda\'nsk, Poland}
\author{Marek Winczewski}
\affiliation{International Centre for Theory of Quantum Technologies, University of Gda\'nsk, 80-308 Gda\'nsk, Poland}
\author{Gerardo Suárez}
\affiliation{International Centre for Theory of Quantum Technologies, University of Gda\'nsk, 80-308 Gda\'nsk, Poland}
\author{Robert Alicki}
\affiliation{International Centre for Theory of Quantum Technologies, University of Gda\'nsk, 80-308 Gda\'nsk, Poland}
\author{Micha{\l} Horodecki}
\affiliation{International Centre for Theory of Quantum Technologies, University of Gda\'nsk, 80-308 Gda\'nsk, Poland}

\date{\today}
\begin{abstract}

If a quantum system interacts with the environment, the Hamiltonian acquires a correction known as the Lamb-shift term. There are two other corrections to the Hamiltonian, related to the stationary state. Namely, the stationary state is to first approximation a Gibbs state with respect to original Hamiltonian. However, if we have finite coupling, the true stationary state will be different, and regarding it as a Gibbs state to some effective Hamiltonian, one can extract a correction, which is called ``steady-state" correction. 
Alternatively, one can take a static point of view, and consider the reduced state of total equilibrium state, i.e., system plus bath Gibbs state. The extracted Hamiltonian correction is called the ``mean-force" correction. This paper presents several analytical results on second-order corrections (in coupling strength) of the three types mentioned above. Instead of the steady state, we focus on a state annihilated by the Liouvillian of the master equation, labeling it as the ``quasi-steady state". Specifically, we derive a general formula for the mean-force correction as well as the quasi-steady state and Lamb-shift correction for a general class of master equations. Furthermore, specific formulas for corrections are obtained for the Davies, Bloch-Redfield, and cumulant equation (refined weak coupling). In particular, the cumulant equation serves as a case study of the Liouvillian, featuring a non-trivial fourth-order generator. This generator forms the basis for calculating the diagonal quasi-steady state correction. We consider spin-boson model as an example, and in addition to using our formulas for corrections, we consider mean-force correction from 
reaction-coordinate approach. 
\end{abstract}

\maketitle


\section{Introduction} 

As is well known, when a system interacts with its environment, it undergoes not only dissipative but also experiences the renormalization of Hamiltonian dynamics, leading to the so-called 'Lamb-shift' corrected Hamiltonian \cite{AlickiLendi1987,Breuer+2006}. Furthermore, for finite coupling strength with a single heat bath, the equilibrium state of the open system's  dynamics will (assuming ergodicity~\cite{AlickiLendi1987,Breuer+2006}) deviate from the Gibbs state of the non-interacting system Hamiltonian. Instead, it is widely believed, and in many cases confirmed, that the appropriate candidate for the steady state is the reduced state of the total equilibrium, encompassing both the system and the environment (see \cite{Trushechkin_2021_MeanForce} and references therein). Generally, this can be regarded as a Gibbs state with respect to some effective  Hamiltonian, referred to as the 'mean-force' Hamiltonian. The  state is called the 'mean-force Gibbs' state. Similarly, one can deduce  a Hamiltonian from the true steady state (referred to as the 'steady-state Hamiltonian'), with the expectation that the two Hamiltonians - i.e. the mean-force and the steady state Hamiltonian  - are equal (see, in this context, \cite{Roux1999,fleming,thingna2012,anders-oscillator,Guarnieri,Trushechkin_2021_MeanForce,miyashita,Subasi}).

If the coupling with the environment is weak, albeit finite, all the three Hamiltonians mentioned above\footnote{i.e., Lamb-shift corrected, mean-force Gibbs state Hamiltonian, and the one deduced from the steady state.} take the form of the original (i.e., bare) Hamiltonians plus a 
correction term. Investigating these corrections is currently the subject of intense research~\cite{Trushechkin_2021_MeanForce,miyashita,tupkary2022fundamental}. The primary challenge in analyzing and comparing these corrections lies in the necessity of employing approximations in description of an open quantum system, resulting in more or less accurate master equations (ME) \cite{AlickiLendi1987,Breuer+2006}. In this context, our goal is to ensure that these master equations yield a steady-state Hamiltonian as close as possible to the mean-force Hamiltonian, particularly up to the second order in coupling strength.

The objective of this paper is to present analytical results regarding the three types of corrections to the bare Hamiltonian, up to second-order (in coupling strength)  for a specific class of open system models (in a previous work mostly the corrections to states have been provided, see e.g. \cite{Subasi,Geva2000,miyashita,thingna2012,Guarnieri,anders-oscillator,latune,Latune2022}).
As previously mentioned, we are dealing with three types of corrections: the Lamb-shift, mean-force, and steady-state correction. However, obtaining the steady-state correction is particularly challenging, so we opt for an alternative approach. Specifically, we analyze the Hamiltonian not derived from the steady state, but from a state annihilated by the generator of the dynamics. Since this state is likely to be equal to the true steady state (for instance, this holds true for a time-independent Liouvillian), we refer to it as the ``quasi-steady" state, along with its corresponding Hamiltonian correction.

An important motivation for considering Hamiltonian corrections instead states corrections is that the Hamiltonian corrections might be useful for those looking at effective Hamiltonian theories \cite{effective_qed,hep_lieb}. Our approach might help extend these effective Hamiltonians into the continuous frequency regime, also it might prove useful in the context of the fluctuation dissipation theorem where the mean-force Hamiltonian is sometimes computed \cite{Campisi_2009,fluctuation_mf}

Here are our most general results that do not involve any particular master equation. We derive:
\begin{itemize}
\item The \textit{general form of the  second order mean-force correction} for an arbitrary open system, a result previously known only for specific couplings. This form has also been independently derived by G. Timofeev and A. Trushechkin \cite{trushechkin2022}.
\item The \textit{off-diagonal} elements of the quasi-steady state correction for a relatively broad class of master equations expressed in terms of their Kossakowski matrix.
\end{itemize}
Additionally, our formulas for the above corrections explicitly reveal their relations with the Lamb-shift correction.

Next, we focus on three major descriptions of open systems: Davies ME \cite{davies1974markovian, gorini1976completely}, Bloch-Redfield ME \cite{bloch1957generalized,redfield1957theory}, and cumulant equation (also known as refined weak coupling) \cite{Alicki89,Rivas_2017,Rivas_2019}. It is important to note that, unlike Bloch-Redfield, both Davies and cumulant equations provide completely positive evolution. We demonstrate that for a general coupling, the Bloch-Redfield and cumulant equations predict the \textit{off-diagonal} elements of a correction to the \textit{quasi-steady state} Hamiltonian that coincide with those of the mean-force (previously known only for specific models, as seen in \cite{Guarnieri,thingna2012,miyashita,Subasi}). In contrast, the off-diagonal elements of quasi-steady state correction for the Davies equation aligns with the 
non-standard \textit{Lamb-shift} correction, namely the one, for which  secular approximation is not done (indeed, to obtain completely positive dynamics, is is enough to apply secular approximation just to dissipative part).  

Subsequently, we address the more intricate matter of diagonal elements, presenting a methodology for computing the diagonal elements of the quasi-steady state correction. Consequently, we derive analytical formulas for these elements in the case of a two-level system. Our findings indicate that for the discussed master equations with Liouvillians defined up to the second order (e.g., Davies and Bloch-Redfield master equations), there is no correction to the diagonal elements. 

However, such corrections do appear if we consider cumulant equation. Namely, we write it in the form of a master equation, and truncate the Liouvillian up to fourth order. 
The obtained correction 
exhibits a notable agreement (verified numerically, see below) with the corresponding mean-force correction. It is worth noting that still a discrepancy is here expected, given that the cumulant equation neglects contributions from higher-order cumulants.  

Lastly, we computed the derived corrections for the spin-boson model. As we mentioned before, this provides us with a direct verification of mean-force and quasi-steady state predictions. Moreover, we additionally involved the reaction coordinate method \cite{latune,Nazir2018,lambert} to verify numerically the analytical expression for the mean-force correction, which proves the perfect agreement in the regime of validity of the method.    

A few remarks are here in order. As usual, the obtained corrections will be cut-off dependent and often diverge with the growing cut-off frequency. This is actually ubiquitous in the literature on the topic  (see in this context  \cite{WinczewskiMHA} and \cite{WinczewskiAlicki}). Secondly, we do not touch on the issue of renormalization: the derivation of the master equation  should be not be based on bare Hamiltonian but should somehow involve the renormalized one (as advocated in \cite{Alicki89} and \cite{WinczewskiAlicki}). We have not followed this in the present manuscript to keep  clear the main message.

\section{Hamiltonian corrections} \label{sect:hamiltonian_correction}
We consider a general Hamiltonian of the interacting system $S$ with the thermal reservoir $R$ of the form:
\begin{eqnarray} \label{total_hamiltonian}
  H &=& H_0 + H_R + \lambda H_I, \\
  H_I &=& \sum_\alpha A_\alpha \otimes R_\alpha, \label{interaction_sum}
\end{eqnarray}
where $H_0$ is a bare Hamiltonian of the system, $H_R$ is free Hamiltonian of the bath,  $A_\alpha, R_\alpha$ are interaction operators (acting on the system and bath Hilbert spaces, respectively), and $\lambda$ is a coupling constant. 
In the following, we define a Gibbs state of the thermal reservoir $\gamma_R = \mathcal{Z}_R^{-1} e^{-\beta H_R}$ at inverse temperature $\beta$, where $\mathcal{Z}_R = \Tr[e^{-\beta H_R}]$. Additionally, we consider the   operators evolving in the interaction picture $A(t) = e^{i (H_0 + H_R) t} A e^{-i (H_0 + H_R) t}$, and we use an abbreviation $\langle A \rangle_{\gamma_R} = \Tr[A \gamma_R]$. We assume that that bath operators are centralized, i.e.,  $\langle R_\alpha \rangle_{\gamma_R} = 0$.

The main object of interest of this article are three different  second-order corrections to the bare Hamiltonian of the system $H_0$ in the weak coupling limit (i.e., $\lambda \ll 1$). Namely, the corrections related to the following three Hamiltonians: {\it  Lamb-shift} $H_{\text{LS}}$, {\it quasi-steady state } $H_{\text{st}}$ and the {\it mean-force} $H_{\text{mf}}$. 
Due to centralization of the bath operators, the leading order of the perturbation calculus is $\lambda^2$. In accordance, the corrections are given by the relations
\begin{eqnarray}
H_\mathrm{cor} - H_0 = \lambda^2 H_\mathrm{cor}^{(2)} + \dots,
\end{eqnarray}
which we represent by means of jump operators:
\begin{eqnarray} \label{hamiltonian_correction}
H_\mathrm{cor}^{(2)}(t) = \sum_{\omega,\omega'} \sum_{\alpha, \beta} \Upsilon^{(\mathrm{cor})}_{\alpha \beta}(\omega,\omega',t) A^\dag_{\alpha}(\omega) A_\beta(\omega'),
\end{eqnarray}
where $\mathrm{cor}$ indicates Lamb-shift  ($LS$),  quasi-steady state  ($st$) or mean-force ($mf$) correction, respectively, and the jump operators are given by:
\begin{eqnarray} \label{jump_operators_definition}
A_\alpha(\omega) = \sum_{\epsilon' - \epsilon = \omega} \Pi(\epsilon) A_\alpha \Pi(\epsilon').
\end{eqnarray}
where $\Pi(\epsilon)$ is the projector on the subspace with energy $\epsilon$, such that $H_0 = \sum_\epsilon \epsilon \ \Pi(\epsilon)$.
\begin{remark}
We need to justify that in \eqref{hamiltonian_correction} only pair of jump operators are enough.
For instance, the  operators 
of the form $A^\dag_{\alpha}(\omega) A_\beta(\omega')$
may not span the whole space of the system.
Of course this need not lead to violation of \eqref{hamiltonian_correction}, however at least it means that justification is needed. As we will see in Appendix \ref{mean_force_corrections_appendix}, Eqs. \eqref{second_order_equation_mean_force}-\eqref{mean_force_RHS} that for the mean-force corretion
this is true by definition.
In the case of the Lamb shift correction it is also true for all the models of open systems that we consider (Redfield, GKLS-Davies and the cumulant equations). 
Finally, for the quasi-steady state correction it is not clear whether the ansatz of \eqref{hamiltonian_correction}
is general enough. It is for sure correct, if we assume that $H_0$ is Bohr non-degenerate, which implies that operators 
$A^\dag_{\alpha}(\omega) A_\beta(\omega')$
indeed span the full space of system operators. 
\end{remark}

\subsection{ Lamb-shift correction}
We start with the {\it Lamb-shift correction} $H^{(2)}_{\text{LS}}$, which is defined according to the Liouvillian of the master equation in the Schr\"{o}dinger picture of the following general form:
\begin{multline} \label{master_equation}
  \mathcal{L}_t[\rho]  = i [\rho, H_0 + \lambda^2 H^{(2)}_{\text{LS}} (t)] \\ 
  + \lambda^2 \sum_{\omega,\omega'}\sum_{\alpha \beta} K_{\alpha \beta}(\omega,\omega',t) \mathcal{D}_{\alpha \beta}(\omega,\omega')[\rho] + O(\lambda^4),
\end{multline}
where:
\begin{multline}
\mathcal{D}_{\alpha \beta}(\omega,\omega')[\rho]  = A_\beta(\omega') \rho A_\alpha^\dag(\omega) - \frac{1}{2} \{A_\alpha^\dag(\omega) A_\beta(\omega'), \rho \}.
\end{multline} 
$K$ is the so-called Kossakowski matrix and for a while we do not determine it: For particular choice of $K$ and $\Upsilon^{(\text{LS})}$ we will obtain a given master equation, such as the Bloch-Redfield or Davies one. Notice that the Liouvillian in Eq. \eqref{master_equation} leads to a completely positive dynamics if the matrix $K_{\alpha \beta}(\omega,\omega', t)$ is positive semi-definite. 
Later, we consider the long-time limits (assuming they exist), when $t \to \infty$, for which we use the abbreviations:
\begin{align}
&\Upsilon^{(\text{LS})}_{\alpha \beta}(\omega,\omega') \equiv \lim_{t \to \infty} \Upsilon^{(\text{LS})}_{\alpha \beta}(\omega,\omega',t), \\
& K_{\alpha \beta}(\omega,\omega') \equiv \lim_{t \to \infty} K_{\alpha \beta}(\omega,\omega',t).
\end{align}

The Lamb-shift correction introduces a renormalization of the bare Hamiltonian $H_0$ of the open system due to the finite-strength coupling with the environment. The correction affects the Hamiltonian part of the evolution.
\subsection{Mean-force correction}\label{mean_force_correction_section}
Let us then introduce the mean-force Hamiltonian $H_\text{mf}$, defined according to the marginal Gibbs state of the global equilibrium, i.e.,
\begin{eqnarray} \label{mean_force_equation_main}
\rho_{\text{mf}} = \frac{e^{-\beta H_\text{mf}}}{\Tr_S[e^{-\beta H_\text{mf}}]} = \frac{\Tr_R[e^{-\beta H}]}{\Tr_{SR}[e^{-\beta H}]}.
\end{eqnarray}
The mean-force Gibbs state refers to the local equilibrium of the open system corresponding to the global equilibrium of the full system (i.e., open system plus the environment). The definition solely relies on static equilibrium and hence is not related to the dynamics.  

Concerning the mean-force Hamiltonian $H_\text{mf}$, one should notice that Eq. \eqref{mean_force_equation_main} does not specify uniquely its ground state energy, since the equation is invariant under the transformation $H_\text{mf} \to H_\text{mf} + \delta$ for arbitrary real constant $\delta$. Commonly, this constant is fixed by the convention \cite{Roux1999,fleming,thingna2012,anders-oscillator,Guarnieri,Trushechkin_2021_MeanForce,miyashita,Subasi}:
\begin{eqnarray}
\label{eq:gauge}
\Tr_{SR}[e^{-\beta H}] = \Tr_S[e^{-\beta H_{\text{mf}}}] \Tr_R[e^{-\beta H_R}],
\end{eqnarray}
such that we have the following relation 
\begin{eqnarray} \label{mean_force_definition}
H_{\text{mf}}  = -\frac{1}{\beta} \log\left[\frac{\Tr_R[e^{-\beta H}]}{\mathcal{Z}_R}\right],
\end{eqnarray}
where $\mathcal{Z}_R = \Tr_R[e^{-\beta H_R}]$. Another way of fixing the gauge is 
to demand that $H_{\text{mf}}$ is traceless:
\begin{align}
\label{eq:gauge-trace}
    \Tr (H_{\text{mf}})=0
\end{align}
The latter gauge will be convenient, when extracting mean-force correction Hamiltonian numerically from a state 
(e.g. steady state of dynamics), as we do in 
\eqref{mean_force_state_RC},
while the former is more convenient, while deriving mean-force correction from the definition, as in Theorem \ref{mean_force_theorem}. 


In this paper, we concentrate solely on the second-order \textit{mean-force correction}  $H^{(2)}_{\text{mf}}$, defined via the expansion:
\begin{eqnarray}\label{mean_force_correction}
H_{\text{mf}} = H_0 + \lambda^2 H_\text{mf}^{(2)} + \dots,
\end{eqnarray}
where the zeroth-order term is fixed by putting $\lambda = 0$ (in the gague fixed by condition of traceless $H_{\text{mf}}$). 

\subsection{Quasi-steady state correction} \label{steady_state_main_section}
The last correction is defined with respect to the so-called quasi-steady Gibbs state:
\begin{eqnarray} \label{steady_state}
  \varrho \propto e^{-\beta H_{\text{st}}},
\end{eqnarray}
which is defined as the fixed point of the Liouvillian in the long-time limit: 
\begin{eqnarray} \label{steady_state_equation}
  \lim_{t \to \infty} \mathcal{L}_t[\varrho] \equiv \mathcal{L}_\infty [\varrho]  = 0.
\end{eqnarray}
Contrary to static mean-force state, $\varrho$ corresponds the dynamical equilibrium, defined according to particular Liouvillian in the long-time limit. In analogy to the mean-force and Lamb-shift, we are interested in the leading order correction defined by the expansion:
\begin{eqnarray} \label{steady_state_Hamiltonian_expansion}
H_{\text{st}} = H_0 + \lambda^2 H_\text{st}^{(2)} + \dots.
\end{eqnarray}
Then, to derive a solution for the correction $H^{(2)}_{\text{st}}$, we adapt the perturbative method. We expand the generator of the master equation:
\begin{align} \label{steady_state_eq_series}
\mathcal{L}_\infty [\varrho] = \mathcal{L}_{\infty}^{(0)}[\varrho] + \lambda^2 \mathcal{L}_{\infty}^{(2)}[\varrho] + \lambda^4 \mathcal{L}_{\infty}^{(4)}[\varrho] + \dots,
\end{align}
and similarly, we expand the postulated stationary state, i.e., 
\begin{eqnarray}
\varrho = \varrho_0 + \lambda^2 \varrho_2 + \lambda^4 \varrho_4 \dots ,
\end{eqnarray}
where by using the Dyson series for the lowest orders we get: 
\begin{align} \label{steady_state_expansion}
&\varrho_0 = e^{-\beta H_0}, \\ &\varrho_2 = -e^{-\beta H_0} \int_0^\beta dt \ e^{t H_0} H^{(2)}_\text{st} e^{-t H_0}.
\end{align}
A solution to Eq. \eqref{steady_state_equation} can be now constructed by requiring that  {\it in each order} of coupling strength $\lambda$, the action of generator $\mathcal{L}_\infty$ on the state should vanish.  Then, we obtain the following set of equations:
\begin{align}
&\mathcal{L}_{\infty}^{(0)}[\varrho_0] = 0 \label{zeroth_order_equation}\\
&\mathcal{L}_{\infty}^{(0)}[\varrho_2] + \mathcal{L}_{\infty}^{(2)}[\varrho_0] = 0, \label{second_order_equation} \\
&\mathcal{L}_{\infty}^{(0)}[\varrho_4] + \mathcal{L}_{\infty}^{(2)}[\varrho_2] + \mathcal{L}_{\infty}^{(4)}[\varrho_0] = 0,  \label{fourht_order_equation}\\
&\dots \nonumber
\end{align}
However, one should note that this is a stronger condition than \eqref{steady_state_equation}. 

Our main goal is to solve the following equations to find the solution for the quasi-steady state correction $H^{(2)}_{\text{st}}$. However, as it was highlighted in \cite{thingna2012}, the second-order equation \eqref{second_order_equation} only provides the solution for off-diagonal terms of the correction, whereas the solution for the diagonal part involves the fourth-order equation \eqref{fourht_order_equation}. In the following Section \ref{sect:steady_state_offdiag}, we present the general solution for off-diagonal elements for the Liouvillian \eqref{master_equation} and we provide some general methods for solving the diagonal part from the fourth-order equation (see Section \ref{sect:steady_state_diag}). In accordance, we derive explicit formulas of the quasi-steady state correction for specific types of Liouvillians introduced in the next Section \ref{sect:models}.   

Two additional remarks have to be mentioned. Firstly, the zeroth-order term $H_0$ in Eq. \eqref{steady_state_Hamiltonian_expansion}, leading to $\varrho_0 \propto e^{-\beta H_0}$, is consistent with the equation \eqref{zeroth_order_equation} for the Liouvillian $\mathcal{L}_\infty^{(0)} = i[\cdot, H_0]$.  However, unlike in the mean-force correction, we cannot 
obtain that $\varrho_0 \propto e^{-\beta H_0}$ by setting $\lambda=0$. 
Indeed, doing so we are left with 
the mentioned 
$\mathcal{L}_\infty^{(0)} = i[\cdot, H_0]$, and we see that 
$\mathcal{L}_\infty^{(0)}(\varrho_0)$ has many solutions. 
Hence, later we will provide the additional condition for a master equation, namely  the detailed balance condition. Then $\varrho_0$ can be obtained from 
higher order steady state by letting $\lambda$ to zero.

Secondly, in a manner similar to the mean-force correction, Eq. \eqref{steady_state}  
determines  $H_{\text{st}}$
up to a gauge. 
 
\blk

\section{Models of open systems } \label{sect:models}
In this section, we introduce three specific Liouvillians used in the modeling of quantum open systems: the Liouvillians of Davies ME, Bloch-Redfield ME, and the cumulant equation. The first two are interconnected through the so-called secular approximation, which ensures that Davies ME is completely positive. Later, we will specifically elucidate how this impacts the quasi-steady-state correction. On the other hand, the cumulant equation restores the positivity of the map by incorporating higher-order generators into the Liouvillian. This presents an interesting case study for addressing diagonal corrections that involve fourth-order generators.    

\subsubsection{Bloch-Redfield and Davies master equations}
Let us start with the most known master equations, i.e., the Bloch-Redfield and the Davies master equations. In general, for the Bloch-Redfield ME we define a time-dependent Kossakowski matrix and Lamb-shift coefficient (see Eq. \eqref{master_equation}): 
\begin{align}\label{gamma_redfield}
&K_{\alpha \beta}(\omega,\omega',t) = \Gamma^t_{\alpha \beta}(\omega') + \Gamma^t_{\beta \alpha}(\omega)^* 
\equiv \gamma^t_{\alpha \beta}(\omega,\omega'), \\
&\Upsilon^{(\text{LS})}_{\alpha \beta}(\omega,\omega',t) = \frac{1}{2i} [\Gamma^t_{\alpha \beta}(\omega') - \Gamma_{\beta \alpha}^t(\omega)^*] \equiv \mathcal{S}^t_{\alpha \beta}(\omega,\omega'),
\end{align} 
where 
\begin{eqnarray}
\Gamma^t_{\alpha \beta}(\omega) &=& \int_0^{t} ds \ e^{i \omega s} \langle R_\alpha (s) R_\beta (0) \rangle_{\gamma_R}. 
\end{eqnarray}
However, in this paper we are mostly interested in the long-time limit of the Liouvillian, based on which we define the quasi-steady state correction. For Bloch-Redfield ME we have: 
\begin{multline} \label{redfield_liouvillian}
    \mathcal{L}_\infty^{R} = i [\cdot , H_0] \\ 
    + \lambda^2 \sum_{\omega,\omega'} \sum_{\alpha \beta} \Big( i\mathcal{S}_{\alpha \beta}(\omega,\omega') [\cdot , A^\dag_{\alpha}(\omega) A_\beta(\omega')] \\ + \gamma_{\alpha \beta}(\omega,\omega') \mathcal{D}_{\alpha \beta}(\omega,\omega') \Big).
\end{multline}
where 
\begin{align}
    \mathcal{S}_{\alpha \beta}(\omega,\omega')= \lim_{t\to \infty}
\mathcal{S}^t_{\alpha \beta}(\omega,\omega')
\quad
\gamma_{\alpha \beta}(\omega,\omega')=\lim_{t\to \infty}\gamma^t_{\alpha \beta}(\omega,\omega')
\end{align}
It is well-known that the Bloch-Redfield equation, in general, does not preserve the positivity of the state since $\gamma_{\alpha \beta}(\omega,\omega')$ is not a positive semi-definite matrix. Commonly, this issue is solved by applying the so-called secular approximation, which leads to the Davies master equation in the GKSL form. In accordance, applying the secular approximation, we obtain the Kossakowski matrix for the Davies dynamics:
\begin{align} \label{secular_approx}
& \gamma_{\alpha \beta}(\omega,\omega') \xrightarrow{\text{sec. approx.}}  \gamma_{\alpha \beta} (\omega) \delta_{\omega,\omega'}, \\
& \mathcal{S}_{\alpha \beta}(\omega,\omega') \xrightarrow{\text{sec. approx.}}  \mathcal{S}_{\alpha \beta} (\omega) \delta_{\omega,\omega'}, \label{secular}
\end{align}
where 
\begin{equation} \label{pumping_dumping_rates}
    \gamma_{\alpha \beta} (\omega) \equiv \gamma_{\alpha \beta} (\omega,\omega)
    = \int_{-\infty}^{+\infty} ds \ e^{i \omega s} \langle R_\alpha (s) R_\beta (0) \rangle_{\gamma_R},  
\end{equation}
is the Fourier transform of the auto-correlation function, and
\begin{equation} \label{S_principal_value_main}
  \mathcal{S}_{\alpha \beta}(\omega) \equiv \mathcal{S}_{\alpha \beta}(\omega,\omega) = \mathcal{P} \frac{1}{2\pi}\int_{-\infty}^{+\infty}d\Omega~ \frac{\gamma_{\alpha \beta}(\Omega)}{\omega-\Omega},
\end{equation}
where $\mathcal{P}$ denotes the principal value integral. We also have the following relation:
\begin{eqnarray} \label{Gamma_real_and_imaginary}
\lim_{t \to \infty} \Gamma^t_{\alpha \beta}(\omega) \equiv  \Gamma_{\alpha \beta}(\omega) = \frac{1}{2} \gamma_{\alpha \beta}(\omega) + i \mathcal{S}_{\alpha \beta}(\omega).
\end{eqnarray}
Finally, the Davies generator is given by: 
\begin{multline}\label{Davies_long_time_limit}
    \mathcal{L}_\infty^{D} = i [\cdot , H_0] 
    + \lambda^2 \sum_{\omega} \sum_{\alpha \beta} \Big( i\mathcal{S}_{\alpha \beta}(\omega) [\cdot , A^\dag_{\alpha}(\omega) A_\beta(\omega)] \\ + \gamma_{\alpha \beta}(\omega) \mathcal{D}_{\alpha \beta}(\omega,\omega) \Big).
\end{multline}
However, we want to notice that to restore the positivity of the Bloch-Redfield master equation, it is enough to do the secular approximation only for the dissipative part. For this reason, we additionally consider a so-called \textit{non-secular Davies} defined as:
\begin{multline}
\label{eq:Davies-non-sec}
    \mathcal{L}_\infty^{D, ns} = i [\cdot , H_0] \\
    + \lambda^2 \sum_{\omega,\omega'} \sum_{\alpha \beta} \Big( i\mathcal{S}_{\alpha \beta}(\omega, \omega') [\cdot , A^\dag_{\alpha}(\omega) A_\beta(\omega')] \\ + \delta_{\omega,\omega'}\gamma_{\alpha \beta}(\omega) \mathcal{D}_{\alpha \beta}(\omega,\omega) \Big).
\end{multline}

In the following, we will also use the representation of the (time-dependent) Redfield generator in the interaction picture given by the expression:
\begin{multline}
    \tilde{\mathcal{L}}_t^{R} = \lambda^2 \sum_{\omega,\omega'} \sum_{\alpha \beta} \Big( i \tilde{\mathcal{S}}^t_{\alpha \beta}(\omega,\omega') [\cdot , A^\dag_{\alpha}(\omega) A_\beta(\omega')] \\ + \tilde{\gamma}^t_{\alpha \beta}(\omega,\omega') \mathcal{D}_{\alpha \beta}(\omega,\omega') \Big),
\end{multline}
with the following definitions:
\begin{eqnarray}
\tilde \gamma^t_{\alpha \beta}(\omega, \omega') &=& e^{i(\omega-\omega')t} \gamma^t_{\alpha \beta}(\omega, \omega'),  \\
  \tilde{\mathcal{S}}^t_{\alpha \beta}(\omega, \omega') &=&  e^{i(\omega-\omega')t} \mathcal{S}^t_{\alpha \beta}(\omega, \omega').
\end{eqnarray}

\subsubsection{Cumulant equation (refined weak-coupling)}
\label{subsec:cumulant}
Let us now introduce the cumulant equation 
\cite{Alicki89, Rivas_2017,Rivas_2019}. Unlike the previous models of open system, which are in the form of differential equations, the cumulant equation is introduced as the dynamical map (which does not involve the Markovian approximation):
\begin{eqnarray} \label{cumulant_equation_main}
  \tilde \rho(t) = e^{\tilde K^{(2)}_t} \tilde \rho(0),
\end{eqnarray}
where 
\begin{multline}
   \tilde K^{(2)}_t[\rho(0)] = \\ -\lambda^{2}\int_{0}^{t} dt_{1}\int_{0}^{t_1} dt_{2}\Tr_{R}\left(\left[H_{I}(t_1),\left[H_{I}(t_2),\rho_{S}(0)\otimes\rho_{R}\right]\right]\right)   
\end{multline}
is the generator of the map in the interaction picture, such that $\tilde \rho(t) = e^{i H_0 t} \rho(t) e^{-i H_0 t}$. 

It has been showed that cumulant equation is an alternative way to describe non-Markovian dynamics in the weak-coupling regime. Its essential feature is the GKSL form of the $\tilde K^{(2)}_t$ super-operator. In this way, the cumulant equation defines one parameter family of CPTP dynamical maps. This feature of the cumulant equation is its advantage over the Bloch-Redfield equation, for which the fundamental property of completely positive evolution is not satisfied.

Since the Lamb-shift and quasi-steady state corrections are defined according to the generator of the master equations, from the dynamical map \eqref{cumulant_equation_main} we derive the corresponding differential equation, namely
\begin{eqnarray} \label{cumulant_master_equation_interaction}
\frac{d}{dt} \tilde \rho(t) &=& \left[\left(\frac{d}{dt} e^{\tilde K^{(2)}_t}\right) e^{-\tilde K^{(2)}_t} \right] \tilde \rho(t).
\end{eqnarray}
This defines the Liouvillian of the cumulant equation in the interaction picture:
\begin{eqnarray}
\tilde{\mathcal{L}}_t^C &=& \left(\frac{d}{dt} e^{\tilde K^{(2)}_t}\right) e^{-\tilde K^{(2)}_t} \\
&=& \frac{d}{dt} \tilde K^{(2)}_t + \frac{1}{2} [\tilde K^{(2)}_t, \frac{d}{dt}\tilde K^{(2)}_t] + \dots
\end{eqnarray}
Interestingly, we have revealed that the cumulant super-operator $\tilde K^{(2)}_t$ is very much related to the Bloch-Redfield generator (in the interaction picture) by the following relation:
\begin{eqnarray} \label{cumulant_redfield_derivative}
  \frac{d\tilde{K}^{(2)}_t}{dt}= \tilde{\mathcal{L}}^{R}_t.
\end{eqnarray}
Finally, applying \eqref{cumulant_redfield_derivative} and transforming it to the Schr\"{o}dinger picture, we get 
 \begin{align} \label{cumulant_liouvillian}
 &\mathcal{L}^C_t [\rho] = \mathcal{L}_t^R[\rho] \nonumber \\
 &+ \frac{1}{2} \int_0^t ds \ e^{-i H_0 t} \left( \left[\tilde{\mathcal{L}}_s^R , \tilde{\mathcal{L}}_t^R \right] [e^{i H_0 t} \rho e^{-i H_0 t}] \right) e^{i H_0 t} \nonumber \\
 &+ O(\lambda^6).
 \end{align}
We see that up to the second-order the Liouvillian of the cumulant is equal to the Bloch-Redfield one. Nevertheless, cumulant equation provides non-trivial higher order generators that in principle leads to different predictions of quasi-steady state diagonal correction.  

\section{Results} \label{sect:results}
In this section, we derive formulas for different corrections and reveal their mutual relations. In the following, we will provide an explicit expression for all of the corrections \eqref{hamiltonian_correction}. The coefficients of the corrections will be written in the universal integral form given by:
\begin{equation} \label{integral_representation}
    \Upsilon^{(\mathrm{cor})}_{\alpha \beta}(\omega, \omega') = \frac{1}{2\pi} \mathcal{P} \int_{-\infty}^{+\infty} d \Omega \ D_\mathrm{cor}(\omega,\omega', \Omega) \ \gamma_{\alpha \beta}(\Omega),
\end{equation}
where $\gamma_{\alpha \beta}(\Omega)$ is the relaxation rate defined by equation \eqref{pumping_dumping_rates}. $\mathcal{P}$ denotes the principal value integral. We provide the kernel $D_\mathrm{cor}(\omega,\omega', \Omega)$ for a particular corrections and Liouvillians, which are finally summarised in Table \ref{tab:corrections_kernels}. 

\begin{widetext}
\begin{center}
\begin{table}[h!]
    \centering
    \begin{tabular}{|c|c|c|c|c|c|}
    \hline
    & \quad Davies \quad & \quad Davies (non-secular) \quad & \quad Bloch-Redfield \quad & \quad Cumulant \quad \\ \hline
    $D_{\text{LS}}(\omega,\omega,\Omega)$  & \multicolumn{4}{c|}{$\frac{1}{\omega - \Omega}$}  \\ \hline
    $D_{\text{LS}}(\omega,\omega',\Omega)$  &  0 & \multicolumn{3}{c|}{$\frac{1}{2}(\frac{1}{\omega - \Omega}+\frac{1}{\omega' - \Omega}) + \frac{i}{4} (\delta(\Omega-\omega) - \delta(\Omega-\omega'))$}  \\ \hline
    $D_{\text{st}}(\omega,\omega,\Omega)$ & 0 & \multicolumn{3}{c|}{\text{general form not derived here}} \\ \hline
    $D_{\text{st}}(\omega,\omega',\Omega)$ & 0 & $D_{\text{LS}}(\omega,\omega',\Omega)$  & \multicolumn{2}{c|}{$D_{\text{mf}}(\omega,\omega',\Omega)$} \\ \hline
    $D_{\text{mf}}(\omega,\omega,\Omega)$ & \multicolumn{4}{c|}{$\frac{1-e^{\beta(\omega-\Omega)}+\beta(\omega-\Omega)}{\beta (\omega-\Omega)^2}$ } \\  \hline
    $D_{\text{mf}}(\omega,\omega',\Omega)$ & \multicolumn{4}{c|}{$\frac{1}{\omega'-\Omega} - \frac{(\omega-\omega')(e^{\beta(\omega-\Omega)}-1)}{(\omega-\Omega)(\omega'-\Omega)(e^{\beta(\omega-\omega')}-1)}$ }      \\ \hline
    \end{tabular}
    \caption{Explicit kernels $D_\mathrm{cor}(\omega,\omega', \Omega)$ according to the representation \eqref{integral_representation} for all Hamiltonian corrections, i.e., $\mathrm{cor} = \text{LS}$ (Lamb-shift) \eqref{master_equation}, $\mathrm{cor}=\text{mf}$ (mean-force) \eqref{mean_force_correction} and $\mathrm{cor}=\text{st}$ (quasi-steady state) \eqref{steady_state_expansion}. Lamb-shift and quasi-steady state correction have been calculated according to the following Liouvillians: Davies \eqref{Davies_long_time_limit}, (non-secular) Davies \eqref{eq:Davies-non-sec}, Bloch-Redfield \eqref{redfield_liouvillian} and cumulant \eqref{cumulant_liouvillian}.} 
\label{tab:corrections_kernels}
\end{table}
\end{center} 
\end{widetext}
\subsection{Mean-force correction}
We are ready to state our first main result regarding the mean-force correction.
\begin{theorem} \label{mean_force_theorem}  In the gauge \eqref{eq:gauge},
the mean-force correction is given by 
\begin{eqnarray} \label{hamiltonian_correction-mf}
H_{\text{mf}}^{(2)}(t) = \sum_{\omega,\omega'} \sum_{\alpha, \beta} \Upsilon^{(\text{mf})}_{\alpha \beta}(\omega,\omega',t) A^\dag_{\alpha}(\omega) A_\beta(\omega'),
\end{eqnarray}
with the coefficients:
\begin{align} \label{mean_force_without_poles}
\Upsilon^{(\text{mf})}_{\alpha \beta}(\omega, \omega') &= \frac{1}{2 \pi}\int_{-\infty}^{+\infty}  d \Omega \  D_\text{mf}(\omega, \omega', \Omega) \ \gamma_{\alpha \beta}(\Omega), \\  
D_\text{mf}(\omega, \omega', \Omega) &= \frac{1}{\omega'-\Omega} - \frac{(\omega-\omega')(e^{\beta(\omega-\Omega)}-1)}{(\omega-\Omega)(\omega'-\Omega)(e^{\beta(\omega-\omega')}-1)},
\end{align}
or equivalently, in terms of the $\mathcal{S}_{\alpha \beta}(\omega)$ function \eqref{S_principal_value_main}, it takes the form:
\begin{multline} \label{mean_force_coef_S}
\Upsilon_{\alpha \beta}^{(\text{mf})}(\omega,\omega')=
     \frac{1}{e^{\beta \omega}-e^{\beta \omega'}} (e^{\beta \omega}\mathcal{S}_{\alpha \beta}(\omega') - e^{\beta \omega'} \mathcal{S}_{\alpha \beta}(\omega) \\ + e^{\beta (\omega+\omega')} \left( \mathcal{S}_{\beta \alpha}(-\omega') - \mathcal{S}_{\beta \alpha}(-\omega) \right)).
\end{multline}
\end{theorem}
\begin{remark}
What is interesting about the expression \eqref{mean_force_without_poles} is that despite of its form, it does not exhibit any poles, which  is quite  unusual for a  Lamb-shift correction. Thus, the principal value integral is not needed in this case (cf. Eq. \eqref{integral_representation}).
\end{remark}
\begin{proof}
The sketch of the proof is as follows. To solve Eq. \eqref{mean_force_equation_main} we write the exponents from both sides of the equality in terms of the Dyson series, which formally can be expressed via the time-ordering operator $\mathcal{T}$ as:
\begin{align}
    &e^{-\beta H_{\text{mf}}} \equiv e^{-\beta (H_0 + \delta H_\text{mf})} = e^{-\beta H_0} \mathcal{T}  e^{- \int_0^\beta dt \ \delta \hat {H}_\text{mf} (t)},  \\ 
    &e^{-\beta H} \equiv  e^{-\beta (H_0 + H_R + \lambda H_I)} = e^{-\beta (H_0+H_{R})} \mathcal{T} e^{- \lambda \int_0^\beta dt \ \hat H_I(t)}, 
\end{align}
where we put $\delta {H}_\text{mf} = H_\text{mf} - H_0$ and we define an imaginary-time-dependent operators $\hat A(t) = e^{t (H_0+H_R)} A e^{-t (H_0+H_R)}$. Consequently, using gauge \eqref{eq:gauge} the equality \eqref{mean_force_equation_main} can be rewritten in the form:
\begin{equation}
    \Tr_R \left[ \mathcal{T} \left( e^{- \int_0^\beta dt \ \delta \hat H_{\text{mf}} (t)} -  e^{- \lambda \int_0^\beta dt \ \hat H_I(t)} \right) \gamma_R \right] = 0.
\end{equation}
where $\gamma_R$ is the Gibbs state with respect to $H_R$. 
Note that this is exact for arbitrary coupling strength $\lambda$. Then, considering the weak-coupling limit ($\lambda \ll 1$), we expand the above equality and obtain, within the second-order, the following condition for the mean-force correction: 
\begin{eqnarray} 
 \int_0^\beta dt \ \hat H_\text{mf}^{(2)}(t)  = -\int_0^\beta dt \int_0^{t} ds \left \langle \hat H_I(t) \hat H_I(s) \right \rangle_{\gamma_R}
\end{eqnarray}
After substituting the representation of the mean-force Hamiltonian given by Eq. \eqref{hamiltonian_correction} and the interaction term \eqref{interaction_sum} with the definition of jump operators \eqref{jump_operators_definition}, we are able to calculate the coefficients $\Upsilon^{(\text{mf})}_{\alpha \beta}(\omega, \omega')$ (see the detailed proof in Appendix \ref{mean_force_section}).
\end{proof}
Theorem \ref{mean_force_theorem} is the first derivation in the literature of the second-order mean-force Hamiltonian for a general weak-coupling of the form \eqref{total_hamiltonian} (this was independently done in  \cite{trushechkin2022};  the expression for the correction to mean-force {\it Gibbs  state} has been given earlier in \cite{Subasi,miyashita,Trushechkin_2021_MeanForce}). One observes that the coefficients are symmetric in $\omega$'s, i.e., $\Upsilon^{(\text{mf})}_{\alpha \beta}(\omega, \omega') = \Upsilon^{(\text{mf})}_{\alpha \beta}(\omega', \omega)$, which together with $\mathcal{S}^*_{\alpha \beta}(\omega) = \mathcal{S}_{\beta \alpha}(\omega)$, ensures the hermiticity of the Hamiltonian.

According to this expression, let us compare the dynamical Hamiltonian with the mean-force. Using Eq. \eqref{Gamma_real_and_imaginary}, we write down the  Lamb-shift correction in terms of $\gamma_{\alpha \beta}(\omega)$ and $\mathcal{S}_{\alpha \beta}(\omega)$, such that for the Bloch-Redfield ME we get:
\begin{multline}
\Upsilon^{(\text{LS})}_{\alpha \beta}(\omega,\omega') 
= \frac{1}{2} (\mathcal{S}_{\alpha \beta} (\omega) + \mathcal{S}_{\alpha \beta} (\omega')) \nonumber \\ + \frac{i}{4} (\gamma_{\alpha \beta} (\omega) -\gamma_{\alpha \beta} (\omega')),
\end{multline}
whereas for the Davies equation the off-diagonal elements vanish (due to the secular approximation). It is seen that the mean-force correction is different than the  Lamb-shift one; in particular, the  Lamb-shift correction has non-zero anti-hermitian part (in indices $\alpha,\beta$) in contrast to the hermitian coefficients of the mean-force.
Indeed,  
$\gamma_{\alpha\beta}(\omega)$ and $S_{\alpha\beta}(\omega)$  are hermitian matrices, and therefore the second term of Lamb-shift correction is non-hermitian, while there is no such term in mean-force correction. 

\subsection{Quasi-steady state correction} 
In this section, we propose the general formulas for quasi-steady state correction in terms of the coefficients $\Upsilon_{\alpha \beta}^{(\text{st})}(\omega, \omega')$:
\begin{eqnarray} \label{hamiltonian_correction-st}
H_{\text{st}}^{(2)}(t) = \sum_{\omega,\omega'} \sum_{\alpha, \beta} \Upsilon^{(\text{st})}_{\alpha \beta}(\omega,\omega',t) A^\dag_{\alpha}(\omega) A_\beta(\omega').
\end{eqnarray}
First we consider a problem of specifying the off-diagonal contribution to the Hamiltonian $H_{\text{st}}^{(2)}$ (in terms of the coefficients $\Upsilon_{\alpha \beta}^{(\text{st})}(\omega, \omega')$ for $\omega \neq \omega'$), and later the diagonal one (given by $\Upsilon_{\alpha \beta}^{(\text{st})}(\omega, \omega)$).
 
We start with off-diagonal elements, because (as indicated in  \cite{thingna2012}) the first nontrivial correction for diagonal states one can get only in the fourth order. So we start with simpler case of off-diagonal elements.  

\subsubsection{Off-diagonal elements} \label{sect:steady_state_offdiag}
To derive a solution for off-diagonal elements of the Hamiltonian $H_{\text{st}}^{(2)}$, one needs to specify the solution of the second-order equation (see Eq. \eqref{second_order_equation}):  
\begin{eqnarray} \label{second_order_equation_repeated}
    \mathcal{L}_{\infty}^{(0)}[\varrho_2] + \mathcal{L}_{\infty}^{(2)}[\varrho_0] = 0.
\end{eqnarray}
To provide of such a solution we propose the following representation:
\begin{align} \label{second_order_method}
    \mathcal{L}^{(k)}_\infty[\rho_l] = e^{-\beta H_0} \sum_{\alpha, \beta} \sum_{\omega,\omega'} g_{\alpha \beta}^{(kl)}(\omega,\omega')  A_\alpha(\omega)A_\beta(\omega'),
\end{align}
for $l+k=2$, i.e. $k=0,l=2$ and $k=2,l=0$ (see Appendix \ref{appendix:second_order}). Our methodology is based on replacing the equation for operators \eqref{second_order_equation_repeated} by the algebraic equation, which is stated in the following Lemma:
\begin{lemma}
Eq. \eqref{second_order_equation} can be rewritten as:
\begin{multline}
\sum_{\alpha, \beta} \sum_{\omega,\omega'} (g_{\alpha \beta}^{(02)}(\omega,\omega')+g_{\alpha \beta}^{(20)}(\omega,\omega')) A_\alpha(\omega)A_\beta(\omega') = 0,
\end{multline}
which, in particular, is satisfied if:
\begin{eqnarray}
g_{\alpha \beta}^{(02)}(\omega,\omega')+g_{\alpha \beta}^{(20)}(\omega,\omega') = 0 
\end{eqnarray}
for all $\omega,\omega'$ and $\alpha, \beta$.
\end{lemma}

Next, we specify the explicit formula for coefficients of the most common models of master equations. 
\begin{lemma}
For the general form of the second-order Liouvillian  \eqref{master_equation}:
\begin{align}
g_{\alpha\beta}^{(02)}(\omega,\omega') &= -i (\omega+\omega') \Upsilon^{(\text{st})}_{\alpha \beta}(-\omega, \omega') \alpha(\omega+\omega'), \\
g_{\alpha\beta}^{(20)}(\omega,\omega') &= i\Upsilon_{\alpha \beta}^{(\text{LS})}(-\omega,\omega') (1 - e^{-\beta(\omega+\omega')}) \nonumber \\
& + e^{\beta \omega} K_{\beta \alpha}(-\omega',-\omega) \nonumber \\
&- \frac{1}{2} K_{\alpha \beta}(-\omega, \omega') (e^{-\beta (\omega + \omega')} + 1),
\end{align}
where $\alpha(\omega) =  \int_0^\beta dt \ e^{- t \omega}$.
\end{lemma}

Finally, combining together those two Lemmas, we propose the following Theorem:
\begin{theorem} \label{steady_state_theorem}
The second-order equation 
\eqref{second_order_equation} is satisfied by the state $\varrho \propto e^{-\beta(H_0 + \lambda^2 H_\text{st}^{(2)} + \dots)}$ if
\begin{align} 
&(i) \ \  K_{\alpha \beta}(\omega,\omega) = K_{\beta \alpha}(-\omega,-\omega) e^{\beta \omega}, \label{detailed_balance}\\
&(ii) \ \Upsilon_{\alpha \beta}^{(\text{st})}(\omega, \omega')  = \Upsilon_{\alpha \beta}^{(\text{LS})}(\omega,\omega')  + \frac{i}{e^{\beta \omega} - e^{\beta \omega'}} \nonumber \\ 
&\times \left(K_{\beta \alpha}(-\omega,-\omega') e^{\beta (\omega+\omega')} - \frac{1}{2}  K_{\alpha \beta}(\omega',\omega) (e^{\beta \omega} + e^{\beta \omega'}) \right) \label{steady_vs_dynamical_coef}
\end{align}
for $\omega \neq \omega'$.
\end{theorem}
The condition \eqref{detailed_balance} is the so-called detailed-balance relation, which is satisfied for all considered here Liouvillians (i.e., for Davies, Bloch-Redfield and cumulant). Then, let us compare the mean-force correction with the quasi-steady state correction for specific choices of Kossakowski matrix $K_{\alpha \beta}(\omega,\omega')$ and Lamb-shift $\Upsilon^{(\text{LS})}_{\alpha \beta }(\omega,\omega')$. First, let us observe that (see proof in Appendix \ref{appendix:steady_state_mean_force_proof}):
\begin{corollary} \label{corollary_steady_state_mean_force1}
If $K_{\alpha \beta}(\omega,\omega') = \gamma_{\alpha \beta}(\omega,\omega')$ and $\Upsilon^{(\text{LS})}_{\alpha \beta }(\omega,\omega') = \mathcal{S}_{\alpha \beta}(\omega,\omega')$, then for $\omega \neq \omega'$:
\begin{eqnarray}
  \Upsilon_{\alpha \beta}^{(\text{st})}(\omega, \omega') = \Upsilon_{\alpha \beta}^{(\text{mf})}(\omega,\omega').
\end{eqnarray}
\end{corollary}
\begin{remark}
    Note that this relation is gauge independent, as the gauge only affects diagonal corrections.
\end{remark}
 As it follows from Eq. \eqref{redfield_liouvillian} and \eqref{cumulant_liouvillian}) this is the case for the Bloch-Redfield and the truncated cumulant Liouvillian. However, if the secular approximation is applied for the Kossakowski matrix, we get: 
\begin{corollary} \label{corollary_steady_state_mean_force2}
If  $K_{\alpha \beta}(\omega,\omega') = \gamma_{\alpha \beta}(\omega) \delta_{\omega,\omega'}$, then for $\quad \omega \neq \omega'$:
\begin{eqnarray}
    \Upsilon_{\alpha \beta}^{(\text{st})}(\omega, \omega') = \Upsilon_{\alpha \beta}^{(\text{LS})}(\omega,\omega').
\end{eqnarray}
\end{corollary}
This is the case for the so-called non-secular Davies \eqref{eq:Davies-non-sec}. However, commonly the secular approximation is also applied for the Lamb-shift term, such that for the standard Davies equation, the off-diagonal elements vanish, i.e., $\Upsilon_{\alpha \beta}^{(\text{st})}(\omega, \omega') = \Upsilon_{\alpha \beta}^{(\text{LS})}(\omega,\omega') = 0$ for $\omega\not=\omega')$. 

Comparing Corollary \ref{corollary_steady_state_mean_force1} and \ref{corollary_steady_state_mean_force2} we see an interesting interplay between all three corrections. The specific (non-diagonal) form of the Lamb-shift and Kossakowski matrix for the Bloch-Redfield generator (and second-order contribution to cumulant as well) provides a coincidence of the mean-force and quasi-steady state correction for off-diagonal elements. In particular, these non-zero values results in the so-called ``steady-state coherences'' of the equilibrium density matrix $\varrho$ discussed in \cite{Guarnieri,cattaneo2021comment,brumer}. On the contrary, by applying the secular approximation (i.e., making the Kossakowski matrix diagonal in $\omega$'s), the off-diagonal elements of the quasi-steady state correction rather coincide with the Lamb-shift one. 
\begin{widetext}
\begin{center}
    \begin{table}
    \centering
    \begin{tabular}{|c|c|c|c|c|c|}
    \hline
    & \quad Davies \quad & \quad Davies (non-secular) \quad & \quad Bloch-Redfield \quad & \quad Cumulant \quad \\ \hline
    $D_{\text{LS}}(-\omega,-\omega, \Omega) - D_{\text{LS}}(\omega,\omega, \Omega)$  & \multicolumn{4}{c|}{$\frac{2\omega}{\Omega^2-\omega^2}$}  \\ \hline
     $D_{\text{LS}}(\omega,0, \Omega) - D_{\text{LS}}(0,-\omega, \Omega)$  &  0 & \multicolumn{3}{c|}{$-\frac{\omega}{\Omega^2-\omega^2} + \frac{i}{4} (\delta(\Omega-\omega) + \delta(\Omega+\omega) - 2 \delta(\Omega) )$}  \\ \hline
    $D_{\text{st}}(-\omega,-\omega, \Omega) - D_{\text{st}}(\omega,\omega, \Omega)$ & \multicolumn{3}{c|}{0} & 
    $\frac{1-e^{-\beta(\omega+\Omega)}}{\beta(\omega+\Omega)^2} - \frac{1-e^{\beta(\omega-\Omega)}}{\beta(\omega-\Omega)^2}$
    \\ \hline
    $D_{\text{st}}(\omega,0, \Omega) - D_{\text{st}}(0,-\omega, \Omega)$& 0 & $D_{\text{LS}}(\omega,0, \Omega) - D_{\text{LS}}(0,-\omega, \Omega)$  & \multicolumn{2}{c|}{  $D_{\text{mf}}(\omega,0, \Omega) - D_{\text{mf}}(0,-\omega, \Omega)$} \\ \hline
    $D_{\text{mf}}(-\omega,-\omega, \Omega) - D_{\text{mf}}(\omega,\omega, \Omega)$ & \multicolumn{4}{c|}{$\frac{1-e^{-\beta(\omega+\Omega)}}{\beta(\omega+\Omega)^2} - \frac{1-e^{\beta(\omega-\Omega)}}{\beta(\omega-\Omega)^2} + \frac{2\omega}{\Omega^2-\omega^2}$
    } \\  \hline
    $D_{\text{mf}}(\omega,0, \Omega) - D_{\text{mf}}(0,-\omega, \Omega)$& \multicolumn{4}{c|}{$\frac{\omega  e^{-\beta  \Omega } \left(\omega  \left(e^{\beta  \omega }+1\right) \left(e^{\beta  \Omega }-1\right)-\Omega  \left(e^{\beta  \omega }-1\right) \left(e^{\beta  \Omega }+1\right)\right)}{\Omega  \left(e^{\beta  \omega }-1\right) (\Omega -\omega ) (\omega +\Omega )} $}      \\ \hline
    \end{tabular}
    \caption{The relevant kernels $D_\mathrm{cor}(\omega,\omega', \Omega)$ for a two-level system according to the Pauli representation \eqref{qubit_coefficients}. See a detailed description in Table \ref{tab:corrections_kernels}.}
    \label{tab:qubit_table}
\end{table}
\end{center}
\end{widetext}

\subsubsection{Diagonal elements} \label{sect:steady_state_diag}

To provide the diagonal part of the correction $H_\text{st}^{(2)}$, (given by the coefficients $\Upsilon_{\alpha \beta}^{(\text{st})}(\omega,\omega)$), the fourth-order equation must be solved \eqref{fourht_order_equation}, i.e.,
\begin{eqnarray} \label{fourht_order_equation_repeated}
\mathcal{L}_{\infty}^{(0)}[\varrho_4] + \mathcal{L}_{\infty}^{(2)}[\varrho_2] + \mathcal{L}_{\infty}^{(4)}[\varrho_0] = 0.
\end{eqnarray}
Contrary to the solution for off-diagonal terms, it is much more complex problem. For that reason and the clarity of presentation, we simplify the model, such that throughout of this section the interaction Hamiltonian is given by $H_I =  A \otimes R$ (i.e., we replace the sum \eqref{interaction_sum} by a single term, which can be generalized by adding the corresponding indices). Accordingly, we also simplify notation, such that $\Upsilon_{\alpha \beta}^{(\mathrm{cor})} \equiv \Upsilon_\mathrm{cor}$. Then, similarly to the second-order (see Eq. \eqref{second_order_method}), we start with writing the action of the generators in the basis of jump operators:
\begin{equation}\label{fourth_order_representation}
     \mathcal{L}^{(n)}_\infty[\varrho_l] = e^{-\beta H_0} \sum_{\vec \omega} g_{nl}(\vec \omega)  A(\vec \omega) 
\end{equation}
where $l+n=4$, $\vec \omega = (\omega_1,\omega_2,\omega_3,\omega_4)$ and $A(\vec \omega) \equiv A(\omega_1)A(\omega_2)A(\omega_3)A(\omega_4)$. The fourth-order equation \eqref{fourht_order_equation_repeated} takes then the following form:
\begin{equation} 
\sum_{\vec \omega} (g_{04}(\vec \omega) + g_{22}(\vec \omega) + g_{40}(\vec \omega)) A(\vec \omega) = 0,
\end{equation}
from which one can further derive the set of equations (for each $k$):
\begin{equation} \label{fourth_order_equation_with_g}
\sum_{\vec \omega} (g_{22}(\vec \omega) + g_{40}(\vec \omega)) \Pi_k A(\vec \omega) \Pi_k = 0,
\end{equation}
where $\Pi_k = \Pi(\epsilon_k)$ is a projector into subspace with energy $\epsilon_k$ (see Eq. \eqref{jump_operators_definition}). Disappearance of the $g_{04}$ term comes from the specific form of the $\mathcal{L}^{(0)}_\infty = i[H_0, \cdot]$, namely 
\begin{align}
    &\sum_{\vec \omega} g_{04}(\vec \omega) \Pi_k A(\vec \omega) \Pi_k = i e^{\beta \epsilon_k} \Pi_k [H_0, \varrho_4] \Pi_k = 0.
\end{align}
Furthermore, we propose the following Lemma
proved in Appendix \ref{appendix_fourth_order}:
\begin{lemma}
    $\Pi(\epsilon_k) A(\vec \omega) \Pi(\epsilon_k) \neq 0$ only if $\vec \omega$ belongs to the set  of all four-tuples of the form: 
\begin{align}
    \vec \omega=(\epsilon_{m_1}-\epsilon_k,\epsilon_{m_2}-\epsilon_{m_1},\epsilon_{m_3}-\epsilon_{m_2},\epsilon_k-\epsilon_{m_3}).
\end{align}
\end{lemma}
  Finally, since the operators $\Pi_k A(\vec \omega) \Pi_k$ are linearly independent (for different $k$), the following proposition follows:
\begin{proposition} \label{diagonal_elements_summation_proposition}
Eq. \eqref{fourth_order_equation_with_g} is satisfied if and only if for each $k$, such that $\Pi_k A(\vec \omega) \Pi_k \neq 0$, we have:
\begin{align} \label{final_g22_g44_eq}
    \sum_{\vec \omega \in G( \ket{k} \to  \ket{k})}
    \Big( g_{22}(\vec \omega) + g_{40}(\vec \omega)\Big) = 0.
\end{align}
\end{proposition}
Eqs. \eqref{final_g22_g44_eq} provide necessary conditions for the coefficients $\Upsilon_{\text{st}}(\omega,\omega')$ (encoded in the function $g_{22}$) to be a solution of the quasi-steady state. For the general form of the second-order Liouvillian  \eqref{master_equation} the function $g_{22}$ is explicitly given by:
\begin{multline} \label{general_g22}
g_{22}(\vec \omega) = \Upsilon_{\text{st}}(-\omega_3,\omega_4) \alpha(\omega_3+\omega_4) \\ 
\times \left(i \Upsilon_{\text{LS}}(-\omega_1,\omega_2) + \frac{1}{2} K(-\omega_1,\omega_2) \right) \\ 
- \Upsilon_{\text{st}}(-\omega_1,\omega_2) \alpha(\omega_1+\omega_2) \\ 
\times \left(i \Upsilon_{\text{LS}}(-\omega_3,\omega_4) - \frac{1}{2} K(-\omega_3,\omega_4) \right) \\ 
- e^{-\beta \omega_1} \Upsilon_{\text{st}}(-\omega_2,\omega_3) \alpha(\omega_2+\omega_3) K(-\omega_4,\omega_1) 
\end{multline}
where $\alpha(\omega) =  \int_0^\beta dt \ e^{- t \omega}$. 

From Eqs. \eqref{final_g22_g44_eq} one can get a solution for diagonal elements of the quasi-steady state correction $\Upsilon_{\text{st}}(\omega,\omega)$, for a particular form of the function $g_{40}$ (derived from the fourth-order Liouvillian $\mathcal{L}^{(4)}_\infty$). For Bloch-Redfield and Davies master equation, the Liouvillian is only defined up to the second-order, which trivially implies $g_{40}(\vec \omega) = 0$. On the contrary, the cumulant equation provides the fourth-order Liouvillian in the form: 
\begin{equation} \label{fourth_order_liouvillian_cumulant}
\mathcal{L}^{(4)}_t [\rho] 
= \frac{1}{2} \int_0^t ds \ e^{-i H_0 t} \left( \left[\tilde{\mathcal{L}}_s^R , \tilde{\mathcal{L}}_t^R \right] \left(e^{i H_0 t} \rho e^{-i H_0 t}\right) \right) e^{i H_0 t}. 
\end{equation}

The relevant coefficients $g_{40}$ can be derived from the projection of the Eq. \eqref{fourth_order_representation}, i.e.,
\begin{equation}
     \Pi_k \mathcal{L}^{(4)}_\infty[\varrho_0] \Pi_k = e^{-\beta \epsilon_k} \sum_{\vec \omega} g_{40}(\vec \omega)  \Pi_k A(\vec \omega) \Pi_k,  
\end{equation}
where according to \eqref{fourth_order_liouvillian_cumulant}:
\begin{equation}
     \label{eq:L4} \Pi_k\mathcal{L}^{(4)}_\infty[\varrho_0] \Pi_k =  \frac{1}{2} \lim_{t\to \infty} \int_0^t ds \ \Pi_k \left( \left[\tilde{\mathcal{L}}_s^R , \tilde{\mathcal{L}}_t^R \right] [\varrho_0] \right) \Pi_k.
\end{equation}
The explicit form of the function $g_{40}$ for the cumulant is provided in the Appendix, Eq. \eqref{g40_expression_for_cumulant}.

Let us remark, that 
the limits of such expressions as 
$\mathcal{L}^{(n)}_\infty[\varrho_l]$ may not exist (due to oscillating phases).
However after sandwiching 
them with $\Pi_k$ (as in \eqref{eq:L4}, the limit already exists.

Let us note, that for Davies (secular) equation, the diagonal correction vanishes for simple reason - namely the full generator
annihilates Gibbs state according to bare Hamiltonian (i.e., $\mathcal{L}_\infty^{D}[\varrho_0] = 0$). 

In the next subsection, we apply the proposed results for the two-level system, where we derive a diagonal quasi-steady state correction for considered here Liouvillians (i.e., Davies, Bloch-Redfield and cumulant). For Davies equation we confirm the above remark, obtaining vanishing correction. We will also get vanishing correction for Bloch-Redfield, but we do not know whether it holds in general, like in Davies case. 

\subsection{Two-level system} \label{sect:two_level_system}

Let us consider a two-level system with the bare Hamiltonian $H_0 = - \frac{\omega_0}{2} \sigma_z$ coupled to the thermal bath via a single interaction term $H_I = A \otimes R$, where $A = \vec r \cdot \vec \sigma$, $\vec r = (x,y,z)$, and $\vec \sigma = (\sigma_x,\sigma_y,\sigma_z)$ are the Pauli matrices. Then, we define 
\begin{align}
    &\Sigma^{\rm off}_\mathrm{cor}(\omega) \equiv \Upsilon_\mathrm{cor}(\omega,0)-\Upsilon_\mathrm{cor}(0, -\omega), \label{sigma_off} \\
    &\Sigma^{\rm diag}_\mathrm{cor}(\omega) \equiv \Upsilon_\mathrm{cor}(-\omega, -\omega)-\Upsilon_\mathrm{cor}(\omega, \omega), \label{sigma_diag}
\end{align}
such that the correction $H_\mathrm{cor}^{(2)}$ in the Pauli basis is given by:
\begin{align} \label{qubit_coefficients}
    \Tr[H_\mathrm{cor}^{(2)} \sigma_x] &= xz \ \Sigma^{\rm off}_\mathrm{cor} (\omega_0), \\ 
    \Tr[H_\mathrm{cor}^{(2)} \sigma_y] &= yz \ \Sigma^{\rm off}_\mathrm{cor} (\omega_0), \\ 
    \Tr[H_\mathrm{cor}^{(2)} \sigma_z] &= \frac{x^2 + y^2}{2} \ \Sigma^{\rm diag}_\mathrm{cor}(\omega_0). 
\end{align}
Without loss of generality, we have introduced a gauge $\Tr H_\mathrm{cor}^{(2)} =0$ since, as we discussed in the Section \ref{sect:hamiltonian_correction}, all corrections are defined up to an arbitrary constant. 

For the Liouvillian \eqref{master_equation}, the off-diagonal term $\Sigma^{\rm off}_\mathrm{cor}(\omega)$ can be calculated from Eq. \eqref{steady_vs_dynamical_coef}, which for particular choices of master equations is given by Corollary \ref{corollary_steady_state_mean_force1} and \ref{corollary_steady_state_mean_force2}. Furthermore, for diagonal elements we provide the following Proposition: 
\begin{proposition} \label{diagonal_theorem}
For the Liouvillian \eqref{master_equation}, obeying the detailed balance condition: $K(\omega,\omega) = e^{\beta \omega} K(-\omega,-\omega)$, diagonal elements of the two-level system quasi-steady state correction are given by: 
\begin{enumerate} [label=(\roman*)]
    \item for $g_{40}(\vec \omega) = 0$ (Davies and Bloch-Redfield): 
    \begin{eqnarray}
    \Sigma^{\rm diag}_{\rm QSS}(\omega) = 0
    \end{eqnarray}
    \item for $g_{40}(\vec \omega)$ of the cumulant Liouvillian \eqref{fourth_order_liouvillian_cumulant}:  
    \begin{align} \label{steady_state_cumulant_solution_theorem}
     \Sigma^{\rm diag}_{\rm QSS}(\omega) =  \Sigma^{\rm diag}_{\rm MF}(\omega) + \mathcal{S}(-\omega) - \mathcal{S}(\omega)
    \end{align}
\end{enumerate}
\end{proposition}
For more details refer to Appendix \ref{steady_state_section}.
We see that for the two-level system the absence of higher-order Liouvillians ($g_{40} = 0$) results in no correction to the diagonal of the quasi-steady state. On the contrary,  Liouvillian of the cumulant
truncated to the fourth order provides a non-trivial correction given by Eq. \eqref{steady_state_cumulant_solution_theorem}. Since we expect that proper thermalization should result in mean-force Hamiltonian, one can interpret $\mathcal{S}(-\omega) - \mathcal{S}(\omega)$ as the error. In the next section, we analyze numerically this discrepancy for a particular spin-boson model. Notice also that $\mathcal{S}(\omega) = \Upsilon_\text{LS}(\omega,\omega)$ (for cumulant, Bloch-Redfield and Davies). 

We summarize all of the explicit formulas for the two-level system in Table \ref{tab:qubit_table}.

\begin{figure*}
    \centering
    \includegraphics[width=0.49\linewidth]{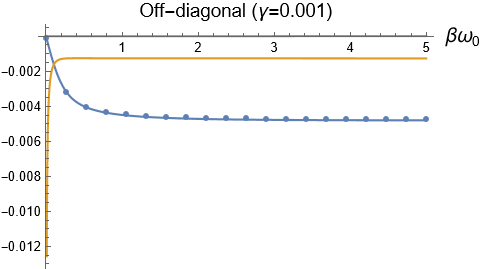}
    \includegraphics[width=0.49\linewidth]{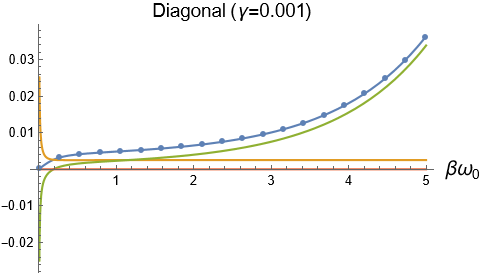}
    \includegraphics[width=0.49\linewidth]{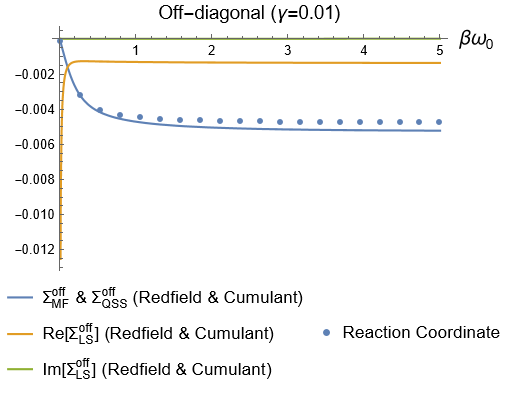}
    \includegraphics[width=0.49\linewidth]{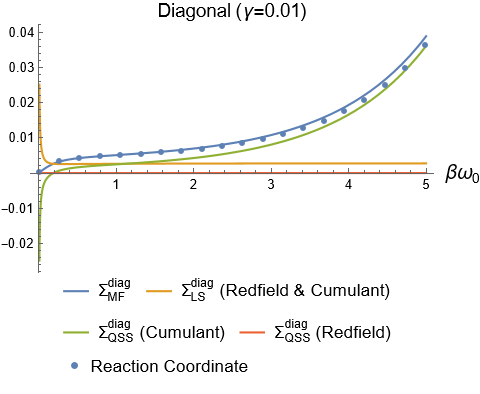}
    \caption{In the figure we present the coefficients $\Sigma^{\rm off}_\mathrm{cor}(\omega)$ and $\Sigma^{\rm diag}_\mathrm{cor}(\omega)$ that compose different Hamiltonian corrections  (Eqs. \eqref{sigma_off} and \eqref{sigma_diag}), computed for the spin-boson model (Eq. \eqref{spin_boson_hamiltonian}) for the spectral density Eq. \eqref{spectral_density_underdumped} and for $\omega_{rc} = 20 \omega_0$. As we proved (see Corollary \ref{corollary_steady_state_mean_force1}), the off-diagonal correction of the mean-force coincide with the quasi-steady state correction of the Bloch-Redfield and cumulant equation. However, for the diagonal case the cumulant quasi-steady state correction is much closer to the mean-force than the Bloch-Redfield one, which is zero for all values of $\beta \omega_0$. The discrepancy between cumulant and mean-force for the diagonal correction is precisely given by the Lamb-shift correction (see Eq. \eqref{steady_state_cumulant_solution_theorem}), which for the particular spectral density is small.    
    We also calculate the mean-force correction based on the reaction coordinate Gibbs state (Eqs. \eqref{mean_force_state_RC} and \eqref{second_order_correction_from_mean_force_state}). 
    From Eq. \eqref{mean_force_state_RC} the reaction coordinate Gibbs state should converge to the exact mean-force correction when $\gamma \to 0$. As expected, we observe a very good agreement for $\gamma = 0.001$, whereas some discrepancy is present for $\gamma = 0.01$. This validates our analytical formulas for the mean-force correction and shows the limits of the reaction coordinate method.}   
    \label{fig:corrections}
\end{figure*}

\section{Spin-boson model: numerical study}
Let us now apply our results to the well-known spin-boson model. We consider a qubit coupled to a bosonic bath with the Hamiltonian:
\begin{equation} \label{spin_boson_hamiltonian}
H= -\frac{\omega_0}{2} \sigma_z +\sum_k \Omega_k a^{\dagger}_k a_k + \vec r \cdot \vec \sigma \sum_{k} \lambda_k (a_k +a_{k}^{\dagger}) 
\end{equation}
where $a_k$ and $a_k^\dag$ are the bosonic annihilation and creation operators with a spectral density:
\begin{eqnarray}
J(\omega) = \sum_k \lambda_k^2 \delta(\omega_k - \omega).
\end{eqnarray}
By extending a domain of the spectral density, such that $J(-\omega) = -J(\omega)$, one can prove that:
\begin{eqnarray}
    \gamma(\Omega) = \frac{2 \pi J(\Omega)}{1-e^{-\beta \Omega}}.
\end{eqnarray}
In accordance, for a spin-boson model \eqref{spin_boson_hamiltonian}, all of the corrections can be calculated as the integral over spectral density, namely
\begin{equation} \label{integral_representation_with_spectral}
    \Upsilon^{(\mathrm{cor})}_{\alpha \beta}(\omega, \omega') = \mathcal{P} \int_{-\infty}^{+\infty} d \Omega \ \frac{D_\mathrm{cor}(\omega,\omega', \Omega)}{1-e^{-\beta \Omega}}  \ J(\Omega)
\end{equation}
(cf. the general formula given by Eq. \eqref{integral_representation}). 

\subsection{Reaction coordinate}
In this section, we further analyze the spin-boson model by using the \textit{reaction coordinate} (rc) method \cite{lambert,Nazir2018}. In particular, via the $rc$ mapping we are able to extract the second-order correction to the mean-force Hamiltonian. For this reason, we can numerically verify our analytical formula with predictions that comes from $rc$. Additionally, since the ``extraction'' of the second-order into a single $rc$ mode depends highly on the form of the spectral density, we will also discuss a validity regime of the $rc$ method. 

Let us concentrate on spectral density given by the form \cite{Nazir2018}:
\begin{eqnarray} \label{spectral_density_underdumped}
    J(\Omega) =  \lambda^2 \frac{4 \gamma \Omega \omega_{rc}^2}{(\omega_{rc}^2 - \Omega^2)^2 + (2 \pi \gamma \Omega \omega_{rc})^2}. 
\end{eqnarray}
The parameter $\gamma$ regulates the width of the spectral density $J(\omega)$, which is centered at the frequency $\omega_{rc}$. This indicates that in the case of narrowly-peaked spectral densities, i.e., when $\gamma \ll 1$, the mode with frequency $\omega_{rc}$ predominates in the environment. Consequently,  that specific mode, i.e., the so-called `reaction coordinate' with bosonic operators $b, b^\dag$, is incorporated into the effective Hamiltonian (after the Bogoliubov transformation):
\begin{eqnarray}
    H' = -\frac{\omega_0}{2} \sigma_z + \omega_{rc} b^\dag b +  \lambda \vec r \cdot \vec \sigma (b + b^\dag),
\end{eqnarray}
that describes the composite `spin-reaction coordinate' system \cite{Nazir2018,lambert,latune}. After transformation the system is coupled to the effective environment with Ohmic spectral density $J_{rc}(\Omega) = \gamma \Omega e^{-\frac{\Omega}{\Lambda}}$ (in the limit $\Lambda \to \infty$). In other words, after the reaction coordinate  transformation, the system described by the Hamiltonian $H'$ is coupled to the environment via constant $\gamma$. Indeed, by decreasing $\gamma$, the reaction coordinate dominates more and more, which results in smaller and smaller effective coupling with the rest of the modes.  

Then, according to the definition \eqref{mean_force_equation_main}, the mean-force state is given by:
\begin{eqnarray} \label{mean_force_state_RC}
    \rho_{\text{mf}} \propto \Tr_R[e^{-\beta H}] \propto \Tr_{rc}[e^{-\beta H'}] + O(\gamma).
\end{eqnarray}
Here we have traced over
"reaction coordinate" system, called $rc$, represented by the mode $b$. 
Consequently, for sufficiently small $\gamma$, one can approximate the mean-force state via the effective Hamiltonian $H'$.  

The resulting 
state depends on all orders of $\lambda$, and we want to extract just second order correction. 
In general, for a state of the form  
\begin{equation}
    \rho_{\lambda} = \frac{e^{-\beta H}}{\mathcal{Z}}
\end{equation}
where 
\begin{align}
    & H = H_0 + \lambda^2 H^{(2)} + \lambda^4 H^{(4)}+ \dots 
\end{align}
we can extract second order correction
via the formula: 
\begin{equation} \label{second_order_correction_from_mean_force_state}
     H^{(2)} = \lim_{\lambda \to 0} \frac{1}{\lambda^{2}} \Big[ \frac{1}{\beta} \Big( \frac{1}{d} \Tr[\log(\rho_{\lambda})] -\log(\rho_{\lambda})\Big) -  H_{0} \Big],
\end{equation}
where $d$ is the dimension of the system Hilbert space and we have used gauge $\Tr[H] = 0$.

\subsection{Numerical simulation}
We numerically computed the relevant coefficients for the general spin-boson model given by Hamiltonian \eqref{spin_boson_hamiltonian}:
\begin{eqnarray}
    \Upsilon_\mathrm{cor}(\omega_0,0)-\Upsilon_\mathrm{cor}(0, -\omega_0) \quad \text{(off-diagonal)}, \\
    \Upsilon_\mathrm{cor}(-\omega_0, -\omega_0)-\Upsilon_\mathrm{cor}(\omega_0, \omega_0) \quad \text{(diagonal)}. 
\end{eqnarray}
Corrections have been computed according to expression \eqref{integral_representation_with_spectral} for the spectral density given by Eq. \eqref{spectral_density_underdumped} with the specific form of the kernels summarized in Table \ref{tab:qubit_table}. 

Additionally, the mean-fore correction has been computed independently based on reaction coordinate method, expressed in Eq. \eqref{mean_force_state_RC} and \eqref{second_order_correction_from_mean_force_state}. Since the mean-force state, calculated from the effective Hamiltonian $H'$, is approximated up to terms of the order $O(\gamma)$, we provide a simulation for two regimes: $\gamma = 0.001$ and $\gamma = 0.01$. According to Eq. \eqref{mean_force_state_RC}, the reaction coordinate Gibbs state should converge to the exact mean-force correction when $\gamma \to 0$, which is confirmed by the presented simulation in Fig. \ref{fig:corrections}.

\section{Conclusions}
We have presented several results of independent significance. Firstly, we derived the general formula for the (second-order) mean-force Hamiltonian, along with the quasi-steady state Hamiltonian for the canonical class of Liouvillians (i.e., with the leading order expressed in GKLS form for a general Kossakowski matrix). Subsequently, we discussed the relationships among different corrections (mean-force, quasi-steady state, and Lamb-shift). Finally, we applied our findings to the most well-known descriptions of open systems, specifically examining the extent to which the Bloch-Redfield, Davies, and cumulant equations satisfy the condition of converging to equilibrium. We emphasize that the corrections to Hamiltonians derived in this paper hold greater significance than corrections to states because they can be independently applied to various renormalization procedures (cf. \cite{WinczewskiAlicki}).

Our findings align with prior observations \cite{Trushechkin_2021_MeanForce,miyashita,fleming}, affirming that the leading-order term of the Liouvillian should adopt the Bloch-Redfield form to yield the correct off-diagonal quasi-steady state correction. The cumulant 
equation (if written in the form of master equation) meets this condition but it goes further by providing non-trivial higher-order generators (by definition absent in the Bloch-Redfield equation). Our analysis demonstrates that this feature offers an accurate approximation of the diagonal elements of the correction, coinciding pretty well with those of the mean-force, in contrast to the Bloch-Redfield (or Davies) master equation.

This may appear contradictory to the findings in \cite{Rivas_2017}, where it is asserted that the dynamics governed by the cumulant equation converges to the dynamics governed by the Davies equation in the long time limit, implying that the stationary state should be the Gibbs state of the bare Hamiltonian. However, our result specifically addresses the second-order correction, and therefore, mathematically, it does not contradict the aforementioned conclusion. It is noteworthy that by collecting of all orders in the cumulant Liouvillian, it ensures the positivity of the dynamics. This suggests a nuanced trade-off between positivity and thermalization towards the mean-force Gibbs state in the cumulant description. Truncating the Liouvillian to a specific order approximates the mean-force state but compromises positivity, and vice versa.

\section*{Acknowledgement}

We thank Antonio Mandarino for drawing our attention to Ref. \cite{Guarnieri}. We also acknowledge discussions with participants of Open Systems Seminars at ICTQT. This work is supported by Foundation for Polish Science (FNP), IRAP project ICTQT, contract no.
2018/MAB/5, co-financed by EU Smart Growth Operational Programme. 
MH is also partially supported by Polish National Science Centre grant OPUS-21 (grant No: 2021/41/B/ST2/03207).
MW acknowledges grant PRELUDIUM-20 (grant number: 2021/41/N/ST2/01349) from the
National Science Center.

\bibliography{bib}

\begin{thebibliography}{33}%
\makeatletter
\providecommand \@ifxundefined [1]{%
 \@ifx{#1\undefined}
}%
\providecommand \@ifnum [1]{%
 \ifnum #1\expandafter \@firstoftwo
 \else \expandafter \@secondoftwo
 \fi
}%
\providecommand \@ifx [1]{%
 \ifx #1\expandafter \@firstoftwo
 \else \expandafter \@secondoftwo
 \fi
}%
\providecommand \natexlab [1]{#1}%
\providecommand \enquote  [1]{``#1''}%
\providecommand \bibnamefont  [1]{#1}%
\providecommand \bibfnamefont [1]{#1}%
\providecommand \citenamefont [1]{#1}%
\providecommand \href@noop [0]{\@secondoftwo}%
\providecommand \href [0]{\begingroup \@sanitize@url \@href}%
\providecommand \@href[1]{\@@startlink{#1}\@@href}%
\providecommand \@@href[1]{\endgroup#1\@@endlink}%
\providecommand \@sanitize@url [0]{\catcode `\\12\catcode `\$12\catcode
  `\&12\catcode `\#12\catcode `\^12\catcode `\_12\catcode `\%12\relax}%
\providecommand \@@startlink[1]{}%
\providecommand \@@endlink[0]{}%
\providecommand \url  [0]{\begingroup\@sanitize@url \@url }%
\providecommand \@url [1]{\endgroup\@href {#1}{\urlprefix }}%
\providecommand \urlprefix  [0]{URL }%
\providecommand \Eprint [0]{\href }%
\providecommand \doibase [0]{http://dx.doi.org/}%
\providecommand \selectlanguage [0]{\@gobble}%
\providecommand \bibinfo  [0]{\@secondoftwo}%
\providecommand \bibfield  [0]{\@secondoftwo}%
\providecommand \translation [1]{[#1]}%
\providecommand \BibitemOpen [0]{}%
\providecommand \bibitemStop [0]{}%
\providecommand \bibitemNoStop [0]{.\EOS\space}%
\providecommand \EOS [0]{\spacefactor3000\relax}%
\providecommand \BibitemShut  [1]{\csname bibitem#1\endcsname}%
\let\auto@bib@innerbib\@empty
\bibitem [{\citenamefont {Alicki}\ and\ \citenamefont
  {Lendi}(1987)}]{AlickiLendi1987}%
  \BibitemOpen
  \bibfield  {author} {\bibinfo {author} {\bibfnamefont {R.}~\bibnamefont
  {Alicki}}\ and\ \bibinfo {author} {\bibfnamefont {K.}~\bibnamefont {Lendi}},\
  }\href {https://books.google.pl/books?id=F7PvAAAAMAAJ} {\emph {\bibinfo
  {title} {Quantum Dynamical Semigroups and Applications}}},\ Lecture notes in
  physics\ (\bibinfo  {publisher} {Springer-Verlag},\ \bibinfo {year}
  {1987})\BibitemShut {NoStop}%
\bibitem [{\citenamefont {Breuer}\ and\ \citenamefont
  {Petruccione}(2006)}]{Breuer+2006}%
  \BibitemOpen
  \bibfield  {author} {\bibinfo {author} {\bibfnamefont {H.-P}\ \bibnamefont
  {Breuer}}\ and\ \bibinfo {author} {\bibfnamefont {Francesco}\ \bibnamefont
  {Petruccione}},\ }\href {\doibase 10.1093/acprof:oso/9780199213900.001.0001}
  {\emph {\bibinfo {title} {The Theory of Open Quantum Systems}}}\ (\bibinfo
  {year} {2006})\BibitemShut {NoStop}%
\bibitem [{\citenamefont {Trushechkin}\ \emph {et~al.}(2021)\citenamefont
  {Trushechkin}, \citenamefont {Merkli}, \citenamefont {Cresser},\ and\
  \citenamefont {Anders}}]{Trushechkin_2021_MeanForce}%
  \BibitemOpen
  \bibfield  {author} {\bibinfo {author} {\bibfnamefont {A.~S.}\ \bibnamefont
  {Trushechkin}}, \bibinfo {author} {\bibfnamefont {M.}~\bibnamefont {Merkli}},
  \bibinfo {author} {\bibfnamefont {J.~D.}\ \bibnamefont {Cresser}}, \ and\
  \bibinfo {author} {\bibfnamefont {J.}~\bibnamefont {Anders}},\ }\bibfield
  {title} {\enquote {\bibinfo {title} {Open quantum system dynamics and the
  mean force gibbs state},}\ }\href@noop {} {\  (\bibinfo {year} {2021})},\
  \Eprint {http://arxiv.org/abs/2110.01671} {arXiv:2110.01671 [quant-ph]}
  \BibitemShut {NoStop}%
\bibitem [{\citenamefont {Roux}\ and\ \citenamefont
  {Simonson}(1999)}]{Roux1999}%
  \BibitemOpen
  \bibfield  {author} {\bibinfo {author} {\bibfnamefont {Benoit}\ \bibnamefont
  {Roux}}\ and\ \bibinfo {author} {\bibfnamefont {Thomas}\ \bibnamefont
  {Simonson}},\ }\bibfield  {title} {\enquote {\bibinfo {title} {Implicit
  solvent models},}\ }\href {\doibase
  https://doi.org/10.1016/S0301-4622(98)00226-9} {\bibfield  {journal}
  {\bibinfo  {journal} {Biophysical Chemistry}\ }\textbf {\bibinfo {volume}
  {78}},\ \bibinfo {pages} {1--20} (\bibinfo {year} {1999})}\BibitemShut
  {NoStop}%
\bibitem [{\citenamefont {Fleming}\ and\ \citenamefont
  {Cummings}(2011)}]{fleming}%
  \BibitemOpen
  \bibfield  {author} {\bibinfo {author} {\bibfnamefont {C.~H.}\ \bibnamefont
  {Fleming}}\ and\ \bibinfo {author} {\bibfnamefont {N.~I.}\ \bibnamefont
  {Cummings}},\ }\bibfield  {title} {\enquote {\bibinfo {title} {Accuracy of
  perturbative master equations},}\ }\href {\doibase
  10.1103/PhysRevE.83.031117} {\bibfield  {journal} {\bibinfo  {journal} {Phys.
  Rev. E}\ }\textbf {\bibinfo {volume} {83}},\ \bibinfo {pages} {031117}
  (\bibinfo {year} {2011})}\BibitemShut {NoStop}%
\bibitem [{\citenamefont {Thingna}\ \emph {et~al.}(2012)\citenamefont
  {Thingna}, \citenamefont {Wang},\ and\ \citenamefont
  {Hänggi}}]{thingna2012}%
  \BibitemOpen
  \bibfield  {author} {\bibinfo {author} {\bibfnamefont {Juzar}\ \bibnamefont
  {Thingna}}, \bibinfo {author} {\bibfnamefont {Jian-Sheng}\ \bibnamefont
  {Wang}}, \ and\ \bibinfo {author} {\bibfnamefont {Peter}\ \bibnamefont
  {Hänggi}},\ }\bibfield  {title} {\enquote {\bibinfo {title} {Generalized
  gibbs state with modified redfield solution: Exact agreement up to second
  order},}\ }\href {https://doi.org/10.1063/1.4718706} {\bibfield  {journal}
  {\bibinfo  {journal} {The Journal of Chemical Physics}\ }\textbf {\bibinfo
  {volume} {136}},\ \bibinfo {pages} {194110} (\bibinfo {year}
  {2012})}\BibitemShut {NoStop}%
\bibitem [{\citenamefont {Cresser}\ and\ \citenamefont
  {Anders}(2021)}]{anders-oscillator}%
  \BibitemOpen
  \bibfield  {author} {\bibinfo {author} {\bibfnamefont {J.~D.}\ \bibnamefont
  {Cresser}}\ and\ \bibinfo {author} {\bibfnamefont {J.}~\bibnamefont
  {Anders}},\ }\bibfield  {title} {\enquote {\bibinfo {title} {Weak and
  ultrastrong coupling limits of the quantum mean force gibbs state},}\ }\href
  {\doibase 10.1103/PhysRevLett.127.250601} {\bibfield  {journal} {\bibinfo
  {journal} {Phys. Rev. Lett.}\ }\textbf {\bibinfo {volume} {127}},\ \bibinfo
  {pages} {250601} (\bibinfo {year} {2021})}\BibitemShut {NoStop}%
\bibitem [{\citenamefont {Purkayastha}\ \emph {et~al.}(2020)\citenamefont
  {Purkayastha}, \citenamefont {Guarnieri}, \citenamefont {Mitchison},
  \citenamefont {Filip},\ and\ \citenamefont {Goold}}]{Guarnieri}%
  \BibitemOpen
  \bibfield  {author} {\bibinfo {author} {\bibfnamefont {Archak}\ \bibnamefont
  {Purkayastha}}, \bibinfo {author} {\bibfnamefont {Giacomo}\ \bibnamefont
  {Guarnieri}}, \bibinfo {author} {\bibfnamefont {Mark~T}\ \bibnamefont
  {Mitchison}}, \bibinfo {author} {\bibfnamefont {Radim}\ \bibnamefont
  {Filip}}, \ and\ \bibinfo {author} {\bibfnamefont {John}\ \bibnamefont
  {Goold}},\ }\bibfield  {title} {\enquote {\bibinfo {title} {Tunable
  phonon-induced steady-state coherence in a double-quantum-dot charge
  qubit},}\ }\href {https://doi.org/10.1038/s41534-020-0256-6} {\bibfield
  {journal} {\bibinfo  {journal} {Npj Quantum Inf.}\ }\textbf {\bibinfo
  {volume} {6}} (\bibinfo {year} {2020})}\BibitemShut {NoStop}%
\bibitem [{\citenamefont {Mori}\ and\ \citenamefont
  {Miyashita}(2008)}]{miyashita}%
  \BibitemOpen
  \bibfield  {author} {\bibinfo {author} {\bibfnamefont {Takashi}\ \bibnamefont
  {Mori}}\ and\ \bibinfo {author} {\bibfnamefont {Seiji}\ \bibnamefont
  {Miyashita}},\ }\bibfield  {title} {\enquote {\bibinfo {title} {Dynamics of
  the density matrix in contact with a thermal bath and the quantum master
  equation},}\ }\href {\doibase 10.1143/JPSJ.77.124005} {\bibfield  {journal}
  {\bibinfo  {journal} {Journal of the Physical Society of Japan}\ }\textbf
  {\bibinfo {volume} {77}},\ \bibinfo {pages} {124005} (\bibinfo {year}
  {2008})}\BibitemShut {NoStop}%
\bibitem [{\citenamefont {Suba\ifmmode \mbox{\c{s}}\else \c{s}\fi{}\ifmmode
  \imath \else~\i \fi{}}\ \emph {et~al.}(2012)\citenamefont {Suba\ifmmode
  \mbox{\c{s}}\else \c{s}\fi{}\ifmmode \imath \else~\i \fi{}}, \citenamefont
  {Fleming}, \citenamefont {Taylor},\ and\ \citenamefont {Hu}}]{Subasi}%
  \BibitemOpen
  \bibfield  {author} {\bibinfo {author} {\bibfnamefont {Y.}~\bibnamefont
  {Suba\ifmmode \mbox{\c{s}}\else \c{s}\fi{}\ifmmode \imath \else~\i \fi{}}},
  \bibinfo {author} {\bibfnamefont {C.~H.}\ \bibnamefont {Fleming}}, \bibinfo
  {author} {\bibfnamefont {J.~M.}\ \bibnamefont {Taylor}}, \ and\ \bibinfo
  {author} {\bibfnamefont {B.~L.}\ \bibnamefont {Hu}},\ }\bibfield  {title}
  {\enquote {\bibinfo {title} {Equilibrium states of open quantum systems in
  the strong coupling regime},}\ }\href {\doibase 10.1103/PhysRevE.86.061132}
  {\bibfield  {journal} {\bibinfo  {journal} {Phys. Rev. E}\ }\textbf {\bibinfo
  {volume} {86}},\ \bibinfo {pages} {061132} (\bibinfo {year}
  {2012})}\BibitemShut {NoStop}%
\bibitem [{\citenamefont {Tupkary}\ \emph {et~al.}(2022)\citenamefont
  {Tupkary}, \citenamefont {Dhar}, \citenamefont {Kulkarni},\ and\
  \citenamefont {Purkayastha}}]{tupkary2022fundamental}%
  \BibitemOpen
  \bibfield  {author} {\bibinfo {author} {\bibfnamefont {Devashish}\
  \bibnamefont {Tupkary}}, \bibinfo {author} {\bibfnamefont {Abhishek}\
  \bibnamefont {Dhar}}, \bibinfo {author} {\bibfnamefont {Manas}\ \bibnamefont
  {Kulkarni}}, \ and\ \bibinfo {author} {\bibfnamefont {Archak}\ \bibnamefont
  {Purkayastha}},\ }\bibfield  {title} {\enquote {\bibinfo {title} {Fundamental
  limitations in lindblad descriptions of systems weakly coupled to baths},}\
  }\href@noop {} {\  (\bibinfo {year} {2022})},\ \Eprint
  {http://arxiv.org/abs/2105.12091} {arXiv:2105.12091 [quant-ph]} \BibitemShut
  {NoStop}%
\bibitem [{\citenamefont {Geva}\ \emph {et~al.}(2000)\citenamefont {Geva},
  \citenamefont {Rosenman},\ and\ \citenamefont {Tannor}}]{Geva2000}%
  \BibitemOpen
  \bibfield  {author} {\bibinfo {author} {\bibfnamefont {Eitan}\ \bibnamefont
  {Geva}}, \bibinfo {author} {\bibfnamefont {Efrat}\ \bibnamefont {Rosenman}},
  \ and\ \bibinfo {author} {\bibfnamefont {David}\ \bibnamefont {Tannor}},\
  }\bibfield  {title} {\enquote {\bibinfo {title} {On the second-order
  corrections to the quantum canonical equilibrium density matrix},}\ }\href
  {\doibase 10.1063/1.481928} {\bibfield  {journal} {\bibinfo  {journal} {The
  Journal of Chemical Physics}\ }\textbf {\bibinfo {volume} {113}},\ \bibinfo
  {pages} {1380–1390} (\bibinfo {year} {2000})}\BibitemShut {NoStop}%
\bibitem [{\citenamefont {Latune}(2022{\natexlab{a}})}]{latune}%
  \BibitemOpen
  \bibfield  {author} {\bibinfo {author} {\bibfnamefont {Camille~L.}\
  \bibnamefont {Latune}},\ }\bibfield  {title} {\enquote {\bibinfo {title}
  {Steady state in strong system-bath coupling regime: Reaction coordinate
  versus perturbative expansion},}\ }\href {\doibase
  10.1103/PhysRevE.105.024126} {\bibfield  {journal} {\bibinfo  {journal}
  {Phys. Rev. E}\ }\textbf {\bibinfo {volume} {105}},\ \bibinfo {pages}
  {024126} (\bibinfo {year} {2022}{\natexlab{a}})}\BibitemShut {NoStop}%
\bibitem [{\citenamefont {Latune}(2022{\natexlab{b}})}]{Latune2022}%
  \BibitemOpen
  \bibfield  {author} {\bibinfo {author} {\bibfnamefont {Camille~Lombard}\
  \bibnamefont {Latune}},\ }\bibfield  {title} {\enquote {\bibinfo {title}
  {Steady state in ultrastrong coupling regime: Expansion and first orders},}\
  }\href {\doibase 10.12743/quanta.v11i1.167} {\bibfield  {journal} {\bibinfo
  {journal} {Quanta}\ }\textbf {\bibinfo {volume} {11}},\ \bibinfo {pages}
  {53–71} (\bibinfo {year} {2022}{\natexlab{b}})}\BibitemShut {NoStop}%
\bibitem [{\citenamefont {Román-Roche}\ and\ \citenamefont
  {Zueco}(2022)}]{effective_qed}%
  \BibitemOpen
  \bibfield  {author} {\bibinfo {author} {\bibfnamefont {Juan}\ \bibnamefont
  {Román-Roche}}\ and\ \bibinfo {author} {\bibfnamefont {David}\ \bibnamefont
  {Zueco}},\ }\bibfield  {title} {\enquote {\bibinfo {title} {{Effective theory
  for matter in non-perturbative cavity QED}},}\ }\href {\doibase
  10.21468/SciPostPhysLectNotes.50} {\bibfield  {journal} {\bibinfo  {journal}
  {SciPost Phys. Lect. Notes}\ ,\ \bibinfo {pages} {50}} (\bibinfo {year}
  {2022})}\BibitemShut {NoStop}%
\bibitem [{\citenamefont {Hepp}\ and\ \citenamefont {Lieb}(1973)}]{hep_lieb}%
  \BibitemOpen
  \bibfield  {author} {\bibinfo {author} {\bibfnamefont {Klaus}\ \bibnamefont
  {Hepp}}\ and\ \bibinfo {author} {\bibfnamefont {Elliott~H.}\ \bibnamefont
  {Lieb}},\ }\bibfield  {title} {\enquote {\bibinfo {title} {Equilibrium
  statistical mechanics of matter interacting with the quantized radiation
  field},}\ }\href {\doibase 10.1103/PhysRevA.8.2517} {\bibfield  {journal}
  {\bibinfo  {journal} {Phys. Rev. A}\ }\textbf {\bibinfo {volume} {8}},\
  \bibinfo {pages} {2517--2525} (\bibinfo {year} {1973})}\BibitemShut {NoStop}%
\bibitem [{\citenamefont {Campisi}\ \emph
  {et~al.}(2009{\natexlab{a}})\citenamefont {Campisi}, \citenamefont
  {Talkner},\ and\ \citenamefont {Hänggi}}]{Campisi_2009}%
  \BibitemOpen
  \bibfield  {author} {\bibinfo {author} {\bibfnamefont {Michele}\ \bibnamefont
  {Campisi}}, \bibinfo {author} {\bibfnamefont {Peter}\ \bibnamefont
  {Talkner}}, \ and\ \bibinfo {author} {\bibfnamefont {Peter}\ \bibnamefont
  {Hänggi}},\ }\bibfield  {title} {\enquote {\bibinfo {title} {Thermodynamics
  and fluctuation theorems for a strongly coupled open quantum system: an
  exactly solvable case},}\ }\href {\doibase 10.1088/1751-8113/42/39/392002}
  {\bibfield  {journal} {\bibinfo  {journal} {Journal of Physics A:
  Mathematical and Theoretical}\ }\textbf {\bibinfo {volume} {42}},\ \bibinfo
  {pages} {392002} (\bibinfo {year} {2009}{\natexlab{a}})}\BibitemShut
  {NoStop}%
\bibitem [{\citenamefont {Campisi}\ \emph
  {et~al.}(2009{\natexlab{b}})\citenamefont {Campisi}, \citenamefont
  {Talkner},\ and\ \citenamefont {H\"anggi}}]{fluctuation_mf}%
  \BibitemOpen
  \bibfield  {author} {\bibinfo {author} {\bibfnamefont {Michele}\ \bibnamefont
  {Campisi}}, \bibinfo {author} {\bibfnamefont {Peter}\ \bibnamefont
  {Talkner}}, \ and\ \bibinfo {author} {\bibfnamefont {Peter}\ \bibnamefont
  {H\"anggi}},\ }\bibfield  {title} {\enquote {\bibinfo {title} {Fluctuation
  theorem for arbitrary open quantum systems},}\ }\href {\doibase
  10.1103/PhysRevLett.102.210401} {\bibfield  {journal} {\bibinfo  {journal}
  {Phys. Rev. Lett.}\ }\textbf {\bibinfo {volume} {102}},\ \bibinfo {pages}
  {210401} (\bibinfo {year} {2009}{\natexlab{b}})}\BibitemShut {NoStop}%
\bibitem [{\citenamefont {Timofeev}\ and\ \citenamefont
  {Trushechkin}(2022)}]{trushechkin2022}%
  \BibitemOpen
  \bibfield  {author} {\bibinfo {author} {\bibfnamefont {G.~M.}\ \bibnamefont
  {Timofeev}}\ and\ \bibinfo {author} {\bibfnamefont {A.~S.}\ \bibnamefont
  {Trushechkin}},\ }\bibfield  {title} {\enquote {\bibinfo {title} {Hamiltonian
  of mean force in the weak-coupling and high-temperature approximations and
  refined quantum master equations},}\ }\href {\doibase
  10.1142/s0217751x22430217} {\bibfield  {journal} {\bibinfo  {journal}
  {International Journal of Modern Physics A}\ }\textbf {\bibinfo {volume}
  {37}} (\bibinfo {year} {2022}),\ 10.1142/s0217751x22430217}\BibitemShut
  {NoStop}%
\bibitem [{\citenamefont {Davies}(1974)}]{davies1974markovian}%
  \BibitemOpen
  \bibfield  {author} {\bibinfo {author} {\bibfnamefont {E~Brian}\ \bibnamefont
  {Davies}},\ }\bibfield  {title} {\enquote {\bibinfo {title} {Markovian master
  equations},}\ }\href {https://doi.org/10.1007/BF01608389} {\bibfield
  {journal} {\bibinfo  {journal} {Comm. Math. Phys.}\ }\textbf {\bibinfo
  {volume} {39}},\ \bibinfo {pages} {91--110} (\bibinfo {year}
  {1974})}\BibitemShut {NoStop}%
\bibitem [{\citenamefont {Gorini}\ \emph {et~al.}(1976)\citenamefont {Gorini},
  \citenamefont {Kossakowski},\ and\ \citenamefont
  {Sudarshan}}]{gorini1976completely}%
  \BibitemOpen
  \bibfield  {author} {\bibinfo {author} {\bibfnamefont {Vittorio}\
  \bibnamefont {Gorini}}, \bibinfo {author} {\bibfnamefont {Andrzej}\
  \bibnamefont {Kossakowski}}, \ and\ \bibinfo {author} {\bibfnamefont
  {Ennackal Chandy~George}\ \bibnamefont {Sudarshan}},\ }\bibfield  {title}
  {\enquote {\bibinfo {title} {Completely positive dynamical semigroups of
  n-level systems},}\ }\href {https://doi.org/10.1063/1.522979} {\bibfield
  {journal} {\bibinfo  {journal} {J. Math. Phys.}\ }\textbf {\bibinfo {volume}
  {17}},\ \bibinfo {pages} {821--825} (\bibinfo {year} {1976})}\BibitemShut
  {NoStop}%
\bibitem [{\citenamefont {Bloch}(1957)}]{bloch1957generalized}%
  \BibitemOpen
  \bibfield  {author} {\bibinfo {author} {\bibfnamefont {F.}~\bibnamefont
  {Bloch}},\ }\bibfield  {title} {\enquote {\bibinfo {title} {Generalized
  theory of relaxation},}\ }\href {\doibase 10.1103/PhysRev.105.1206}
  {\bibfield  {journal} {\bibinfo  {journal} {Phys. Rev.}\ }\textbf {\bibinfo
  {volume} {105}},\ \bibinfo {pages} {1206--1222} (\bibinfo {year}
  {1957})}\BibitemShut {NoStop}%
\bibitem [{\citenamefont {Redfield}(1957)}]{redfield1957theory}%
  \BibitemOpen
  \bibfield  {author} {\bibinfo {author} {\bibfnamefont {Alfred~G}\
  \bibnamefont {Redfield}},\ }\bibfield  {title} {\enquote {\bibinfo {title}
  {On the theory of relaxation processes},}\ }\href
  {http://dx.doi.org/10.1147/rd.11.0019} {\bibfield  {journal} {\bibinfo
  {journal} {IBM Journal of Research and Development}\ }\textbf {\bibinfo
  {volume} {1}},\ \bibinfo {pages} {19--31} (\bibinfo {year}
  {1957})}\BibitemShut {NoStop}%
\bibitem [{\citenamefont {Alicki}(1989)}]{Alicki89}%
  \BibitemOpen
  \bibfield  {author} {\bibinfo {author} {\bibfnamefont {Robert}\ \bibnamefont
  {Alicki}},\ }\bibfield  {title} {\enquote {\bibinfo {title} {Master equations
  for a damped nonlinear oscillator and the validity of the markovian
  approximation},}\ }\href {\doibase 10.1103/PhysRevA.40.4077} {\bibfield
  {journal} {\bibinfo  {journal} {Phys. Rev. A}\ }\textbf {\bibinfo {volume}
  {40}},\ \bibinfo {pages} {4077--4081} (\bibinfo {year} {1989})}\BibitemShut
  {NoStop}%
\bibitem [{\citenamefont {Rivas}(2017)}]{Rivas_2017}%
  \BibitemOpen
  \bibfield  {author} {\bibinfo {author} {\bibfnamefont {\'{A}ngel}\
  \bibnamefont {Rivas}},\ }\bibfield  {title} {\enquote {\bibinfo {title}
  {Refined weak-coupling limit: Coherence, entanglement, and
  non-markovianity},}\ }\href {http://dx.doi.org/10.1103/PhysRevA.95.042104}
  {\bibfield  {journal} {\bibinfo  {journal} {Physical Review A}\ }\textbf
  {\bibinfo {volume} {95}} (\bibinfo {year} {2017})}\BibitemShut {NoStop}%
\bibitem [{\citenamefont {Rivas}(2019)}]{Rivas_2019}%
  \BibitemOpen
  \bibfield  {author} {\bibinfo {author} {\bibfnamefont {\'{A}ngel}\
  \bibnamefont {Rivas}},\ }\bibfield  {title} {\enquote {\bibinfo {title}
  {Quantum thermodynamics in the refined weak coupling limit},}\ }\href
  {http://dx.doi.org/10.3390/e21080725} {\bibfield  {journal} {\bibinfo
  {journal} {Entropy}\ }\textbf {\bibinfo {volume} {21}},\ \bibinfo {pages}
  {725} (\bibinfo {year} {2019})}\BibitemShut {NoStop}%
\bibitem [{\citenamefont {Nazir}\ and\ \citenamefont
  {Schaller}(2018)}]{Nazir2018}%
  \BibitemOpen
  \bibfield  {author} {\bibinfo {author} {\bibfnamefont {Ahsan}\ \bibnamefont
  {Nazir}}\ and\ \bibinfo {author} {\bibfnamefont {Gernot}\ \bibnamefont
  {Schaller}},\ }\enquote {\bibinfo {title} {The reaction coordinate mapping in
  quantum thermodynamics},}\ in\ \href {\doibase 10.1007/978-3-319-99046-0_23}
  {\emph {\bibinfo {booktitle} {Thermodynamics in the Quantum Regime}}},\
  \bibinfo {editor} {edited by\ \bibinfo {editor} {\bibfnamefont {Felix}\
  \bibnamefont {Binder}}, \bibinfo {editor} {\bibfnamefont {Luis~A.}\
  \bibnamefont {Correa}}, \bibinfo {editor} {\bibfnamefont {Christian}\
  \bibnamefont {Gogolin}}, \bibinfo {editor} {\bibfnamefont {Janet}\
  \bibnamefont {Anders}}, \ and\ \bibinfo {editor} {\bibfnamefont {Gerardo}\
  \bibnamefont {Adesso}}}\ (\bibinfo  {publisher} {Springer International
  Publishing},\ \bibinfo {address} {Cham},\ \bibinfo {year} {2018})\ pp.\
  \bibinfo {pages} {551--577}\BibitemShut {NoStop}%
\bibitem [{\citenamefont {Iles-Smith}\ \emph {et~al.}(2014)\citenamefont
  {Iles-Smith}, \citenamefont {Lambert},\ and\ \citenamefont
  {Nazir}}]{lambert}%
  \BibitemOpen
  \bibfield  {author} {\bibinfo {author} {\bibfnamefont {Jake}\ \bibnamefont
  {Iles-Smith}}, \bibinfo {author} {\bibfnamefont {Neill}\ \bibnamefont
  {Lambert}}, \ and\ \bibinfo {author} {\bibfnamefont {Ahsan}\ \bibnamefont
  {Nazir}},\ }\bibfield  {title} {\enquote {\bibinfo {title} {Environmental
  dynamics, correlations, and the emergence of noncanonical equilibrium states
  in open quantum systems},}\ }\href {\doibase 10.1103/PhysRevA.90.032114}
  {\bibfield  {journal} {\bibinfo  {journal} {Phys. Rev. A}\ }\textbf {\bibinfo
  {volume} {90}},\ \bibinfo {pages} {032114} (\bibinfo {year}
  {2014})}\BibitemShut {NoStop}%
\bibitem [{\citenamefont {Winczewski}\ \emph {et~al.}(2021)\citenamefont
  {Winczewski}, \citenamefont {Mandarino}, \citenamefont {Horodecki},\ and\
  \citenamefont {Alicki}}]{WinczewskiMHA}%
  \BibitemOpen
  \bibfield  {author} {\bibinfo {author} {\bibfnamefont {Marek}\ \bibnamefont
  {Winczewski}}, \bibinfo {author} {\bibfnamefont {Antonio}\ \bibnamefont
  {Mandarino}}, \bibinfo {author} {\bibfnamefont {Micha{\l}}\ \bibnamefont
  {Horodecki}}, \ and\ \bibinfo {author} {\bibfnamefont {Robert}\ \bibnamefont
  {Alicki}},\ }\bibfield  {title} {\enquote {\bibinfo {title} {Bypassing the
  intermediate times dilemma for open quantum system},}\ }\href@noop {} {\
  (\bibinfo {year} {2021})},\ \Eprint {http://arxiv.org/abs/2106.05776}
  {arXiv:2106.05776 [quant-ph]} \BibitemShut {NoStop}%
\bibitem [{\citenamefont {Winczewski}\ and\ \citenamefont
  {Alicki}(2021)}]{WinczewskiAlicki}%
  \BibitemOpen
  \bibfield  {author} {\bibinfo {author} {\bibfnamefont {Marek}\ \bibnamefont
  {Winczewski}}\ and\ \bibinfo {author} {\bibfnamefont {Robert}\ \bibnamefont
  {Alicki}},\ }\bibfield  {title} {\enquote {\bibinfo {title} {Renormalization
  in the theory of open quantum systems via the self-consistency condition},}\
  }\href@noop {} {\  (\bibinfo {year} {2021})},\ \Eprint
  {http://arxiv.org/abs/2112.11962} {arXiv:2112.11962 [quant-ph]} \BibitemShut
  {NoStop}%
\bibitem [{\citenamefont {Cattaneo}\ and\ \citenamefont
  {Manzano}(2021)}]{cattaneo2021comment}%
  \BibitemOpen
  \bibfield  {author} {\bibinfo {author} {\bibfnamefont {Marco}\ \bibnamefont
  {Cattaneo}}\ and\ \bibinfo {author} {\bibfnamefont {Gonzalo}\ \bibnamefont
  {Manzano}},\ }\href@noop {} {\enquote {\bibinfo {title} {Comment on
  "steady-state coherences by composite system-bath interactions"},}\ }
  (\bibinfo {year} {2021}),\ \Eprint {http://arxiv.org/abs/2106.09138}
  {arXiv:2106.09138 [quant-ph]} \BibitemShut {NoStop}%
\bibitem [{\citenamefont {Dodin}\ \emph {et~al.}(2018)\citenamefont {Dodin},
  \citenamefont {Tscherbul}, \citenamefont {Alicki}, \citenamefont {Vutha},\
  and\ \citenamefont {Brumer}}]{brumer}%
  \BibitemOpen
  \bibfield  {author} {\bibinfo {author} {\bibfnamefont {Amro}\ \bibnamefont
  {Dodin}}, \bibinfo {author} {\bibfnamefont {Timur}\ \bibnamefont
  {Tscherbul}}, \bibinfo {author} {\bibfnamefont {Robert}\ \bibnamefont
  {Alicki}}, \bibinfo {author} {\bibfnamefont {Amar}\ \bibnamefont {Vutha}}, \
  and\ \bibinfo {author} {\bibfnamefont {Paul}\ \bibnamefont {Brumer}},\
  }\bibfield  {title} {\enquote {\bibinfo {title} {Secular versus nonsecular
  redfield dynamics and fano coherences in incoherent excitation: An
  experimental proposal},}\ }\href {\doibase 10.1103/PhysRevA.97.013421}
  {\bibfield  {journal} {\bibinfo  {journal} {Phys. Rev. A}\ }\textbf {\bibinfo
  {volume} {97}},\ \bibinfo {pages} {013421} (\bibinfo {year}
  {2018})}\BibitemShut {NoStop}%
\bibitem [{\citenamefont {Hunacek}(2008)}]{Wulf_2008}%
  \BibitemOpen
  \bibfield  {author} {\bibinfo {author} {\bibfnamefont {Mark}\ \bibnamefont
  {Hunacek}},\ }\bibfield  {title} {\enquote {\bibinfo {title} {Lie groups: an
  introduction through linear groups, by wulf rossmann. pp. 265. £29.95 (pbk).
  2006. isbn 0 19 920251 6 (oxford university press).}}\ }\href {\doibase
  10.1017/S0025557200183561} {\bibfield  {journal} {\bibinfo  {journal} {The
  Mathematical Gazette}\ }\textbf {\bibinfo {volume} {92}},\ \bibinfo {pages}
  {380–382} (\bibinfo {year} {2008})}\BibitemShut {NoStop}%
\end{thebibliography}%

\appendix
\newpage

\onecolumngrid

\section{Preliminaries}
We consider the system and bath Hamiltonian:
\begin{eqnarray}
H = H_0 + H_R + \lambda H_I
\end{eqnarray}
such that $H_0$ and $H_R$ are the free Hamiltonians of the system and the bath, respectively, and $H_I$ is the interaction Hamiltonian. Except the section \ref{mean_force_section_dyson_series} and \ref{mean_force_section_weak_coupling}, where we do not assume any particular form of the interaction term, throughout the paper we consider the following explicit form:
\begin{eqnarray} \label{interaction_hamiltonian}
H_I = \sum_\alpha A_\alpha \otimes R_\alpha. 
\end{eqnarray}
We introduce the time-dependent operators:
\begin{eqnarray}
A_\alpha(t) =  e^{i H_0 t} A_\alpha e^{- i H_0 t}, \ \ R_\alpha(t) =  e^{i H_R t} R_\alpha e^{- i H_R t}.
\end{eqnarray}
and jump operators (acting on the system Hilbert space):
\begin{eqnarray} \label{jump_operators_definition_appendix}
A_\alpha (\omega)  = \sum_{\epsilon' - \epsilon = \omega} \Pi(\epsilon) A_\alpha \Pi(\epsilon')
\end{eqnarray}
where $\Pi(\epsilon)$ is a projector on subspace with energy $\epsilon$, such that $H_0 = \sum_\epsilon \epsilon \ \Pi(\epsilon)$. These obey the following commutation relation:
\begin{eqnarray} \label{jump_operators_commutator}
[A_\alpha (\omega), H_0] &=&  \omega A_\alpha(\omega).
\end{eqnarray}
as well as the relations:
\begin{eqnarray} \label{jump_operators_dag_and_sum}
A_\alpha^\dag(\omega) = A_\alpha(-\omega), \ \ \sum_\omega A_\alpha (\omega) = A_\alpha.
\end{eqnarray} 
From this follows also 
\begin{eqnarray} \label{jump_operator_exponent_commutation}
A_\alpha(\omega) e^{c H_0} = e^{c \omega} e^{c H_0} A_\alpha(\omega)
\end{eqnarray}
where $c$ is the complex number, such that, in particular, the time-dependent operator is given by:
\begin{eqnarray} \label{jump_operators_time}
A_\alpha(t) = \sum_\omega e^{-i\omega t} A_\alpha(\omega).
\end{eqnarray}

Finally, we consider the Bloch-Redfield master equation in the Schr\"{o}dinger picture:
\begin{eqnarray} \label{redfield_eq_Schrodinger}
\frac{d}{dt} \rho(t) = \mathcal{L}_t [\rho(t)] &=&  i [ \rho(t), H_0 + \sum_{\omega,\omega'} \sum_{\alpha, \beta} \mathcal{S}_{\alpha \beta}(\omega,\omega',t) A^\dag_\alpha (\omega) A_\beta (\omega')]\\
 &+&  \sum_{\omega,\omega'} \sum_{\alpha, \beta}  \gamma_{\alpha \beta}(\omega, \omega', t) \left( A_\beta (\omega') \rho(t) A^\dag_\alpha(\omega) -\frac{1}{2} \{A^\dag_\alpha (\omega) A_\beta (\omega'), \rho(t) \} \right) 
\end{eqnarray}
where $\rho$ is the system density matrix and $\mathcal{L}_t$ is the generator, with 
\begin{eqnarray} 
\gamma_{\alpha \beta}(\omega, \omega', t) &=& \Gamma_{\alpha \beta}(\omega', t) + \Gamma^*_{\beta \alpha}(\omega, t), \label{gamma_omega_omega}\\
\mathcal{S}_{\alpha \beta}(\omega, \omega', t) &=&  \frac{1}{2i} \left[\Gamma_{\alpha \beta}(\omega', t) - \Gamma^*_{\beta \alpha}(\omega, t) \right], \label{S_omega_omega}\\
\Gamma_{\alpha \beta}(\omega, t) &=& \int_0^{t} ds \ e^{i \omega s } \langle R_\alpha (s) R_\beta (0) \rangle_{\gamma_R}. \label{capital_gamma}
\end{eqnarray}
In the interaction picture (with respect to the bare Hamiltonian $H_0$), the Bloch-Redfield equation takes the following form:
\begin{eqnarray} \label{redfield_eq_interaction}
\frac{d}{dt} \tilde \rho(t) = \tilde{\mathcal{L}}_t [\tilde \rho(t)]  &=&  i [\tilde \rho(t), \sum_{\omega,\omega'} \sum_{\alpha, \beta} \tilde{\mathcal{S}}_{\alpha \beta}(\omega,\omega',t) A^\dag_\alpha (\omega) A_\beta (\omega')]\\
 &+&  \sum_{\omega,\omega'} \sum_{\alpha, \beta}  \tilde \gamma_{\alpha \beta}(\omega, \omega', t) \left( A_\beta (\omega') \tilde \rho(t) A^\dag_\alpha(\omega) -\frac{1}{2} \{A^\dag_\alpha (\omega) A_\beta (\omega'), \tilde \rho(t) \} \right) 
\end{eqnarray}
where $\tilde \rho(t) = e^{i H_0 t} \rho(t) e^{- i H_0 t}$, and 
\begin{eqnarray} 
 \tilde \gamma_{\alpha \beta}(\omega, \omega', t) &=& e^{i(\omega-\omega')t} \gamma_{\alpha \beta}(\omega, \omega', t), \label{gamma_interaction} \\
  \tilde{\mathcal{S}}_{\alpha \beta}(\omega, \omega', t) &=&  e^{i(\omega-\omega')t} \mathcal{S}_{\alpha \beta}(\omega, \omega', t). \label{S_interaction}
\end{eqnarray}
 Additionally, we also use the abbreviation
 \begin{eqnarray}
 \mathcal{S}_{\alpha \beta}(\omega) = \lim_{t\to\infty} \mathcal{S}_{\alpha \beta}(\omega, \omega, t), \ \ \gamma_{\alpha \beta}(\omega) = \lim_{t\to\infty} \gamma_{\alpha \beta}(\omega, \omega, t)
 \end{eqnarray}
 From the definition it is seen that $\gamma_{\alpha \beta}(\omega)$ is the Fourier transform of the autocorrelation function, i.e., 
 \begin{eqnarray} \label{gamma_fourier_transform}
 \gamma_{\alpha \beta}(\omega) &=& \Gamma_{\alpha \beta}(\omega) + \Gamma^*_{\beta \alpha}(\omega) \\
 &=& \int_0^{\infty} ds \ e^{i \omega s } \langle R_\alpha (s) R_\beta (0) \rangle_{\gamma_R} + \int_0^{\infty} ds \ e^{-i \omega s } \langle R_\beta (s) R_\alpha (0) \rangle^*_{\gamma_R} \\
 &=& \int_0^{\infty} ds \ e^{i \omega s } \langle R_\alpha (s) R_\beta (0) \rangle_{\gamma_R} + \int_0^{\infty} ds \ e^{-i \omega s } \langle R_\alpha (-s) R_\beta (0) \rangle_{\gamma_R} \\
 &=& \int_{-\infty}^{+\infty} ds \ e^{i \omega s } \langle R_\alpha (s) R_\beta (0) \rangle_{\gamma_R} 
 \end{eqnarray}
from which it follows that $\gamma_{\alpha \beta}(\omega)$ obeys the detailed balance condition, i.e.,
  \begin{eqnarray} \label{detailed balance_condition}
 \gamma_{\alpha \beta}(\omega) = \gamma_{\beta \alpha}(-\omega) e^{\beta \omega}.  
 \end{eqnarray}
Using the inverse Fourier transform for the autocorrelation function and the Sokhostki-Plemelj identity in the form:
\begin{eqnarray} \label{sokhostki_plemelj_identity}
\int_0^{\infty} ds \ e^{\pm i \omega s} = \frac{1}{\pi} \delta(\omega) \pm i\mathcal{P} \frac{1}{\omega},
\end{eqnarray}
we shall represent $\Gamma_{\alpha \beta}(\omega)$ as the principal value integral:
\begin{eqnarray}
\Gamma_{\alpha \beta}(\omega)  &=& \int_0^{\infty} ds \ e^{i \omega s } \langle R_\alpha (s) R_\beta (0) \rangle_{\gamma_R} = \frac{1}{2\pi} \int_{-\infty}^{+\infty} d \Omega \  \gamma_{\alpha \beta}(\Omega) \int_0^{\infty} ds \ e^{i (\omega - \Omega )s} \\
&=& \frac{1}{2} \gamma_{\alpha \beta}(\omega) +  \mathcal{P} \frac{1}{2 \pi} \int_{-\infty}^{+\infty} d \Omega \  \frac{i \gamma_{\alpha \beta}(\Omega)}{\omega - \Omega} \label{correlationfunc}
\end{eqnarray}
According to this, and since the $\gamma^*_{\alpha \beta}(\omega) = \gamma_{\beta \alpha}(\omega)$, we have 
 \begin{eqnarray} \label{S_principal_value}
 \mathcal{S}_{\alpha \beta}(\omega) &=& \frac{1}{2i} \left[\Gamma_{\alpha \beta}(\omega) - \Gamma^*_{\beta \alpha}(\omega) \right] = \mathcal{P} \frac{1}{2 \pi} \int_{-\infty}^{+\infty} d \Omega \  \frac{\gamma_{\alpha \beta}(\Omega)}{\omega - \Omega } 
 \end{eqnarray}

\section{Mean-force Hamiltonian} \label{mean_force_section}
We search for the solution for the mean-force Hamiltonian $H_\text{mf}$ from the equation:
\begin{eqnarray} \label{mean_force_equation}
e^{-\beta H_{\text{mf}}} = \frac{1}{\mathcal{Z}_R} \Tr_R[e^{-\beta H}],
\end{eqnarray}
where $H_{\text{mf}} = H_0 + \delta H_{\text{mf}}$ and $\mathcal{Z}_R = \Tr[e^{-\beta H_R}]$. Notice, that we used here the following gauge, i.e.,
\begin{eqnarray}
\Tr[e^{-\beta H}] = \Tr[e^{-\beta H_{\text{mf}}}] \Tr[e^{-\beta H_R}]
\end{eqnarray}
that fixes the ground state energy of the mean-force Hamiltonian. 

\subsection{Dyson series} \label{mean_force_section_dyson_series}
In the following section, we use an abbreviation $\hat A (t) = e^{t (H_0+H_R)} A e^{-t (H_0+H_R)}$. We start with the LHS of the Eq. \eqref{mean_force_equation}, which we represent by the formal Dyson form:
\begin{equation}
    e^{-\beta H_{\text{mf}}} = e^{-\beta H_0} e^{\beta H_0} e^{-\beta H_\text{mf}} =  e^{-\beta H_0} \mathcal{T}  e^{- \int_0^\beta dt \ \delta \hat{H}_\text{mf} (t)},
\end{equation}
which gives us the series expansion:
\begin{equation}
    \mathcal{T}  e^{- \int_0^\beta dt \ \delta \hat H_{\text{mf}} (t)} = \mathbb{1} - \int_0^\beta dt_1 \ \delta \hat H_{\text{mf}} (t_1)  + \int_0^\beta dt_1 \int_0^{t_1} dt_2 \ \delta \hat H_{\text{mf}} (t_1) \ \delta \hat H_{\text{mf}} (t_2) + \dots
\end{equation}
Similarly, for the RHS, we have
\begin{equation}
\label{eq:general-mean-force}
    e^{-\beta H} = e^{-\beta H_0} \mathcal{T} e^{- \lambda \int_0^\beta dt \ \hat H_I(t)},
\end{equation}
such that
\begin{equation}
\mathcal{T}  e^{- \lambda \int_0^\beta dt \hat H_I(t)} = \mathbb{1} - \lambda \int_0^\beta dt_1 \ \hat H_I(t_1)  + \lambda^2 \int_0^\beta dt_1 \int_0^{t_1} dt_2 \ \hat H_I (t_1) \ \hat H_I (t_2) + \dots
\end{equation}
Finally, one can write
\begin{equation}
     \frac{1}{\mathcal{Z}_R} \Tr_R[e^{-\beta H}] = \frac{1}{\mathcal{Z}_R} \Tr_R[e^{-\beta H_0} \mathcal{T} e^{- \lambda \int_0^\beta dt \ \hat H_I(t)}] =   e^{-\beta H_0} \Tr_R[\mathcal{T} e^{- \lambda \int_0^\beta dt \ \hat H_I(t)} \gamma_R] 
\end{equation}
where $\gamma_R = \frac{e^{-\beta \hat H_R}}{\mathcal{Z}_R}$ is the Gibbs state of the bath. In analogy, we have:
\begin{equation}
    e^{-\beta H_{\text{mf}}} = e^{-\beta H_0} \Tr_R [\mathcal{T}  e^{- \int_0^\beta dt \ \delta \hat H_{\text{mf}} (t)} \gamma_R] 
\end{equation}
such that Eq. \eqref{mean_force_equation} can be written as:
\begin{equation}
    \Tr_R \left[ \mathcal{T} \left( e^{- \int_0^\beta dt \ \delta \hat H_{\text{mf}} (t)} -  e^{- \lambda \int_0^\beta dt \ \hat H_I(t)} \right) \gamma_R \right] = 0 
\end{equation}
or in the series form as:
\begin{equation} \label{series_form}
    \sum_{k=1}^\infty (-1)^k \int_0^\beta dt_1 \int_0^{t_1} dt_2 \dots \int_0^{t_{k-1}} dt_k \left(\delta \hat H_\text{mf} (t_1) \ \delta \hat H_\text{mf} (t_2) \dots \delta \hat H_\text{mf} (t_k) - \lambda^k \left \langle \hat H_I(t_1) \ \hat H_I(t_2) \dots \hat H_I(t_k) \right \rangle_{\gamma_R} \right) = 0 
\end{equation}
where $\langle \cdot \rangle_{\gamma_R} = \Tr_R[ \ \cdot \ \gamma_R]$.

\subsection{Weak coupling} \label{mean_force_section_weak_coupling}

Now, let us assume that $\lambda \ll 1$, and we expand:
\begin{equation}
    \hat H_\text{mf} = \hat H_0 + \lambda \hat H_\text{mf}^{(1)} + \lambda^2 \hat H_\text{mf}^{(2)} + \dots 
\end{equation}
such that $\delta \hat H_\text{mf} = \lambda  \hat H_\text{mf}^{(1)} + \lambda^2 \hat H_\text{mf}^{(2)} + \dots$

Then, we collect terms in the same order of $\lambda$ appearing in Eq. \eqref{series_form}, i.e.,
\begin{eqnarray}
\lambda&:& \int_0^\beta dt_1 \hat H_\text{mf}^{(1)}(t_1) = \int_0^\beta dt_1 \left \langle \hat H_I(t_1) \right \rangle_{\gamma_R} \\
\lambda^2&:& -\int_0^\beta dt_1 \hat H_\text{mf}^{(2)}(t_1) + \int_0^\beta dt_1 \int_0^{t_1} dt_2 \ \hat H_\text{mf}^{(1)}(t_1) \ \hat H_\text{mf}^{(1)}(t_2) = \int_0^\beta dt_1 \int_0^{t_1} dt_2 \left \langle \hat H_I(t_1) \ \hat H_I(t_2) \right \rangle_{\gamma_R} \\
\lambda^3&:& -\int_0^\beta dt_1 \hat H^{(3)}_\text{mf}(t_1) + \int_0^\beta dt_1  \int_0^{t_1} dt_2 \left(\hat H_\text{mf}^{(1)}(t_1) \hat H_\text{mf}^{(2)}(t_2) + \hat H_\text{mf}^{(2)}(t_1) \hat H_\text{mf}^{(1)}(t_1) \right) \nonumber \\ 
&-& \int_0^\beta dt_1 \int_0^{t_1} dt_2 \int_0^{t_2} dt_3 \ \hat H_\text{mf}^{(1)}(t_1) \ \hat H_\text{mf}^{(1)}(t_2) \ \hat H_\text{mf}^{(1)}(t_3) = \int_0^\beta dt_1 \int_0^{t_1} dt_2 \int_0^{t_2} dt_3 \left \langle \hat H_I(t_1) \ \hat H_I(t_2) \ \hat H_I(t_3) \right \rangle_{\gamma_R}  \\
&\dots& \nonumber
\end{eqnarray}
In general for the $n$-th order we have 
\begin{multline}
 \sum_{m=1}^n (-1)^m \sum_{k_1, \dots, k_m \in \mathcal{C}_n^m} \int_0^\beta dt_1 \dots \int_0^{t_{m-1}} dt_m \ \hat H_S^{(k_1)}(t_1) \ \hat H_S^{(k_2)}(t_2) \dots \hat H_S^{(k_m)}(t_m) \\ = (-1)^n \int_0^\beta dt_1 \dots \int_0^{t_{n-1}} dt_n \left \langle \hat H_I(t_1) \dots \hat H_I(t_n) \right \rangle_{\gamma_R} \nonumber
\end{multline}
where $\mathcal{C}_n^k$ is set of the $k$-th order composition of the number $n$, e.g., $\mathcal{C}_4^3 = \{(2,1,1), (1,2,1), (1,1,2)\}$.  

\subsection{Derivation of general formulas for corrections to the mean-force Hamiltonian}
\label{mean_force_corrections_appendix}
\subsubsection{First-order correction}
Let us first solve the equation for the first-order correction, i.e.,
\begin{eqnarray}
\int_0^\beta dt_1 \hat H_\text{mf}^{(1)}(t_1) = \int_0^\beta dt_1 \left \langle \hat H_I(t_1) \right \rangle_{\gamma_R} = \sum_\alpha \int_0^\beta dt_1 \hat A_\alpha (t_1) \left \langle \hat R_\alpha (t_1) \right \rangle_{\gamma_R}.
\end{eqnarray}
Furthermore, since $\langle \hat R_\alpha (t_1) \rangle_{\gamma_R} = \langle R_\alpha \rangle_{\gamma_R}$ (due to commutation of the Gibbs state $\gamma_R$ with the free Hamiltonian $H_R$), we get the solution:
\begin{eqnarray}
H_\text{mf}^{(1)} = \sum_\alpha  \langle R_\alpha \rangle_{\gamma_R} \  A_\alpha. 
\end{eqnarray}
From now on, we assume that bath operators are centralized such that $\langle  R_\alpha \rangle_{\gamma_R} = 0$, which implies $H_\text{mf}^{(1)} = 0$.

\subsubsection{Second-order correction}

In this section we provide general formula for second order correction for mean-force Hamiltonian. 
Remarkably the expression does not exhibit any poles, in contrast to  Lamb-shift correction. Yet we also decompose it into 
bricks that are used also to build the  Lamb-shift corrections, which do exhibit poles, and require principal value to be well defined. 

\begin{theorem}
The explicit form of second order correction for mean-force Hamiltonian is the following:

\begin{eqnarray} \label{mean_force_jump_representation}
    H_\text{mf}^{(2)} = \sum_{\omega, \omega'} \Upsilon^{(\text{mf})}_{\alpha \beta} (\omega, \omega') A_\alpha^\dag(\omega) A_\beta(\omega'),
\end{eqnarray}
where
\begin{eqnarray}
\label{eq:gamma-mean-force}
    && \Upsilon^{(\text{mf})}_{\alpha \beta}(\omega, \omega') = \frac{1}{2 \pi}\int_{-\infty}^{+\infty}  d \Omega \  D_\mathrm{mf}(\omega, \omega', \Omega) \ \gamma_{\alpha \beta}(\Omega), \ \  D_\mathrm{mf}(\omega, \omega', \Omega) = \frac{1}{\omega'-\Omega} - \frac{(\omega-\omega')(e^{\beta(\omega-\Omega)}-1)}{(\omega-\Omega)(\omega'-\Omega)(e^{\beta(\omega-\omega')}-1)}
\end{eqnarray}
The coefficients $\Upsilon_{\alpha \beta}(\omega,\omega')$
can be also expressed in terms of the imaginary part of $\Gamma_{\alpha \beta}(\omega)$ (see Eq. \eqref{S_principal_value})  as follows  
\begin{align}
\label{eq:gamma-Gamma}
\Upsilon^{(\text{mf})}_{\alpha \beta}(\omega,\omega')=
     \frac{1}{e^{\beta \omega}-e^{\beta \omega'}} \left(e^{\beta \omega}\mathcal{S}_{\alpha \beta}(\omega') - e^{\beta \omega'} \mathcal{S}_{\alpha \beta}(\omega) + e^{\beta (\omega+\omega')} \left( \mathcal{S}_{\beta \alpha}(-\omega') - \mathcal{S}_{\beta \alpha}(-\omega) \right)\right).
\end{align}
\end{theorem}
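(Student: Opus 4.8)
The plan is to start from the order-$\lambda^2$ matching condition produced by the weak-coupling expansion of Eq.~\eqref{mean_force_equation}. Because the bath operators are centralized we have $H_\text{mf}^{(1)}=0$, so the $\lambda^2$ line of the Dyson expansion collapses to
\begin{equation}
\int_0^\beta dt\ \hat H_\text{mf}^{(2)}(t) = -\int_0^\beta dt\int_0^{t} ds\ \langle \hat H_I(t)\hat H_I(s)\rangle_{\gamma_R}.
\end{equation}
Both sides are operators on the system Hilbert space, so the strategy is to expand each side in the jump-operator basis $A_\alpha^\dag(\omega)A_\beta(\omega')$ and to match coefficients term by term.

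For the left-hand side I would insert the ansatz \eqref{mean_force_jump_representation} and use the imaginary-time evolution $e^{tH_0}A_\alpha(\omega)e^{-tH_0}=e^{-t\omega}A_\alpha(\omega)$, which follows from \eqref{jump_operator_exponent_commutation}; combined with $A_\alpha^\dag(\omega)=A_\alpha(-\omega)$ this means conjugating $A_\alpha^\dag(\omega)$ by $e^{tH_0}$ produces $e^{t\omega}A_\alpha^\dag(\omega)$, so that $\int_0^\beta dt\,\hat H_\text{mf}^{(2)}(t)=\sum_{\omega,\omega'}\Upsilon^{(\text{mf})}_{\alpha\beta}(\omega,\omega')\,\alpha(\omega'-\omega)\,A_\alpha^\dag(\omega)A_\beta(\omega')$ with $\alpha$ from \eqref{alpha_definition_main}. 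For the right-hand side I would write $\hat H_I(t)=\sum_\alpha \hat A_\alpha(t)\otimes\hat R_\alpha(t)$, expand each $\hat A_\alpha$ in jump operators, and replace the imaginary-time bath two-point function by its spectral form $\langle\hat R_\alpha(t)\hat R_\beta(s)\rangle_{\gamma_R}=\frac{1}{2\pi}\int d\Omega\,e^{-\Omega(t-s)}\gamma_{\alpha\beta}(\Omega)$, obtained by analytically continuing the real-time correlator ($u\to-i(t-s)$) via the Fourier relation \eqref{gamma_fourier_transform}. The remaining double time integral $\int_0^\beta dt\int_0^t ds\, e^{-t(\omega+\Omega)}e^{-s(\omega'-\Omega)}$ is elementary.

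After relabelling the summation index $\omega\to-\omega$ so that the operator structure becomes $A_\alpha^\dag(\omega)A_\beta(\omega')$ on both sides, matching coefficients gives $\Upsilon^{(\text{mf})}_{\alpha\beta}(\omega,\omega')\,\alpha(\omega'-\omega)$ equal to a single $\Omega$-integral against $\gamma_{\alpha\beta}(\Omega)$; dividing by $\alpha(\omega'-\omega)=\bigl(e^{\beta(\omega-\omega')}-1\bigr)/(\omega-\omega')$ and simplifying should reproduce exactly the kernel $D(\omega,\omega',\Omega)$ of \eqref{eq:gamma-mean-force}. To pass to the $\mathcal{S}$ form \eqref{eq:gamma-Gamma} I would partial-fraction $D$ in $\Omega$ using $\frac{1}{(\omega-\Omega)(\omega'-\Omega)}=\frac{1}{\omega-\omega'}\bigl(\frac{1}{\omega'-\Omega}-\frac{1}{\omega-\Omega}\bigr)$, isolate the single term carrying the factor $e^{-\beta\Omega}$, and apply detailed balance $e^{-\beta\Omega}\gamma_{\alpha\beta}(\Omega)=\gamma_{\beta\alpha}(-\Omega)$ together with $\Omega\to-\Omega$ to turn that piece into $\mathcal{S}_{\beta\alpha}(-\omega)$ and $\mathcal{S}_{\beta\alpha}(-\omega')$, while the remaining pieces give $\mathcal{S}_{\alpha\beta}(\omega)$ and $\mathcal{S}_{\alpha\beta}(\omega')$ through \eqref{S_principal_value}. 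Collecting and rewriting $\frac{e^{\beta(\omega-\omega')}}{e^{\beta(\omega-\omega')}-1}=\frac{e^{\beta\omega}}{e^{\beta\omega}-e^{\beta\omega'}}$ should yield \eqref{eq:gamma-Gamma}.

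The main obstacle I expect is the bookkeeping around the apparent poles at $\Omega=\omega$ and $\Omega=\omega'$. The full kernel $D$ is regular there — the residues of the two terms cancel at $\Omega=\omega'$, and the numerator $e^{\beta(\omega-\Omega)}-1$ vanishes at $\Omega=\omega$ — which is precisely why $\Upsilon^{(\text{mf})}$ exhibits no poles. But the rewriting in terms of $\mathcal{S}$ splits $D$ into individually singular pieces, each of which only makes sense as a principal-value integral, so I must verify that the associated $\delta$-function (Sokhotski--Plemelj) contributions cancel between the pieces, ensuring the principal-value decomposition faithfully represents the original convergent integral. A second, minor subtlety is the diagonal case $\omega=\omega'$, where both the $D$-form and the $\mathcal{S}$-form are defined only as limits, since $e^{\beta(\omega-\omega')}-1$ and $e^{\beta\omega}-e^{\beta\omega'}$ vanish together with the corresponding numerators; handling this requires l'H\^opital's rule and is the part genuinely left to the separate diagonal analysis elsewhere in the paper.
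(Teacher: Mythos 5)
Your plan follows the paper's own proof essentially step for step: the same order-$\lambda^2$ Dyson matching condition, the same jump-operator expansion with imaginary-time conjugation giving the factor $\alpha(\omega'-\omega)$, the same spectral (analytically continued) form of the bath correlator leading to the kernel $D(\omega,\omega',\Omega)$, and the same partial-fraction plus detailed-balance manipulation to reach the $\mathcal{S}$ representation. Your closing remarks on the principal-value bookkeeping and the $\omega=\omega'$ limit are sound caveats that the paper handles implicitly (the $\mathcal{S}$ form is only invoked for $\omega\neq\omega'$, and the split pieces are defined as principal values from the outset via Eq.~\eqref{S_principal_value}), but they do not change the route.
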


\begin{remark}
From \eqref{eq:gamma-Gamma} one sees that  
$\Upsilon^{(\text{mf})}_{\alpha \beta}$ is symmetric, i.e., $\Upsilon^{(\text{mf})}_{\alpha \beta}(\omega, \omega') = \Upsilon^{(\text{mf})}_{\alpha \beta}(\omega', \omega)$. 
This can be also seen by writing $D_\mathrm{mf}(\omega,\omega',\Omega)$ in explicitly symmetric form
\begin{eqnarray}
    D_\mathrm{mf}(\omega, \omega',\Omega) = \frac{1}{2}  \frac{(e^{\beta \omega}-e^{\beta \omega'})(\omega + \omega'-2\Omega) + (\omega-\omega')(e^{\beta \omega} + e^{\beta \omega'} - 2 e^{\beta (\omega + \omega'-\Omega)})}{(e^{\beta \omega}-e^{\beta \omega'})(\omega-\Omega)(\omega'-\Omega)}
\end{eqnarray}
\end{remark}

\begin{proof}
We shall first prove that Eq. \eqref{eq:gamma-Gamma}
comes from \eqref{eq:gamma-mean-force}. 
We shall use Eq. \eqref{S_principal_value}, i.e.,  
\begin{align} \label{S_principle_value_1}
  \mathcal{S}_{\alpha \beta}(\omega) = \frac{1}{2\pi}\int_{-\infty}^{+\infty} d\Omega \frac{\gamma_{\alpha \beta}(\Omega)}{\omega-\Omega},
 \end{align}
from which we also derive
\begin{eqnarray}
\label{eq:S-minus-S}
-\mathcal{S}_{\beta \alpha}(-\omega) = \frac{1}{2\pi}\int_{-\infty}^{+\infty} d\Omega \frac{\gamma_{\beta \alpha}(\Omega)}{\omega+\Omega} = 
\frac{1}{2\pi}\int_{-\infty}^{+\infty} d\Omega \frac{\gamma_{\beta \alpha}(-\Omega)}{\omega-\Omega} = 
\frac{1}{2 \pi} \int_{-\infty}^{+\infty} d\Omega \frac{\gamma_{\alpha \beta}(\Omega) e^{-\beta\Omega}}{\omega-\Omega},
\end{eqnarray}
where we used the detailed balance condition \eqref{detailed balance_condition}. 
Thus, to express $\Upsilon^{(\text{mf})}_{\alpha \beta}$ in terms of $\mathcal{S}_{\alpha \beta}$ we have to write $D_\mathrm{mf}$ in terms of $1/(\omega- \Omega)$ or  $1/(\omega' - \Omega)$.
Using 
\begin{align}
   \frac{\omega-\omega'}{(\omega'-\Omega)(\omega-\Omega)} =\frac{1}{\omega'-\Omega} - \frac{1}{\omega-\Omega}  
\end{align}
we thus get 
\begin{eqnarray}
    D_\mathrm{mf}(\omega, \omega', \Omega) &=& \frac{1}{e^{\beta(\omega-\omega')}-1} \left(\frac{e^{\beta(\omega-\omega')}-1}{\omega'-\Omega} - e^{\beta(\omega-\Omega)}(\frac{1}{\omega'-\Omega} - \frac{1}{\omega-\Omega})  + \frac{1}{\omega'-\Omega}
    - \frac{1}{\omega-\Omega}\right) \\
    &=& \frac{1}{e^{\beta \omega}-e^{\beta \omega'}} \left(\frac{e^{\beta \omega}}{\omega'-\Omega} - \frac{e^{\beta \omega'}}{\omega-\Omega} - e^{\beta(\omega + \omega' -\Omega)}(\frac{1}{\omega'-\Omega} - \frac{1}{\omega-\Omega})\right)
    \label{int:DmfInt}
\end{eqnarray}
Rearranging it a bit, and using  \eqref{S_principle_value_1} and \eqref{eq:S-minus-S}
we obtain \eqref{eq:gamma-Gamma}.

Let us now prove the expression \eqref{eq:gamma-mean-force}. We start with second-order equation with centralized bath operators, i.e.,
\begin{eqnarray} \label{second_order_equation_mean_force}
 \int_0^\beta dt \hat H_\text{mf}^{(2)}(t)  = -\int_0^\beta dt \int_0^{t} ds \left \langle \hat H_I(t) \hat H_I(s) \right \rangle_{\gamma_R}
\end{eqnarray}
Next, we put the representation \eqref{mean_force_jump_representation} and according to the relation \eqref{jump_operator_exponent_commutation}, we have
\begin{eqnarray}
\hat H_\text{mf}^{(2)} (t) = \sum_{\omega, \omega'} \Upsilon^{(\text{mf})}_{\alpha \beta}(\omega,\omega') e^{t H_0} A_{\alpha}^\dag(\omega) A_{\beta}(\omega') e^{-t H_0} = \sum_{\omega, \omega'} \Upsilon^{(\text{mf})}_{\alpha \beta}(\omega,\omega') e^{t (\omega - \omega')} A_{\alpha}^\dag(\omega) A_{\beta}(\omega').
\end{eqnarray}
Then, the LHS of Eq. \eqref{second_order_equation_mean_force} is equal to:
\begin{eqnarray}
\int_0^\beta dt \ \hat H_\text{mf}^{(2)}(t) =  \sum_{\omega, \omega'} A_{\alpha}^\dag(\omega) A_{\beta}(\omega') \left(\Upsilon^{(\text{mf})}_{\alpha \beta}(\omega,\omega') \int_0^\beta dt \ e^{t (\omega - \omega')} \right)
\end{eqnarray}
whereas the RHS is given by:
\begin{eqnarray} \label{mean_force_RHS}
-\int_0^\beta dt \int_0^{t} ds \left \langle \hat H_I(t) \hat H_I(s) \right \rangle_{\gamma_R} &=& -\int_0^\beta dt \int_0^{t} ds \ \hat A_{\alpha}(t) \hat A_{\beta}(s) \langle \hat R_{\alpha}(t) \ \hat R_{\beta}(s) \rangle \nonumber \\
&=& - \sum_{\omega, \omega'} \int_0^\beta dt \int_0^{t} ds \  e^{t \omega - s \omega'} A_{\alpha}^\dag(\omega) A_{\beta}(\omega') \langle \hat R_{\alpha}(t-s) \ \hat R_{\beta} \rangle \nonumber \\
&=& - \sum_{\omega, \omega'} A_{\alpha}^\dag(\omega) A_{\beta}(\omega') \int_0^\beta dt \ e^{t(\omega-\omega')} \int_0^{t} ds \  e^{s \omega'}  \langle \hat R_{\alpha}(s) \hat R_{\beta} \rangle 
\end{eqnarray}
where in the last line we change a variables $s \to t - s$. Next, according to Eq. \eqref{gamma_fourier_transform}, let us observe that 
\begin{eqnarray}
\langle \hat R_{\alpha}(it) \hat R_{\beta} \rangle = \langle R_{\alpha}(t) R_{\beta} \rangle = \frac{1}{2\pi } \int_{-\infty}^{+\infty} d\Omega \ e^{-i \Omega t}  \ \gamma_{\alpha \beta}(\Omega),
\end{eqnarray}
such that 
\begin{eqnarray}
\langle \hat R_{\alpha}(t) \hat R_{\beta} \rangle = \frac{1}{2\pi } \int_{-\infty}^{+\infty} d\Omega \ e^{-\Omega t}  \ \gamma_{\alpha \beta}(\Omega).
\end{eqnarray}
Finally, the RHS is equal to:
\begin{eqnarray}
-\int_0^\beta dt \int_0^{t} ds \left \langle \hat H_I(t)\hat H_I(s) \right \rangle_{\gamma_R} = - \sum_{\omega, \omega'} A_{\alpha}^\dag(\omega) A_{\beta}(\omega') \left(\frac{1}{2\pi} \int_{-\infty}^{+\infty} d\Omega \ \gamma_{\alpha \beta}(\Omega) \int_0^\beta dt \ e^{t(\omega-\omega')}   \int_0^{t} ds \ e^{s(\omega' - \Omega)} \right) \nonumber \\
\end{eqnarray}
Equating LHS=RHS, we get
\begin{eqnarray}
\sum_{\omega, \omega'} \left[\Upsilon^{(\text{mf})}_{\alpha \beta}(\omega,\omega') \int_0^\beta dt \ e^{t (\omega - \omega')} + \frac{1}{2\pi} \int_{-\infty}^{+\infty} d\Omega \ \gamma_{\alpha \beta}(\Omega) \int_0^\beta dt \ e^{t(\omega-\omega')}   \int_0^{t} ds \ e^{s (\omega' - \Omega)} \right] A_{\alpha}^\dag(\omega) A_{\beta}(\omega')  = 0
\end{eqnarray}
which is solved by
\begin{eqnarray}
\Upsilon^{(\text{mf})}_{\alpha \beta}(\omega,\omega') = -\frac{1}{2\pi} \int_{-\infty}^{+\infty} d\Omega \  \gamma_{\alpha\beta}(\Omega) \ \frac{\int_0^\beta dt \ e^{t(\omega-\omega')}   \int_0^{t} ds \ e^{s (\omega' - \Omega)}}{\int_0^\beta dt \ e^{t (\omega - \omega')}}
\end{eqnarray}
We thus obtain
\begin{eqnarray}
D_\mathrm{mf}(\omega,\omega',\Omega) &=& - \frac{\int_0^\beta dt \int_0^{t} ds \ e^{t(\omega-\omega')} e^{s (\omega' - \Omega)}}{\int_0^\beta dt \ e^{t (\omega - \omega')}}, 
\end{eqnarray}
what is readily integrated (see \eqref{int:DmfInt}).
Then the integrated form of $\Upsilon^{(\text{mf})}_{\alpha \beta}(\omega,\omega')$ is obtained with relation in \eqref{S_principle_value_1}.

%
\end{proof}

\section{Steady-state correction} \label{steady_state_section}

\subsection{General method}

We look for a solution of the equation:
\begin{eqnarray}
\mathcal{L} [\varrho] = 0,
\end{eqnarray}
where $\mathcal{L}$ is the generator of the master equation and $\varrho$ is its stationary state. We expand the generator and steady-state in the series, i.e., 
\begin{eqnarray}
\mathcal{L} [\rho] &=& \mathcal{L}_0 [\rho] + \lambda^2 \mathcal{L}_2 [\rho] + \lambda^4 \mathcal{L}_4 [\rho] + \dots \\
\varrho &=& \varrho_0 + \lambda^2 \varrho_2 + \lambda^4 \varrho_4 + \dots 
\end{eqnarray}
such that we have the following set of equations (for each order in $\lambda$):
\begin{eqnarray}
&&\mathcal{L}_0 [\varrho_0] = 0 \label{zeroth_order_eq}\\
&&\mathcal{L}_0 [\varrho_2] + \mathcal{L}_2 [\varrho_0] = 0 \label{second_order_eq} \\
&& \mathcal{L}_0 [\varrho_4] + \mathcal{L}_2 [\varrho_2] + \mathcal{L}_4 [\varrho_0] = 0  \label{fourth_order_eq} \\
&&\dots.
\end{eqnarray}
Hence we postulate the stationary state (in the Gibbs form):
\begin{eqnarray}
  \varrho = e^{-\beta(H_0 + \lambda^2 H^{(2)}_{\text{st}} + \lambda^4 H^{(4)}_{\text{st}}} + \dots) = \varrho_0 + \lambda^2 \varrho_2 + \lambda^4 \varrho_4 + \dots
\end{eqnarray}
such that
\begin{eqnarray}
\varrho_0 &=& e^{-\beta H_0} \\
\varrho_2 &=& -e^{-\beta H_0} \int_0^\beta dt \ e^{t H_0} H^{(2)}_\text{st} e^{-t H_0} \\
\varrho_4 &=& -e^{-\beta H_0} \int_0^\beta dt \ e^{t H_0} H^{(4)}_\text{st} e^{-t H_0} + e^{-\beta H_0} \int_0^\beta dt_1 \int_0^{t_1} dt_2 \ e^{t_1 H_0} H^{(2)}_\text{st} e^{-t_1 H_0} e^{t_2 H_0} H^{(2)}_\text{st} e^{-t_2 H_0}
\end{eqnarray}

In the following, we use the summation convention, i.e., the repeating indices are summed up. 
We start with representation of the second-order correction in the basis of jump operators: 
\begin{eqnarray}
H^{(2)}_\text{st} = \Upsilon_{\alpha \beta}^{(\text{st})}(\omega,\omega') A^\dag_ \alpha(\omega) A_\beta(\omega').
\end{eqnarray}
Note that contrary to mean force correction, the above form assumes that Bohr spectrum is nondegenerate. Indeed, then pairs of jump operators span linearly all the space of operators of the system.
In accordance, we have the following expression for $\varrho_2$, i.e.,
\begin{eqnarray}
\varrho_2 = -e^{-\beta H_0} \Upsilon^{(\text{st})}_{\alpha \beta}(\omega, \omega') \int_0^\beta dt \ e^{t H_0} A_\alpha^\dag(\omega) A_\beta(\omega') e^{-t H_0} = -\Upsilon^{(\text{st})}_{\alpha \beta}(\omega, \omega') \alpha(\omega'-\omega) e^{-\beta H_0} A_\alpha^\dag(\omega) A_\beta(\omega'),
\end{eqnarray}
where we define:
\begin{eqnarray} \label{alpha_definition}
\alpha(\omega) =  \int_0^\beta dt \ e^{- t \omega} = 
\begin{cases}
    \frac{1-e^{-\beta \omega}}{\omega},&  \omega \neq 0 \\
    \beta,          & \omega=0
\end{cases}
\end{eqnarray}
In general, we are going to transform the operator equations \eqref{second_order_eq} and \eqref{fourth_order_eq} into the algebraic ones. For this we define:
\begin{eqnarray}
\mathcal{L}_k [\varrho_l] = g^{(kl)}_{\alpha \beta}(\omega_1,\omega_2) e^{-\beta H_0} A_\alpha(\omega_1)A_\beta(\omega_2)
\end{eqnarray}
for the second-order (such that $k+l=2$), and 
\begin{eqnarray}
\mathcal{L}_k [\varrho_l] = g^{(kl)}_{\alpha \beta \gamma \delta}(\omega_1,\omega_2,\omega_3,\omega_4) e^{-\beta H_0} A_\alpha(\omega_1)A_\beta(\omega_2)A_\gamma(\omega_3)A_\delta(\omega_4)
\end{eqnarray}
for $k+l=4$. In accordance, for the second-order equation \eqref{second_order_eq}, we have 
\begin{eqnarray}  \label{second_order_algebraic}
\left(g^{(02)}_{\alpha \beta}(\omega_1,\omega_2) + g^{(20)}_{\alpha \beta}(\omega_1,\omega_2) \right) e^{-\beta H_0} A_\alpha(\omega_1)A_\beta(\omega_2) = 0
\end{eqnarray}
whereas for the fourth-order:
\begin{eqnarray} \label{fourth_order_algebraic}
\left(g^{(04)}_{\alpha \beta \gamma \delta}(\omega_1,\omega_2,\omega_3,\omega_4) + g^{(22)}_{\alpha \beta \gamma \delta}(\omega_1,\omega_2,\omega_3,\omega_4) + g^{(40)}_{\alpha \beta \gamma \delta}(\omega_1,\omega_2,\omega_3,\omega_4) \right) e^{-\beta H_0}  A_\alpha(\omega_1)A_\beta(\omega_2)A_\gamma(\omega_3)A_\delta(\omega_4) = 0 \nonumber \\
\end{eqnarray}

In the following, we will also use the commutation relations:
\begin{eqnarray}\label{commutation_relation}
[A_\alpha (\omega), H_0] = \omega A_\alpha (\omega)
\end{eqnarray}
from which we get:
\begin{eqnarray}
  A_\alpha (\omega) e^{-\beta H_0} =  e^{-\beta \omega} e^{-\beta H_0} A_\alpha(\omega).
\end{eqnarray}
The commutation relation \eqref{commutation_relation} can be further generalize for the product of jump operators, i.e.,
\begin{eqnarray} \label{commutation_relation_product}
[A_{\alpha_1} (\omega_1) A_{\alpha_2} (\omega_2) \dots A_{\alpha_2} (\omega_2), H_0] = (\omega_1 + \omega_2 + \dots + \omega_n) A_{\alpha_1} (\omega_1) A_{\alpha_2} (\omega_2) \dots A_{\alpha_2} (\omega_2).
\end{eqnarray}
Notice also that $A^\dag_\alpha(\omega) = A_\alpha(-\omega)$.

\subsection{Second-order equation}\label{appendix:second_order}
In the following, we solve Eq. \eqref{second_order_algebraic} for a master equation of the form:
\begin{eqnarray}
&&\mathcal{L}_0 [\rho] = i [\rho, H_0] \\
&&\mathcal{L}_2 [\rho] = \sum_{\alpha, \beta} \sum_{\omega,\omega'} \left[ \Upsilon_{\alpha \beta}^{(\text{LS})}(\omega,\omega') [\rho, A_\alpha^\dag(\omega) A_\beta(\omega')] + K_{\alpha \beta}(\omega,\omega') \left(A_\beta(\omega') \rho A_\alpha^\dag(\omega) - \frac{1}{2} \{A_\alpha^\dag(\omega) A_\beta(\omega'), \rho \} \right) \right]
\end{eqnarray}
We observe that the zeroth-order equation, i.e., $[\varrho_0, H_0] = 0$ is obviously satisfied for a choice $\varrho_0 = e^{-\beta H_0}$. 

Let us then calculate the coefficients $g_{\alpha\beta}^{(02)}$ and $g_{\alpha\beta}^{(20)}$ for the second-order equation. We start with:
\begin{eqnarray}
\mathcal{L}_0 [\varrho_2] &=& i [\varrho_2, H_0] = -i \Upsilon^{(\text{st})}_{\alpha \beta}(\omega_1, \omega_2) \alpha(\omega_1+\omega_2) [ e^{-\beta H_0} A_\alpha^\dag(\omega) A_\beta(\omega_2), H_0] \\
&=& -i (\omega_1 + \omega_2) \Upsilon^{(\text{st})}_{\alpha \beta}(-\omega_1, \omega_2) \alpha(\omega_1+\omega_2) e^{-\beta H_0} A_\alpha(\omega_1) A_\beta(\omega_2)
\end{eqnarray}
where we used Eq. \eqref{commutation_relation_product}, such that
\begin{eqnarray}
g_{\alpha\beta}^{(02)} = -i (\omega_1+\omega_2) \Upsilon^{(\text{st})}_{\alpha \beta}(-\omega_1, \omega_2) \alpha(\omega_1+\omega_2)
\end{eqnarray}
Next, we shall calculate:
\begin{eqnarray}
\mathcal{L}_2 [\varrho_0] &=& i\Upsilon_{\alpha \beta}^{(\text{LS})}(\omega_1,\omega_2) [\varrho_0, A_\alpha^\dag(\omega_1) A_\beta(\omega_2)] + K_{\alpha \beta}(\omega_1,\omega_2) \left(A_\beta(\omega_2) \varrho_0 A_\alpha^\dag(\omega_1) - \frac{1}{2} \{A_\alpha^\dag(\omega_1) A_\beta(\omega_2), \varrho_0 \} \right) \\
&=& i\Upsilon_{\alpha \beta}^{(\text{LS})}(\omega_1,\omega_2) [ e^{-\beta H_0}, A_\alpha^\dag(\omega_1) A_\beta(\omega_2)] + K_{\alpha \beta}(\omega_1,\omega_2) \left(A_\beta(\omega_2)  e^{-\beta H_0} A_\alpha^\dag(\omega_1) - \frac{1}{2} \{A_\alpha^\dag(\omega_1) A_\beta(\omega_2),  e^{-\beta H_0} \} \right) \nonumber
\end{eqnarray}
First, let us rewrite the Hamiltonian part in the form:
\begin{eqnarray}
i\Upsilon_{\alpha \beta}^{(\text{LS})}(\omega_1,\omega_2) [ e^{-\beta H_0}, A_\alpha^\dag(\omega_1) A_\beta(\omega_2)] &=& i\Upsilon_{\alpha \beta}^{(\text{LS})}(\omega_1,\omega_2) (e^{-\beta H_0} A_\alpha^\dag(\omega_1) A_\beta(\omega_2) - A_\alpha^\dag(\omega_1) A_\beta(\omega_2) e^{-\beta H_0})\\
&=& i\Upsilon_{\alpha \beta}^{(\text{LS})}(\omega_1,\omega_2) (1 - e^{-\beta(\omega_2-\omega_1)}) e^{-\beta H_0} A_\alpha^\dag(\omega_1) A_\beta(\omega_2)
\end{eqnarray}
and then the dissipative part as follows
\begin{align}
K_{\alpha \beta}(\omega_1,\omega_2) & \left(A_\beta(\omega_2)  e^{-\beta H_0} A_\alpha^\dag(\omega_1) - \frac{1}{2} \{A_\alpha^\dag(\omega_1) A_\beta(\omega_2),  e^{-\beta H_0} \} \right)  \\ 
&= K_{\alpha \beta}(\omega_1\omega_2) \left(A_\beta(\omega_2)  e^{-\beta H_0} A_\alpha^\dag(\omega_1) - \frac{1}{2} A_\alpha^\dag(\omega_1) A_\beta(\omega_2) e^{-\beta H_0} - \frac{1}{2} e^{-\beta H_0} A_\alpha^\dag(\omega_1) A_\beta(\omega_2) \} \right)  \\
&= K_{\alpha \beta}(\omega_1\omega_2) \left(e^{-\beta \omega} e^{-\beta H_0} A_\beta(\omega_2) A_\alpha^\dag(\omega_1) - \frac{1}{2} e^{-\beta (\omega -\omega_2)} e^{-\beta H_0} A_\alpha^\dag(\omega_1) A_\beta(\omega_2) - \frac{1}{2} e^{-\beta H_0} A_\alpha^\dag(\omega_1) A_\beta(\omega_2) \right)  \\
&= \left(e^{\beta \omega_1} K_{\beta \alpha}(-\omega_2,-\omega_1) - \frac{1}{2} K_{\alpha \beta}(\omega_1, \omega_2) (e^{-\beta (\omega_2 -\omega_1)} + 1) \right) e^{-\beta H_0} A_\alpha^\dag(\omega_1) A_\beta(\omega_2)
\end{align}
Finally, we get
\begin{eqnarray}
g_{\alpha\beta}^{(20)}(\omega_1,\omega_2) = i\Upsilon_{\alpha \beta}^{(\text{LS})}(-\omega_1,\omega_2) (1 - e^{-\beta(\omega_1+\omega_2)}) + e^{\beta \omega_1} K_{\beta \alpha}(-\omega_2,-\omega_1) - \frac{1}{2} K_{\alpha \beta}(-\omega_1, \omega_2) (e^{-\beta (\omega_1 + \omega_2)} + 1)
\end{eqnarray}
Now, we postulate the solution
\begin{eqnarray}
g_{\alpha\beta}^{(02)}(\omega_1,\omega_2) + g_{\alpha\beta}^{(20)}(\omega_1,\omega_2) = 0 
\end{eqnarray}
for each $\omega_1,\omega_2$ and $\alpha, \beta$. 
First, for $\omega_1=\omega_2 \equiv \omega$, we have
\begin{eqnarray}
e^{\beta \omega} K_{\beta \alpha}(-\omega, -\omega) - K_{\alpha \beta}(\omega, \omega) = 0
\end{eqnarray}
such that the coefficient $K_{\alpha \beta}(\omega, \omega)$ has to satisfy the detailed balance condition.
Furthermore, for $\omega_1 \neq \omega_2$ we get
\begin{multline}
i \Upsilon^{(\text{st})}_{\alpha \beta}(-\omega_1, \omega_2) (e^{-\beta(\omega_1+\omega_2)}-1) - i \Upsilon_{\alpha \beta}^{(\text{LS})}(-\omega_1, \omega_2) ( e^{-\beta(\omega_1+\omega_2)}-1) \\
+ e^{-\beta \omega_1} K_{\beta \alpha}(-\omega_2,\omega_1) - \frac{1}{2} K_{\alpha \beta}(-\omega_1, \omega_2) (e^{\beta (\omega_1 + \omega_2)} + 1) = 0 
\end{multline}
This can be further simplified to:
\begin{eqnarray} \label{mean_force_solution}
\Upsilon^{(\text{st})}_{\alpha \beta}(\omega_1, \omega_2) = \Upsilon_{\alpha \beta}^{(\text{LS})}(\omega_1,\omega_2) + \frac{i}{e^{\beta \omega_1} - e^{\beta \omega_2}} \left(e^{\beta (\omega_1+\omega_2)} K_{\beta \alpha}(-\omega_2,-\omega_1) - \frac{1}{2} K_{\alpha \beta}(\omega_1,\omega_2) (e^{\beta \omega_1} + e^{\beta \omega_2}) \right).
\end{eqnarray}

\subsubsection{Solutions for the Bloch-Redfield master equation and for secular approximation} \label{appendix:steady_state_mean_force_proof}
For the Bloch-Redfield master equation, we have:
\begin{eqnarray}
  \Upsilon_{\alpha \beta}^{(\text{LS})}(\omega,\omega') &=& \frac{1}{2i} (\Gamma_{\alpha \beta}(\omega') - \Gamma_{\beta \alpha}^*(\omega)), \\ K_{\alpha \beta}(\omega,\omega') &=& \Gamma_{\alpha \beta}(\omega') + \Gamma_{\beta \alpha}^*(\omega) \label{upsilon_dynamical}
\end{eqnarray}
Next, we put:
\begin{eqnarray}
\Gamma_{\alpha \beta}(\omega) = \frac{1}{2} \gamma_{\alpha \beta}(\omega) + i \mathcal{S}_{\alpha \beta}(\omega)
\end{eqnarray}
such that
\begin{eqnarray}
\Upsilon_{\alpha \beta}^{(\text{LS})}(\omega,\omega') &=& \frac{1}{2i} (\frac{1}{2} \gamma_{\alpha \beta}(\omega') + i \mathcal{S}_{\alpha \beta}(\omega') - \frac{1}{2} \gamma_{\alpha \beta}(\omega) + i \mathcal{S}_{\alpha \beta}(\omega)) \\
&=& \frac{1}{4i} (\gamma_{\alpha \beta}(\omega') -  \gamma_{\alpha \beta}(\omega)) + \frac{1}{2}( \mathcal{S}_{\alpha \beta}(\omega') + \mathcal{S}_{\alpha \beta}(\omega)) \label{upsilon_dyn}
\end{eqnarray}
and
\begin{eqnarray}
K_{\alpha \beta}(\omega,\omega') &=& \frac{1}{2} \gamma_{\alpha \beta}(\omega') + i \mathcal{S}_{\alpha \beta}(\omega') + \frac{1}{2} \gamma_{\alpha \beta}(\omega) - i \mathcal{S}_{\alpha \beta}(\omega) \\
&=&  \frac{1}{2} (\gamma_{\alpha \beta}(\omega') + \gamma_{\alpha \beta}(\omega)) + i (\mathcal{S}_{\alpha \beta}(\omega') - \mathcal{S}_{\alpha \beta}(\omega))
\end{eqnarray}
Let us put above expression into Eq. \eqref{mean_force_solution} and collect all of the terms with $\mathcal{S}_{\alpha \beta}$:
\begin{align}
&\frac{1}{2}( \mathcal{S}_{\alpha \beta}(\omega') + \mathcal{S}_{\alpha \beta}(\omega)) - \frac{1}{e^{\beta \omega} - e^{\beta \omega'}} \left(e^{\beta (\omega+\omega')} (\mathcal{S}_{\beta \alpha}(-\omega) - \mathcal{S}_{\beta \alpha}(-\omega')) - \frac{1}{2} (\mathcal{S}_{\alpha \beta}(\omega') - \mathcal{S}_{\alpha \beta}(\omega)) (e^{\beta \omega} + e^{\beta \omega'}) \right) \\
&= \frac{1}{e^{\beta \omega} - e^{\beta \omega'}} \left[\frac{e^{\beta \omega} - e^{\beta \omega'}}{2}( \mathcal{S}_{\alpha \beta}(\omega') + \mathcal{S}_{\alpha \beta}(\omega)) - \frac{e^{\beta \omega} + e^{\beta \omega'}}{2} (\mathcal{S}_{\alpha \beta}(\omega) - \mathcal{S}_{\alpha \beta}(\omega')) - e^{\beta (\omega+\omega')} (\mathcal{S}_{\beta \alpha}(-\omega) - \mathcal{S}_{\beta \alpha}(-\omega')) \right] \\
&= \frac{1}{e^{\beta \omega} - e^{\beta \omega'}} \left[ 
e^{\beta \omega} \mathcal{S}_{\alpha \beta}(\omega')  -  e^{\beta \omega'} \mathcal{S}_{\alpha \beta}(\omega) + e^{\beta (\omega+\omega')} (\mathcal{S}_{\beta \alpha}(-\omega') - \mathcal{S}_{\beta \alpha}(-\omega)) \right]
\end{align}
Next, we collect all of the terms with $\gamma_{\alpha \beta}$, i.e.,
\begin{align}
&\frac{1}{4i} (\gamma_{\alpha \beta}(\omega') -  \gamma_{\alpha \beta}(\omega)) + \frac{i}{e^{\beta \omega} - e^{\beta \omega'}} \left(e^{\beta (\omega+\omega')} \frac{1}{2} (\gamma_{\alpha \beta}(-\omega) + \gamma_{\alpha \beta}(-\omega'))- \frac{1}{4} (\gamma_{\alpha \beta}(\omega') + \gamma_{\alpha \beta}(\omega)) (e^{\beta \omega} + e^{\beta \omega'}) \right) \\
&=\frac{i}{4} \left(-\gamma_{\alpha \beta}(\omega') +  \gamma_{\alpha \beta}(\omega) + \frac{1}{e^{\beta \omega} - e^{\beta \omega'}} \left(2 (e^{\beta \omega'} \gamma_{\alpha \beta}(\omega) + e^{\beta \omega} \gamma_{\alpha \beta}(\omega'))- (\gamma_{\alpha \beta}(\omega') + \gamma_{\alpha \beta}(\omega)) (e^{\beta \omega} + e^{\beta \omega'}) \right) \right) \\
&=\frac{i}{4} \left(-\gamma_{\alpha \beta}(\omega') +  \gamma_{\alpha \beta}(\omega) + \frac{1}{e^{\beta \omega} - e^{\beta \omega'}} \left(2 e^{\beta \omega'} \gamma_{\alpha \beta}(\omega) + 2e^{\beta \omega} \gamma_{\alpha \beta}(\omega') - \gamma_{\alpha \beta}(\omega') (e^{\beta \omega} + e^{\beta \omega'}) - \gamma_{\alpha \beta}(\omega) (e^{\beta \omega} + e^{\beta \omega'}) \right) \right) \\
&=\frac{i}{4} \left(-\gamma_{\alpha \beta}(\omega') +  \gamma_{\alpha \beta}(\omega) + \frac{1}{e^{\beta \omega} - e^{\beta \omega'}} \left(e^{\beta \omega'} \gamma_{\alpha \beta}(\omega) + e^{\beta \omega} \gamma_{\alpha \beta}(\omega') - e^{\beta \omega'} \gamma_{\alpha \beta}(\omega') - e^{\beta \omega} \gamma_{\alpha \beta}(\omega) \right) \right) =0.
\end{align}
One sees that only terms $\mathcal{S}_{\alpha \beta}$ survives. Moreover, these are exactly equal to the expression for a mean-force Hamiltonian given by Eq. \eqref{eq:gamma-Gamma}, such that for the Bloch-Redfield or cumulant master equation we have simply:
\begin{eqnarray}
\Upsilon_{\alpha \beta}^{(\text{st})}(\omega,\omega') = \Upsilon_{\alpha \beta}^{(\text{mf})}(\omega,\omega'),
\end{eqnarray}
for $\omega \neq \omega'$.

Let us observe that if we apply the so-called secular approximation before (see \eqref{redfield_liouvillian}) for $\gamma_{\alpha \beta}$ coefficients, i.e.
\begin{eqnarray}
\gamma_{\alpha \beta}(\omega,\omega') \xrightarrow{\text{sec. approx}} \gamma_{\alpha \beta}(\omega) \delta_{\omega,\omega'}, 
\end{eqnarray}
then
\begin{eqnarray}
e^{\beta (\omega+\omega')} \gamma_{\beta \alpha}(-\omega',-\omega) - \frac{1}{2} \gamma_{\alpha \beta}(\omega,\omega') (e^{\beta \omega} + e^{\beta \omega'} = \delta_{\omega,\omega'} \left(e^{2\beta \omega} \gamma_{\beta \alpha}(-\omega) -  e^{\beta \omega} \gamma_{\alpha \beta}(\omega) \right) = 0
\end{eqnarray}
due to the detailed balance condition. Finally, for such master equation, for $\omega \neq \omega'$, we have 
\begin{eqnarray}
\Upsilon_{\alpha \beta}^{(\text{st})}(\omega,\omega') = \Upsilon_{\alpha \beta}^{(\text{LS})}(\omega,\omega').
\end{eqnarray}

\subsection{Fourth-order equation} \label{appendix_fourth_order}
Now, we are going to solve the fourth-order equation \eqref{fourth_order_algebraic}. For simplicity, we assume that the interaction term is given by $H_I = A \otimes R$, such that we drop the indices, i.e.,
\begin{eqnarray}
\left(g_{04}(\omega_1,\omega_2,\omega_3,\omega_4) + g_{22}(\omega_1,\omega_2,\omega_3,\omega_4) + g_{40}(\omega_1,\omega_2,\omega_3,\omega_4) \right) e^{-\beta H_0}  A(\omega_1)A(\omega_2)A(\omega_3)A(\omega_4) = 0,
\end{eqnarray}
where $g_{\alpha \beta \gamma \delta}^{(kl)} \equiv g_{kl}$. According to the Proposition \ref{diagonal_elements_summation_proposition} in the Section \ref{sect:steady_state_diag}, the above equation is satisfied if and only if the following set of equations is satisfied:
\begin{align} \label{final_g22_g44_eq_appendix}
    \sum_{(\omega_1,\omega_2,\omega_2,\omega_4)\in G(\ket k \to \ket k)}
    \Big( g_{22}(\omega_1,\omega_2,\omega_3,\omega_4) + g_{40}(\omega_1,\omega_2,\omega_3,\omega_4)\Big) = 0.
\end{align}
where $G(|k\rangle\to|k\rangle)$ denotes the set of all four-tuples 
\begin{align}
    (\omega_1,\omega_2,\omega_2,\omega_4)=(\epsilon_l-\epsilon_k,\epsilon_m-\epsilon_l,\epsilon_j-\epsilon_m,\epsilon_k-\epsilon_j).
\end{align}
\subsubsection{$g_{22}$ function}
We consider the term:
\begin{align}
    \mathcal{L}_2 [\varrho_2] = \sum_{\omega_{1},\omega_{2},\omega_{3},\omega_{4}} g_{22}(\omega_1,\omega_2,\omega_3,\omega_4) A(\omega_{1}) A(\omega_{2}) A(\omega_{3}) A(\omega_{4}) 
\end{align}
where
\begin{eqnarray}
   \varrho_2&=&\alpha(\omega_3+\omega_4) \Upsilon_{\text{st}}(-\omega_3,\omega_4)  \varrho_0 A(\omega_{3}) A(\omega_{4}) \\
   \mathcal{L}_2 [\rho]&=& - i \Upsilon_\text{LS}(-\omega_1,\omega_2) [A(\omega_{1}) A(\omega_{2}), \rho]+ K(-\omega_1,\omega_2) \left( A(\omega_{2}) \rho A(\omega_{1})-\frac{1}{2}\{A(\omega_{1})A(\omega_{2}), \rho \}\right)
\end{eqnarray}
Then, we have:
\begin{eqnarray}
    \mathcal{L}_2 [\varrho_2] &=&   i \Upsilon_\text{LS}(- \omega_{1},\omega_{2}) \Upsilon_{\text{st}}(- \omega_{3},\omega_{4}) \alpha(\omega_3+\omega_4) A(\omega_{1}) A(\omega_{2}) \varrho_2 A(\omega_{3}) A(\omega_{4})  \\ 
    &-& i \Upsilon_\text{LS}(- \omega_{1},\omega_{2}) \Upsilon_{\text{st}}(- \omega_{3},\omega_{4}) \alpha(\omega_3+\omega_4) \varrho_2 A(\omega_{3}) A(\omega_{4}) A(\omega_{1}) A(\omega_{2}) \\ 
    &+& \frac{1}{2} \left(\Upsilon_{\text{st}}(- \omega_{3},\omega_{4}) \alpha(\omega_3+\omega_4) K(- \omega_{1},\omega_{2} ) A(\omega_{1}) A(\omega_{2}) \varrho_2 A(\omega_{3}) A(\omega_{4}) \right)\\ 
    &-& \Upsilon_{\text{st}}(- \omega_{3},\omega_{4}) \alpha(\omega_3+\omega_4) K(- \omega_{1},\omega_{2} ) A(\omega_{2}) \varrho_2 A(\omega_{3}) A(\omega_{4}) A(\omega_{1})  \\ 
    &+& \frac{1}{2} \left(\Upsilon_{\text{st}}(- \omega_{3},\omega_{4}) \alpha(\omega_3+\omega_4) K(- \omega_{1},\omega_{2} ) \varrho_2 A(\omega_{3}) A(\omega_{4}) A(\omega_{1}) A(\omega_{2}) \right)
\end{eqnarray}
which we may rewrite as:
\begin{eqnarray}
\mathcal{L}_2 [\varrho_2] &=&  \left(- i \Upsilon_\text{LS}(- \omega_{1},\omega_{2}) \Upsilon_{\text{st}}(- \omega_{3},\omega_{4}) \alpha(\omega_3+\omega_4) + \frac{1}{2}\left(\Upsilon_{\text{st}}(- \omega_{3},\omega_{4}) \alpha(\omega_3+\omega_4) K(- \omega_{1},\omega_{2} )\right) \right) e^{-\beta H_0} A(\omega_{3}) A(\omega_{4}) A(\omega_{1}) A(\omega_{2}) \nonumber \\ 
&+& \left(i \Upsilon_\text{LS}(- \omega_{1},\omega_{2})  e^{- \beta( \omega_{2}+\omega_{1})} + \frac{1}{2} \left( K(- \omega_{1},\omega_{2} )  e^{- \beta( \omega_{2}+\omega_{1})}\right) \right)\Upsilon_{\text{st}}(- \omega_{3},\omega_{4}) \alpha(\omega_3+\omega_4) e^{-\beta H_0} A(\omega_{1}) A(\omega_{2}) A(\omega_{3}) A(\omega_{4})  \nonumber \\ 
&-& \Upsilon_{\text{st}}(- \omega_{3},\omega_{4}) \alpha(\omega_3+\omega_4) K(- \omega_{1},\omega_{2} ) e^{- \beta \omega_{2}} e^{-\beta H_0} A(\omega_{2}) A(\omega_{3}) A(\omega_{4}) A(\omega_{1}).
\end{eqnarray}
Since all $\omega_i$'s are mute indices, we change them such that one obtains:
\begin{multline} \label{g_22_appendix}
g_{22}(\omega_1,\omega_2,\omega_3,\omega_4) = \Upsilon_{\text{st}}(-\omega_3,\omega_4) \alpha(\omega_3+\omega_4) \left(i \Upsilon_{\text{LS}}(-\omega_1,\omega_2) + \frac{1}{2} K(-\omega_1,\omega_2) \right) \\ - \Upsilon_{\text{st}}(-\omega_1,\omega_2) \alpha(\omega_1+\omega_2)\left(i \Upsilon_{\text{LS}}(-\omega_3,\omega_4) - \frac{1}{2} K(-\omega_3,\omega_4) \right)- e^{-\beta \omega_1} \Upsilon_{\text{st}}(-\omega_2,\omega_3) \alpha(\omega_2+\omega_3) K(-\omega_4,\omega_1).
\end{multline}

\subsubsection{$g_{40}$ function (cumulant equation)}
We consider the fourth-order generator of the cumulant in the Schr\"odinger picture:
\begin{eqnarray}
 \mathcal{L}^{(4)}_t [\rho]  = \frac{1}{2} \int_0^t ds \ e^{-iH_0t}\left[\tilde{\mathcal{L}}_s^R , \tilde{\mathcal{L}}_t^R \right] [e^{iH_0t}\rho e^{-iH_0t}] e^{iH_0t}.
\end{eqnarray}
Acting on $\varrho_0$ that commutes with $H_0$, this simplifies to:
\begin{eqnarray}
 \mathcal{L}^{(4)}_t [\varrho_0]  = \frac{1}{2} \int_0^t ds \ e^{-iH_0t}\left[\tilde{\mathcal{L}}_s^R , \tilde{\mathcal{L}}_t^R \right] [\varrho_0] e^{iH_0t} 
\end{eqnarray}
We then define 
\begin{align}
  \mathcal{L}_t^{(4)} [\varrho_0]  = \sum_{\omega_1,\omega_2,\omega_3,\omega_4} g_{40}(\omega_1,\omega_2,\omega_3,\omega_4, t) e^{-\beta H_0}A(\omega_{1}) A(\omega_{2}) A(\omega_{3}) A(\omega_{4}).
\end{align}
To get an expression for $g_{40}$, we first compute the action of $\tilde{\mathcal{L}}_s^R \tilde{\mathcal{L}}_t^R$ on $\varrho_0$, i.e.,
\begin{eqnarray}
\tilde{\mathcal{L}}_s^R \tilde{\mathcal{L}}_t^R[\varrho_0] = \sum_{\omega_1,\omega_2,\omega_3,\omega_4} f(\omega_1,\omega_2,\omega_3,\omega_4) e^{-\beta H_0} A(\omega_{1}) A(\omega_{2}) A(\omega_{3}) A(\omega_{4}) 
\end{eqnarray}
where
\begin{align}
&f(\omega_1,\omega_2,\omega_3,\omega_4,t,s) = \tilde{\mathcal{S}}(- \omega_{1},\omega_{2},s) \tilde{\mathcal{S}}(- \omega_{3},\omega_{4},t) e^{- \beta (\omega_{1} +\omega_{2})} - \tilde{\mathcal{S}}(- \omega_{1},\omega_{2},s) \tilde{\mathcal{S}}(- \omega_{3},\omega_{4},t) e^{- \beta (\omega_{1} +\omega_{2}+\omega_{3}+\omega_{4})} \\
&+ \frac{i}{2} \tilde{\mathcal{S}}(- \omega_{1},\omega_{2},s) \tilde{\gamma}(- \omega_{3},\omega_{4},t) e^{- \beta (\omega_{1} +\omega_{2})} + \frac{i}{2} \tilde{\mathcal{S}}(- \omega_{1},\omega_{2},s) \tilde{\gamma}(- \omega_{3},\omega_{4},t) e^{- \beta (\omega_{1} +\omega_{2}+\omega_{3}+\omega_{4})} \\ 
&- i \tilde{\mathcal{S}}(- \omega_{1},\omega_{2},s) \tilde{\gamma}{\left(- \omega_{4},\omega_{3},t \right)} e^{- \beta (\omega_{1} +\omega_{2}+\omega_{3})} + \tilde{\mathcal{S}}{\left(- \omega_{1},\omega_{2},t \right)} \tilde{\mathcal{S}}(- \omega_{3},\omega_{4},s) e^{- \beta (\omega_{1} +\omega_{2})} \\ 
&- \tilde{\mathcal{S}}{\left(- \omega_{1},\omega_{2},t \right)} \tilde{\mathcal{S}}(- \omega_{3},\omega_{4},s) + \frac{i}{2} \tilde{\mathcal{S}}{\left(- \omega_{1},\omega_{2},t \right)} \tilde{\gamma}(- \omega_{3},\omega_{4},s) e^{- \beta (\omega_{1} +\omega_{2})} - \frac{i}{2} \tilde{\mathcal{S}}{\left(- \omega_{1},\omega_{2},t \right)} \tilde{\gamma}(- \omega_{3},\omega_{4},s) \\
&- i \tilde{\mathcal{S}}{\left(- \omega_{2},\omega_{3},t \right)} \tilde{\gamma}{\left(- \omega_{4},\omega_{1},s \right)} e^{- \beta (\omega_{1} +\omega_{2}+\omega_{3})} + i \tilde{\mathcal{S}}{\left(- \omega_{2},\omega_{3},t \right)} \tilde{\gamma}{\left(- \omega_{4},\omega_{1},s \right)} e^{- \beta \omega_{1}} \\
&- \frac{i}{2}\tilde{\mathcal{S}}(- \omega_{3},\omega_{4},s) \tilde{\gamma}{\left(- \omega_{1},\omega_{2},t \right)} e^{- \beta (\omega_{1} +\omega_{2})} - \frac{i}{2} \tilde{\mathcal{S}}(- \omega_{3},\omega_{4},s) \tilde{\gamma}{\left(- \omega_{1},\omega_{2},t \right)} + i \tilde{\mathcal{S}}(- \omega_{3},\omega_{4},s) \tilde{\gamma}(- \omega_{2},\omega_{1},t) e^{- \beta \omega_{1}} \\ 
&- \frac{i}{2} \tilde{\mathcal{S}}(- \omega_{3},\omega_{4},t) \tilde{\gamma}(- \omega_{1},\omega_{2},s) e^{- \beta (\omega_{1} +\omega_{2})} + \frac{i}{2}\tilde{\mathcal{S}}(- \omega_{3},\omega_{4},t) \tilde{\gamma}(- \omega_{1},\omega_{2},s) e^{- \beta (\omega_{1} +\omega_{2}+\omega_{3}+\omega_{4})} \\
&+ \frac{1}{4}\tilde{\gamma}(- \omega_{1},\omega_{2},s) \tilde{\gamma}(- \omega_{3},\omega_{4},t) e^{- \beta (\omega_{1} +\omega_{2})} + \frac{1}{4}\tilde{\gamma}(- \omega_{1},\omega_{2},s) \tilde{\gamma}(- \omega_{3},\omega_{4},t) e^{- \beta (\omega_{1} +\omega_{2}+\omega_{3}+\omega_{4})} \\ 
&- \frac{1}{2}\tilde{\gamma}(- \omega_{1},\omega_{2},s) \tilde{\gamma}{\left(- \omega_{4},\omega_{3},t \right)} e^{- \beta (\omega_{1} +\omega_{2}+\omega_{3})} + \frac{1}{4}\tilde{\gamma}{\left(- \omega_{1},\omega_{2},t \right)} \tilde{\gamma}(- \omega_{3},\omega_{4},s) e^{- \beta (\omega_{1} +\omega_{2})} \\ 
&+ \frac{1}{4}\tilde{\gamma}{\left(- \omega_{1},\omega_{2},t \right)} \tilde{\gamma}(- \omega_{3},\omega_{4},s) - \frac{1}{2}\tilde{\gamma}(- \omega_{2},\omega_{1},t) \tilde{\gamma}(- \omega_{3},\omega_{4},s) e^{- \beta \omega_{1}} \\
&- \frac{1}{2}\tilde{\gamma}{\left(- \omega_{2},\omega_{3},t \right)} \tilde{\gamma}{\left(- \omega_{4},\omega_{1},s \right)} e^{- \beta (\omega_{1} +\omega_{2}+\omega_{3})} - \frac{1}{2}\tilde{\gamma}{\left(- \omega_{2},\omega_{3},t \right)} \tilde{\gamma}{\left(- \omega_{4},\omega_{1},s \right)} e^{- \beta \omega_{1}} \\
&+ \tilde{\gamma}{\left(- \omega_{3},\omega_{2},t \right)} \tilde{\gamma}{\left(- \omega_{4},\omega_{1},s \right)} e^{- \beta (\omega_{1} +\omega_{2})} 
\end{align}
Consequently, we have
\begin{eqnarray} \label{g40_expression_for_cumulant}
g_{40}(\omega_1,\omega_2,\omega_3,\omega_4,t)= \frac{1}{2} e^{i(\omega_1+\omega_2+\omega_3+\omega_4)t}\int_0^t ds \ (f(\omega_1,\omega_2,\omega_3,\omega_4,t,s)-f(\omega_1,\omega_2,\omega_3,\omega_4,s,t)).
\end{eqnarray}

The above expression, in general would not have well defined limit for $t\to\infty$. 
However we will need the xxx


\subsection{Proof of Proposition \ref{diagonal_elements_summation_proposition}} \label{proof_diagonal_elements_propositon}
We consider diagonal elements of the Eq. \eqref{fourth_order_equation_with_g}, such that we obtain the following set of equations
\begin{equation}
    \sum_{\vec \omega} (g_{04}(\vec \omega) + g_{22}(\vec \omega) + g_{40}(\vec \omega)) e^{-\beta \epsilon_k} \bra k A(\omega_1) A(\omega_2) A(\omega_3) A(\omega_4) \ket k = 0.
\end{equation}
for $k= 0,1,2,\dots$. We see that since for arbitrary $\rho$ we have:
\begin{align}
    &\sum_{\vec \omega} g_{04}(\vec \omega) \bra k e^{-\beta H_0} A(\omega_1) A(\omega_2) A(\omega_3) A(\omega_4) \ket k \\
    &= \bra k \mathcal{L}_\infty^{(0)} [\rho] \ket k = \bra k [H_0,\rho] \ket k =0,
\end{align}
so our condition is now just 
\begin{equation}
\label{eq:4order-steady-state-cond}
    \sum_{\vec \omega} \Big( g_{22}(\vec \omega) + g_{40}(\vec \omega)\Big) e^{-\beta \epsilon_k} \bra k A(\omega_1) A(\omega_2) A(\omega_3) A(\omega_4) \ket k = 0.
\end{equation}
Moreover, one observes that $\bra k A(\omega_1) A(\omega_2) A(\omega_3) A(\omega_4) \ket k$ is nonzero only if $\sum_k \omega_k = 0$. Consequently, let us denote by $G(|k\rangle\to|k\rangle)$ the set of all four-tuples 
$\vec \omega$ of the form: 
\begin{align}
    \vec \omega=(\epsilon_l-\epsilon_k,\epsilon_m-\epsilon_l,\epsilon_j-\epsilon_m,\epsilon_k-\epsilon_j),
\end{align}
form which follows Eq. \eqref{final_g22_g44_eq}.

\subsection{Two-level system}
Now, we shall specialize to the case of a two-level system. We then have $k=0,1$ and $\epsilon_1-\epsilon_0 = \omega_0$, such that 
\begin{align}
    G(\ket 0 \to \ket 0)=
    &\{(0,0,0,0),(\omega_0,-\omega_0,0,0),(\omega_0,0,-\omega_0,0),(\omega_0,0,0,-\omega_0), \\ &(0,\omega_0,-\omega_0,0),(0,\omega_0,0,-\omega_0),(0,0,\omega_0,-\omega_0),(\omega_0,-\omega_0,\omega_0,-\omega_0\})
\end{align}
The set $G(|1\rangle\to|1\rangle)$ is the same but with changed sign of the qubit frequency $\omega_0 \to -\omega_0$. 
Then, according to Eq. \eqref{g_22_appendix}, one can first observe that the coefficient $g_{22}$ summed over first seven four-tuples vanishes, i.e.,
\begin{multline}
    g_{22}(0,0,0,0) + g_{22}(\omega_0,-\omega_0,0,0) + g_{22}(\omega_0,0,-\omega_0,0) + g_{22}(\omega_0,0,0,-\omega_0) \\ + g_{22}(0,\omega_0,-\omega_0,0) + g_{22}(0,\omega_0,0,-\omega_0) + g_{22}(0,0,\omega_0,-\omega_0) = 0
\end{multline}
whereas for the last one we have 
\begin{align}
g_{22}(\omega_0,-\omega_0,\omega_0,-\omega_0) 
= \beta\left (\Upsilon_{\text{st}}(-\omega_0,-\omega_0) K(-\omega_0,-\omega_0) - e^{-\beta \omega_0} \Upsilon_{\text{st}}(\omega_0,\omega_0) K(\omega_0,\omega_0) \right).
\end{align}
If additionally $K(\omega,\omega)$ obeys the detailed balance condition, then 
\begin{eqnarray}
g_{22}(\omega_0,-\omega_0,\omega_0,-\omega_0) = \beta e^{-\beta \omega_0} \left(\Upsilon_{\text{st}}(-\omega_0,-\omega_0) - \Upsilon_{\text{st}}(\omega_0,\omega_0) \right) K(\omega_0,\omega_0).
\end{eqnarray}

\subsubsection{Second-order master equation}
Now, since for arbitrary master equation of the form \eqref{master_equation}, which is up to second order in $\lambda$, we also have $g_{40}=0$. From this we conclude that Eq. \eqref{fourth_order_algebraic} is satisfied if 
\begin{eqnarray} \label{qubit_solution_diagonal}
\Upsilon_{\text{st}}(\omega_0,\omega_0) = \Upsilon_{\text{st}}(-\omega_0,-\omega_0).
\end{eqnarray}
Since for a two-level system, in general we have 
\begin{eqnarray}
\bra 0 H_\text{st}^{(2)} \ket 0 = \Upsilon_{\text{st}}(0,0) + \Upsilon_{\text{st}}(\omega_0,\omega_0),  \quad \bra 1 H_\text{st}^{(2)} \ket 1 = \Upsilon_{\text{st}}(0,0) + \Upsilon_{\text{st}}(-\omega_0,-\omega_0).
\end{eqnarray}
Thus, applying the condition \eqref{qubit_solution_diagonal}, we finally get:
\begin{eqnarray}
\bra 0 H_\text{st}^{(2)} \ket 0 = \bra 1 H_\text{st}^{(2)} \ket 1 = \Upsilon_{\text{st}}(\omega_0,\omega_0).
\end{eqnarray}

\subsubsection{Cumulant equation}
To solve the Eq. \eqref{fourth_order_algebraic} for the cumulant master equation we need to additionally calculate the term involving the coefficient $g_{40}$. Putting the expression \eqref{g40_expression_for_cumulant}, we observe that, similarly to the summation of $g_{22}$, the sum over first seven tuples vanishes, such that we obtain a very simple expression 
\begin{eqnarray}
\lim_{t \to \infty} \sum_{(\omega_1,\omega_2,\omega_2,\omega_4)\in G(\ket 0 \to \ket 0)} g_{40}(\omega_1,\omega_2,\omega_2,\omega_4,t) &=&  g_{40}(\omega_0,-\omega_0,\omega_0,-\omega_0), 
\end{eqnarray}
where 
\begin{eqnarray}
g_{40}(\omega_0,-\omega_0,\omega_0,-\omega_0) = \frac{1}{2} e^{-\beta \omega_0}(1+e^{\beta \omega_0}) \gamma(\omega_0) \int_0^\infty ds \ (e^{-\beta \omega_0}  \gamma(\omega_0,s) - \gamma(-\omega_0,s)).
\end{eqnarray}
Since, the leading order of the cumulant master equation is the Bloch-Redfield generator, we also have 
\begin{eqnarray}
g_{22}(\omega_0,-\omega_0,\omega_0,-\omega_0) = \beta e^{-\beta \omega_0} \left(\Upsilon_{\text{st}}(-\omega_0,-\omega_0) - \Upsilon_{\text{st}}(\omega_0,\omega_0) \right) \gamma(\omega_0).
\end{eqnarray}
Finally, we need to solve 
\begin{eqnarray}
g_{22}(\omega_0,-\omega_0,\omega_0,-\omega_0) + g_{40}(\omega_0,-\omega_0,\omega_0,-\omega_0) = 0 
\end{eqnarray}
which gives us 
\begin{align}
&\Upsilon_{\text{st}}(\omega_0,\omega_0) - \Upsilon_{\text{st}}(-\omega_0,-\omega_0)  = \frac{1}{2\beta} (1+e^{\beta \omega_0}) \int_0^\infty ds \ ( e^{-\beta \omega_0}  \gamma(\omega_0,s) - \gamma(-\omega_0,s)) \\
&= \frac{1}{2\beta} \int_0^\infty ds \ (\gamma(\omega_0,s) + e^{-\beta \omega_0} \gamma(\omega_0,s) - \gamma(-\omega_0,s) - e^{\beta \omega_0} \gamma(-\omega_0,s)) \\
&= \frac{1}{2\beta} \int_0^\infty ds \ (\gamma(\omega_0,s) - e^{\beta \omega_0} \gamma(-\omega_0,s)) - \frac{1}{2\beta} \int_0^\infty ds \ ( \gamma(-\omega_0,s) - e^{-\beta \omega_0} \gamma(\omega_0,s))
\end{align}
The above formula  does not yet allow to determine
$\Upsilon_{\text{st}}(\omega,\omega)$, since it is a difference of such quantities. 
However this indeterminacy is just a shift of the Hamiltonian by a constant, and therefore it is irrelevant. Actually this is just the gauge that has to be chosen at some point. 
We just can consider the simplest choice
\begin{eqnarray}
\Upsilon_{\text{st}}(\omega,\omega) = \frac{1}{2\beta} \int_0^\infty ds \ (\gamma(\omega,s) - e^{\beta \omega} \gamma(-\omega,s)).
\end{eqnarray}

In the end, we want to compare the steady-state correction with the mean-force one. The diagonal part of the mean-force coefficients is given by:
\begin{eqnarray}
\Upsilon_{\text{mf}}(\omega,\omega) = \frac{1}{2\pi} \int_{-\infty}^{+\infty} d \Omega \ D_\mathrm{mf}(\omega,\omega,\Omega) \gamma(\Omega)
\end{eqnarray}
where 
\begin{eqnarray} \label{no_poles_formula}
D_\mathrm{mf}(\omega,\omega,\Omega) &=& - \frac{1}{\beta} \int_0^\beta dt \int_0^{t} ds \ e^{s (\omega - \Omega)} = \frac{1-e^{\beta(\omega-\Omega)}+\beta(\omega-\Omega)}{\beta (\omega-\Omega)^2}.
\end{eqnarray}
Let us then represent a function $\gamma(\omega,t)$ in a similar way. From the definition, we have:
\begin{equation}
\gamma(\omega,t) = \Gamma(\omega,t)+\Gamma(\omega,t)^* = \int_{-t}^t ds \ e^{i \omega s} \langle R(s) R \rangle,
\end{equation}
such that by substituting 
\begin{equation}
    \langle R(s) R \rangle = \frac{1}{2\pi} \int_{-\infty}^\infty d\Omega \ e^{-i \Omega s} \gamma(\Omega)
\end{equation}
we get
\begin{equation}
\gamma(\omega,t) = \int_{-t}^t ds \ e^{i \omega s} \langle R(s) R \rangle =  \frac{1}{2\pi} \int_{-\infty}^\infty d\Omega \ \gamma(\Omega) \int_{-t}^t ds \ e^{i (\omega-\Omega) s}  = \frac{1}{\pi} \int_{-\infty}^\infty d\Omega \ \frac{\gamma(\Omega)}{\omega-\Omega} \sin[(\omega-\Omega) t].
\end{equation}
Accordingly, the steady-state coefficient is given by:
\begin{equation}
    \Upsilon_{\text{st}}(\omega,\omega) = \frac{1}{2 \pi \beta} \int_{-\infty}^\infty d\Omega \ \gamma(\Omega) \int_0^\infty dt \left(\frac{\sin[(\omega-\Omega) t]}{\omega-\Omega} - e^{\beta \omega} \frac{\sin[(\omega+\Omega) t]}{\omega+\Omega} \right).
\end{equation}
Now, let us observe that since the function $\gamma(\Omega)$ satisfies the detailed-balance condition, i.e., $\gamma(\Omega) = e^{\beta \Omega} \gamma(-\Omega)$, then one may write:
\begin{eqnarray}
 \int_{-\infty}^\infty d\Omega \ \gamma(\Omega) \frac{\sin[(\omega+\Omega) t]}{\omega+\Omega} = \int_{-\infty}^\infty d\Omega \ e^{\beta \Omega} \gamma(-\Omega) \frac{\sin[(\omega+\Omega) t]}{\omega+\Omega} = \int_{-\infty}^\infty d\Omega \  \gamma(\Omega) e^{-\beta \Omega} \frac{\sin[(\omega-\Omega) t]}{\omega-\Omega}.
\end{eqnarray}
Applying this to the previous equation, we get
\begin{equation}
    \Upsilon_{\text{st}}(\omega,\omega) = \frac{1}{2 \pi \beta} \int_{-\infty}^\infty d\Omega \ \gamma(\Omega) \frac{1- e^{\beta (\omega-\Omega)}}{\omega-\Omega}  \int_0^\infty dt \sin[(\omega-\Omega) t].
\end{equation}
Finally, we represent the integral over sine as the Cauchy principal value, i.e., by using the Sokhotski-Plemelj formula \eqref{sokhostki_plemelj_identity}, we may write
\begin{eqnarray}
    \int_0^\infty dt \sin[(\omega-\Omega) t] = \mathcal{P} \frac{1}{\omega-\Omega},
\end{eqnarray}
such that
\begin{eqnarray}
    \Upsilon_{\text{st}}(\omega,\omega) = \mathcal{P} \frac{1}{2 \pi \beta} \int_{-\infty}^\infty d\Omega \ \gamma(\Omega) \frac{1- e^{\beta (\omega-\Omega)}}{(\omega-\Omega)^2}.
\end{eqnarray}

Let us then back to the mean-force representation. The formula $D_\mathrm{mf}(\omega,\omega,\Omega)$ given by Eq. \eqref{no_poles_formula} has no poles, nevertheless, it can be split into two principal value integrals:
\begin{eqnarray}
    \Upsilon_{\text{mf}}(\omega,\omega) &=& \frac{1}{2\pi \beta} \int_{-\infty}^{+\infty} d \Omega \  \gamma(\Omega) \frac{1-e^{\beta(\omega-\Omega)}+\beta(\omega-\Omega)}{(\omega-\Omega)^2} \\
    &=& \mathcal{P}\frac{1}{2\pi \beta} \int_{-\infty}^{+\infty} d \Omega \  \gamma(\Omega) \frac{1-e^{\beta(\omega-\Omega)}}{(\omega-\Omega)^2} + \mathcal{P} \frac{1}{2\pi} \int_{-\infty}^{+\infty} d \Omega \  \gamma(\Omega) \frac{1}{\omega-\Omega}.
\end{eqnarray}
The first integral is precisely the representation of the steady-state corrections, whereas the second term is the previously defined function $\mathcal{S}(\omega)$ \eqref{S_principal_value}.

Finally, we have proved the following identity for the cumulant equation (for the two-level system):
\begin{eqnarray}
\label{eq:gamma-s}
    \Upsilon_{\text{mf}}(\omega,\omega) &=& \Upsilon_{\text{st}}(\omega,\omega) + \mathcal{S}(\omega).
\end{eqnarray}

\section{Cumulant equation} \label{cumulant_section}

Consider a system interacting with a thermal reservoir which Hamiltonian is given by:
\begin{equation}
    H=H_0 +H_R + \lambda H_I
\end{equation}
Let us also consider the Born Approximation such that $\rho(0)=\rho_{S}(0) \otimes \rho_R$ where $\rho_R$ is a stationary state of the environment. In the interaction picture the reduced state at time $t$ is:
\begin{equation}
    \rho_{S}(t)= \Tr_{R}\left({U(t,t_{0}) \rho_{S}(t_{0}) \otimes \rho_{R}(t_{0}) U^{\dagger}(t,t_{0})}\right)
\end{equation}
One may expand the evolution operator in the interaction picture $U(t,0)=\mathcal{T} e^{-i \int_{0}^{t} H_{I}(t') dt'}$ and rearrange terms (of the same power of $H_{I}$) to obtain:
\begin{eqnarray}
      \rho_{S}(t)= \rho_{S}(0) \underbrace{-\lambda^{2}\frac{\mathcal{T}}{2} \int_{0}^{t} dt_{1}\int_{0}^{t} dt_{2}\Tr_{R}\left(\left[H_{I}(t_1),\left[H_{I}(t_2),\rho_{S}(0)\otimes\rho_{R}\right]\right]\right) }_{\tilde K^{(2)}_t}
      +\mathcal{O}(\lambda^3)
\end{eqnarray}
The terms $\mathcal{O}(H_I^3)$ can be neglected for weak coupling or short times. We already considered the initial state of the bath to be thermal $\rho_{B}(0)=\rho_{\beta}=e^{-\beta H_B}/\Tr{e^{-\beta H_B}}$ and the bath operators to be centralized. Let us know focus on the second term, let us apply time-ordering explicitly so that:
\begin{eqnarray}
\tilde K^{(2)}_t&=&-\frac{\lambda^2}{2} \int_{0}^{t} dt_{1}\int_{0}^{t} dt_{2} \theta(t_1 -t_2 )\Tr_{R}\left(\left[H_{I}(t_1),\left[H_{I}(t_2),\rho_{S}(0)\otimes\rho_{R}\right]\right]\right)\nonumber \\&-&\frac{\lambda^2}{2} \int_{0}^{t} dt_{1}\int_{0}^{t} dt_{2} \theta(t_2 -t_1 )\Tr_{R}\left(\left[H_{I}(t_2),\left[H_{I}(t_1),\rho_{S}(0)\otimes\rho_{R}\right]\right]\right)
\end{eqnarray}
Let us know expand the double commutators:
\begin{eqnarray}
\tilde K^{(2)}_t&=&-\frac{\lambda^2}{2} \int_{0}^{t} dt_{1}\int_{0}^{t} dt_{2} \theta(t_1 -t_2 )\Tr_{R}\Big[ H_I(t_1) H_I(t_2) \rho_S(0)\rho_R - H_I(t_1) \rho_S(0) \rho_R H_I(t_2) - H_I(t_2) \rho_S(0) \rho_R H_I(t_1) \nonumber  \\ &+&  \rho_S(0) \rho_R H_I(t_1) H_I(t_2)\Big] -\frac{\lambda^2}{2} \int_{0}^{t} dt_{1}\int_{0}^{t} dt_{2} \theta(t_2 -t_1 )\Tr_{R}\Big[ H_I(t_2) H_I(t_1) \rho_S(0)\rho_R - H_I(t_2) \rho_S(0) \rho_R H_I(t_1) \nonumber \\ &-& H_I(t_1) \rho_S(0) \rho_R H_I(t_2) \nonumber  +  \rho_S(0) \rho_R H_I(t_2) H_I(t_1)\Big]
\end{eqnarray}
From here it can be seen that we have three kind of terms, namely $H_I^2 \rho,H_I \rho H_I, \rho H_I^2$. Let us consider each of those independently
\begin{eqnarray}
H_I \rho H_I &:&\frac{\lambda^2}{2} \int_{0}^{t} dt_{1}\int_{0}^{t} dt_{2} \big( \theta(t_1 -t_2 )+\theta(t_2 -t_1 ) \big)\Tr_{R}\Big[ H_I(t_1) \rho_S(0) \rho_R H_I(t_2) + H_I(t_2) \rho_S(0) \rho_R H_I(t_1)\Big]  \\
&=&\frac{\lambda^2}{2} \int_{0}^{t} dt_{1}\int_{0}^{t} dt_{2} \Tr_{R}\Big[ H_I(t_1) \rho_S(0) \rho_R H_I(t_2) + H_I(t_2) \rho_S(0) \rho_R H_I(t_1)\Big] \\
&=&\lambda^2 \int_{0}^{t} dt_{1}\int_{0}^{t} dt_{2} \Tr_{R}\Big[ H_I(t_1) \rho_S(0) \rho_R H_I(t_2)\Big]
\end{eqnarray}
where in the last step we used a change of variables on the second term, such that $t_1 \leftrightarrow{} t_2$. Next, we consider the other two missing terms
\begin{eqnarray}
H_I^{2} \rho &:& -\frac{\lambda^2}{2} \int_{0}^{t} dt_{1}\int_{0}^{t} dt_{2} \Big(\theta(t_1 -t_2 )\Tr_{R}\Big[ H_I(t_1) H_I(t_2) \rho_S(0)\rho_R \Big] + \theta(t_2 -t_1 )\Tr_{R}\Big[ H_I(t_2) H_I(t_1) \rho_S(0)\rho_R \Big]\Big) \\
&=& -\frac{\lambda^2}{2} \int_{0}^{t} dt_{1}\int_{0}^{t} dt_{2} \Big(\theta(t_1 -t_2 )\Tr_{R}\Big[ [H_I(t_1), H_I(t_2)] \rho_S(0)\rho_R \Big] + \Tr_{R}\Big[ H_I(t_2) H_I(t_1) \rho_S(0)\rho_R \Big] \Big) \\
 \rho H_I^{2} &:& -\frac{\lambda^2}{2} \int_{0}^{t} dt_{1}\int_{0}^{t} dt_{2} \Big(\theta(t_1 -t_2 )\Tr_{R}\Big[  \rho_S(0)\rho_R H_I(t_2) H_I(t_1) \Big] + \theta(t_2 -t_1 )\Tr_{R}\Big[ H_I(t_1) H_I(t_2) \rho_S(0)\rho_R \Big]\Big) \\
&=& -\frac{\lambda^2}{2} \int_{0}^{t} dt_{1}\int_{0}^{t} dt_{2} \Big(\theta(t_1 -t_2 )\Tr_{R}\Big[\rho_S(0)\rho_R  [H_I(t_2), H_I(t_1)] \Big] + \Tr_{R}\Big[ \rho_S(0)\rho_R H_I(t_1) H_I(t_2) \Big] \Big)
\end{eqnarray}
In both cases the step taken from one line to the other was summing a zero so that the terms could be recast in that form, they were ${\pm \frac{1}{2} \int_{0}^{t} dt_{1}\int_{0}^{t} dt_{2} \theta(t_1 -t_2 ) \Tr_{R}\Big[  H_I(t_2) H_I(t_1) \rho_S(0)\rho_R \Big]}$ in the first case and ${\pm \frac{1}{2} \int_{0}^{t} dt_{1}\int_{0}^{t} dt_{2} \theta(t_1 -t_2 ) \Tr_{R}\Big[ \rho_S(0)\rho_R  H_I(t_1) H_I(t_2) \Big]}$ in the second one. Regrouping all terms we have
\begin{eqnarray}
\tilde K^{(2)}_t&=&\lambda^2\int_{0}^{t} dt_{1}\int_{0}^{t} dt_{2} \Bigg(\Tr_{R}\Big[ H_I(t_1) \rho_S(0) \rho_R H_I(t_2)\Big]- \frac{1}{2} \big(\Tr_{R}\Big[ \rho_S(0)\rho_R H_I(t_1) H_I(t_2) \Big]+\Tr_{R}\Big[ H_I(t_2) H_I(t_1) \rho_S(0)\rho_R \Big]  \Big) \Bigg) \nonumber \\ &-&\frac{\lambda^2}{2} \int_{0}^{t} dt_{1}\int_{0}^{t} dt_{2} \theta(t_1-t_2)\Big(\Tr_{R}\Big[ [H_I(t_1), H_I(t_2)] \rho_S(0)\rho_R \Big] -\Tr_{R}\Big[\rho_S(0)\rho_R  [H_I(t_1), H_I(t_2)] \Big] \Big) \\ &=& \lambda^2 \int_{0}^{t} dt_{1}\int_{0}^{t} dt_{2} \Bigg(\Tr_{R}\Big[ H_I(t_1) \rho_S(0) \rho_R H_I(t_2)\Big]- \frac{1}{2} \big(\Tr_{R}\Big[ \rho_S(0)\rho_R H_I(t_1) H_I(t_2) \Big]+\Tr_{R}\Big[ H_I(t_2) H_I(t_1) \rho_S(0)\rho_R \Big]  \Big) \Bigg) \nonumber \\ &-& i\lambda^2 [\Lambda(t),\rho_S(0)]
\end{eqnarray}
where:
\begin{eqnarray}
    \Lambda(t)&=&\frac{1}{2 i} \int_{0}^{t} dt_{1}\int_{0}^{t} dt_{2} \theta(t_1-t_2) \Tr_{R}\Big[ [H_I(t_1), H_I(t_2)] \rho_R \Big] \\
    &=& \frac{1}{2 i} \int_{0}^{t} dt_{1}\int_{0}^{t} dt_{2} sgn(t_1-t_2) \Tr_{R}\Big[ H_I(t_1) H_I(t_2)\rho_R \Big]
\end{eqnarray}
where we used  $\theta(x)=\frac{1+sgn(x)}{2}$. Now, if we expand the interaction Hamiltonian in the interaction picture ${H_I=\sum_{w,k} e^{iwt} A_{k}(w) B_{k}=\sum_{w,k} e^{-iwt} A^{\dagger}_{k}(w) B_{k}}$ 
\begin{eqnarray}
\Lambda(t)&=&\frac{1}{2i} \sum_{w,w'} \sum_{\alpha \beta} \int_{0}^{t} dt_{1}\int_{0}^{t} dt_{2} sgn(t_1-t_2) e^{i(w t_1- w' t_2)} A^{\dagger}_{\alpha}(w) A_{\beta}(w') \langle B_{\alpha}(t_1) B_{\beta}(t_2) \rangle_{R} \nonumber \\
&=&  \sum_{w,w'} \sum_{\alpha \beta} \Xi(w,w',t) A^{\dagger}_{\alpha}(w) A_{\beta}(w')
\end{eqnarray}
So we obtain:
\begin{eqnarray}
\tilde K^{(2)}_t [\rho_S(0)]&=& - i  \sum_{w,w'} \sum_{\alpha \beta} \Xi(w,w',t) [A^{\dagger}_{\alpha}(w) A_{\beta}(w') ,\rho_S(0)]+  
\xi_{\alpha \beta}(w,w',t) \Big( A_{\beta}(w') \rho_S(0) A^{\dagger}_{\alpha}(w)  - \frac{1}{2} \{ A^{\dagger}_{\alpha}(w) A_{\beta}(w'),\rho_S(0)\} \Big),
\nonumber 
\end{eqnarray}
where
 \begin{equation}
     \xi_{\alpha \beta}(w,w',t)=\int_{0}^{t} dt_1 \int_{0}^{t} dt_2 e^{i (w t_1 - w' t_2)} \langle R_{\alpha}(t_1) R_{\beta}(t_2) \rangle.
 \end{equation}
We may rewrite this in terms of previously obtained quantities as:
\begin{eqnarray}
  \xi_{\alpha \beta}(w,w',t)&=&\int_{0}^{t} ds \int_{0}^{t} d\omega e^{i (w s - w' \omega)} \langle R_{\alpha}(s) R_{\beta}(w) \rangle  \\
  &=& \int_{0}^{t} ds \int_{s}^{t} d\omega e^{i (w s - w' \omega)} \langle R_{\alpha}(s) R_{\beta}(w) \rangle
  +\int_{0}^{t} ds \int_{0}^{s} d\omega e^{i (w s - w' \omega)} \langle R_{\alpha}(s) R_{\beta}(w) \rangle \\
  &=&\int_{0}^{t} d\omega \int_{0}^{\omega} ds e^{i (w s - w' \omega)} \langle R_{\alpha}(s) R_{\beta}(w) \rangle +\int_{0}^{t} ds \int_{0}^{s} d\omega e^{i (w s - w' \omega)} \langle R_{\alpha}(s) R_{\beta}(w) \rangle \\
  &=&\int_{0}^{t} ds \int_{0}^{s} d\omega e^{i (w \omega - w' s)} \langle R_{\alpha}(w) R_{\beta}(s) \rangle +\int_{0}^{t} ds \int_{0}^{s} d\omega e^{i (w s - w' \omega)} \langle R_{\alpha}(s) R_{\beta}(w) \rangle \\
  &=&\int_{0}^{t} ds \int_{0}^{s} d\omega e^{i (w \omega - w' s)} \langle R_{\alpha}(w-s) R_{\beta} \rangle +\int_{0}^{t} ds \int_{0}^{s} d\omega e^{i (w s - w' \omega)} \langle R_{\alpha}(s-w) R_{\beta} \rangle \\
    &=&\int_{0}^{t} ds \int_{0}^{s} d\xi e^{i ((w - w') s-w\xi)} \langle R_{\alpha}(-\xi) R_{\beta} \rangle +\int_{0}^{t} ds \int_{0}^{s} d\xi e^{i ((w- w') s +\xi w')} \langle R_{\alpha}(\xi) R_{\beta} \rangle\\
     &=& \int_{0}^{t} ds e^{i(w-w')s} (\Gamma_{\beta \alpha}^{*}(w,s)+\Gamma_{\alpha \beta}(w',s))  \\
    &=& \int_{0}^{t} ds e^{i(w-w')s} \gamma_{\alpha \beta}(w,w',s)=\int_{0}^{t} ds \tilde \gamma_{\alpha \beta}(w,w',s)
\end{eqnarray}
Now, we can notice that the derivative of such coefficient corresponds to:
\begin{eqnarray}
 \frac{d}{dt} \xi_{\alpha \beta}(w,w',t)=e^{i(w-w')t} \gamma(w,w',t)= \tilde \gamma(w,w',t)
\end{eqnarray}
Furthermore from \cite{WinczewskiAlicki} we know that:
\begin{eqnarray}
  \frac{d}{dt} \Xi_{\alpha \beta}(w,w',t)=\frac{e^{i(w'-w)t}}{2i}(\Gamma_{\alpha \beta}(w',t)-\Gamma_{\beta \alpha}^{*}(w,t)) = e^{i(w'-w)t} S_{\alpha \beta}(w,w',t)= \tilde S_{\alpha \beta}(w,w',t)
\end{eqnarray}
One may then rewrite $\tilde K_{t}^{(2)}$ as:
\begin{align} 
  \tilde K^{(2)}_t[\rho] = \lambda^2 \int_0^t ds \sum_{\omega,\omega'}\sum_{\alpha \beta} \left(i \tilde{\mathcal{S}}_{\alpha \beta}(\omega,\omega',s) [\rho, A_\alpha^\dag(\omega) A_\beta(\omega')] +  \tilde \gamma_{\alpha \beta}(\omega,\omega',s) \left(A_\beta(\omega') \rho A_\alpha^\dag(\omega) - \frac{1}{2} \{A_\alpha^\dag(\omega) A_\beta(\omega'), \rho \} \right)\right)
\end{align}




 


\section{Comparison with 
 Lamb-shift Hamiltonian and the steady state - qubit case}  \label{qubit_section}

In this section we consider the particular case of a qubit coupled to a bosonic bath given by
\begin{equation}
H= \frac{\omega_0}{2} \sigma_z +\sum_k \Omega_k a^{\dagger}_k a_k + S \sum_{k=1}^{\infty} \lambda_k (a_k +a_{k}^{\dagger}) 
\end{equation}
where we take S to be a general interaction operator in the pauli basis:
\begin{equation}
 S = x \sigma_x + y \sigma_y + z \sigma_z 
\end{equation}
This form of Hamiltonian with $y=0$ has been studied previously in \cite{Guarnieri}, where it was reported that such Hamiltonian have steady-state coherences. In this section, we see that the general framework presented here agrees with that result. Using equation \eqref{mean_force_jump_representation} and this interaction, the second-order correction to the Hamiltonian takes the form:
\begin{eqnarray}
 H_\mathrm{cor}^{(2)} = 
\begin{bmatrix}
z^2 \Upsilon_\mathrm{cor}(0,0) + (x^2 + y^2) \Upsilon_\mathrm{cor}(\omega, \omega) & (x - i y) z (\Upsilon_\mathrm{cor}(0, -\omega) - \Upsilon_\mathrm{cor}(\omega,0)) \\
(x + i y) z (\Upsilon_\mathrm{cor}(0, -\omega) - \Upsilon_\mathrm{cor}(\omega,0)) & z^2 \Upsilon_\mathrm{cor}(0,0) + (x^2 + y^2) \Upsilon_\mathrm{cor}(-\omega, -\omega)
\end{bmatrix} 	    
\end{eqnarray}
where $\mathrm{cor}$ indicates the Lamb-shift ($LS$), steady-state ($st$) or mean-force ($mf$) correction. We can rewrite this  correction as a linear combination of the Pauli Matrices such that:
\begin{eqnarray}
 H_\mathrm{cor}^{(2)} &=&  A \mathbb{1} + B \sigma_x + C \sigma_y + D \sigma_z \\
 A &=& z^{2}\Upsilon_\mathrm{cor}(0,0)+\frac{x^{2}+y^{2}}{2}(\Upsilon_\mathrm{cor} (\omega,\omega)+ \Upsilon_\mathrm{cor} (-\omega,-\omega))\\
 B &=& xz(\Upsilon_\mathrm{cor}(0,-\omega) - \Upsilon_\mathrm{cor}(\omega,0))\\
 C &=& yz (\Upsilon_\mathrm{cor}(0,-\omega) -\Upsilon_\mathrm{cor}(\omega,0)) \\
 D &=& \frac{x^{2}+y^{2}}{_\mathrm{cor}2} (\Upsilon_\mathrm{cor}(\omega,\omega)-\Upsilon_\mathrm{cor}(-\omega,-\omega))
\end{eqnarray}

We can see how the different approaches differ qualitatively by looking at the structure of the different $\Upsilon_\mathrm{cor}(\omega,\omega')$ given by each approach. It is important to remark that any approach that performs the secular approximation will have both $B$ and $C$ equal to zero, meaning the correction will be diagonal and as such won't be able to describe the off-diagonal elements of the steady states accordingly. While nonsecular approaches such as the Bloch-Redfield equation, will have non-diagonal corrections, leading to a more appropriate description of the off-diagonal elements of the correction as well as steady state coherences. Let us for a moment recall the structure of the Bloch-Redfield coefficients which are given by \eqref{upsilon_dyn}, simply substituting the appropriate frequencies for the qubit leads to:
 \begin{eqnarray}
  \Upsilon_\mathrm{LS}(0,-\omega)-\Upsilon_\mathrm{LS}(\omega,0)&=&\frac{\mathcal{S}(-\omega)-\mathcal{S}(\omega)}{2}+i \frac{\gamma(0)-(\gamma(\omega)+\gamma(-\omega))}{4} 
 \end{eqnarray}
and
 \begin{eqnarray}
    \Upsilon_\mathrm{LS}(\omega,\omega)+\Upsilon_\mathrm{LS}(-\omega,-\omega)&=& \mathcal{S}(\omega)+\mathcal{S}(-\omega) \\
    \Upsilon_\mathrm{LS}(\omega,\omega)-\Upsilon_\mathrm{LS}(-\omega,-\omega)&=& \mathcal{S}(\omega)-\mathcal{S}(-\omega)
 \end{eqnarray}
Let us now compare this coefficient with the one obtained with the mean force approach. We will only be considering the off-diagonal of the correction:
\begin{eqnarray}
K(\omega) =  \Upsilon_\mathrm{mf}(0, -\omega) - \Upsilon_\mathrm{mf}(\omega,0) = \frac{1}{2 \pi} \int_{-\infty}^{+\infty}  d \Omega \gamma(\Omega) C(\omega, \Omega),
\end{eqnarray}
where
\begin{eqnarray}
  C(\omega, \Omega) =  \left( \frac{\omega^2 (1-e^{-\beta \Omega}) \coth[\frac{\beta \omega}{2}]+ (1+e^{-\beta \Omega}) \Omega \omega}{\Omega (\Omega^2 - \omega^2)} \right). 
\end{eqnarray}
Additionally, our coefficients satisfy detailed balance conditions such that: 
\begin{eqnarray}
 \gamma(-\Omega) = \gamma(\Omega) e^{-\beta \Omega}, \ \ C(\omega, -\Omega) = C(\omega, \Omega) e^{\beta \Omega} 
\end{eqnarray}
Using those we see that $\gamma(-\Omega) C(\omega, -\Omega) = \gamma(\Omega) C(\omega, \Omega)$ and
\begin{eqnarray}
 K(\omega) = \frac{1}{\pi} \int_{0}^{+\infty}  d \Omega \ \gamma(\Omega) C(\omega, \Omega).
\end{eqnarray}
Let us now separate $\gamma(\Omega)$ into its symmetric and anti-symmetric parts:
\begin{eqnarray}
 & \gamma_s(\Omega) = \frac{1}{2} (\gamma(\Omega)+\gamma(-\Omega)) = \frac{1}{2} (1+e^{-\beta \Omega}) \gamma(\Omega) \\
 & \gamma_a(\Omega) = \frac{1}{2} (\gamma(\Omega) - \gamma(-\Omega)) = \frac{1}{2} (1 - e^{-\beta \Omega}) \gamma(\Omega)
\end{eqnarray}
Then we may write:
\begin{eqnarray}
 K(\omega) = \frac{2}{\pi} \int_{0}^{+\infty}  d \Omega \  \left( \frac{\omega^2 \gamma_a(\Omega)  \coth[\frac{\beta \omega}{2}] + \gamma_s(\Omega)  \Omega \omega}{\Omega (\Omega^2 - \omega^2)} \right).
\end{eqnarray}
As mentioned before the system with $y=0$ had been previously considered in \cite{Guarnieri}. Let us now compare our results to those previously available in the literature. Their effective Hamiltonian is given by:
\begin{eqnarray}
 H_S = 
\begin{bmatrix}
\lambda^2 f_1^2 \Upsilon_{st}(0,0) - \frac{1}{2} (\omega  - 2 \lambda^2 f_2^2 \Upsilon_{st}(\omega, \omega)) & \lambda^2 f_1 f_2 K(\omega) \\
\lambda^2 f_1 f_2 K(\omega) &  \lambda^2 f_1^2 \Upsilon_{st}(0,0) + \frac{1}{2} (\omega  - 2 \lambda^2 f_2^2 \Upsilon_{st}(\omega, \omega))
\end{bmatrix} 	    
\end{eqnarray}
The couplings  in this notation are $x = f_2, z=f_1$ and $y=0$. It is also putted $\omega' = \omega  - 2 \lambda^2 f_2^2 \Upsilon(\omega, \omega)$ and $E_0 = \lambda^2 f_1^2 \Upsilon(0,0)$ such that:

\begin{eqnarray}
 H_S = 
\begin{bmatrix}
E_0 - \frac{\omega'}{2} & \lambda^2 f_1 f_2 K(\omega) \\
\lambda^2 f_1 f_2 K(\omega) & E_0 + \frac{\omega'}{2}
\end{bmatrix} 	    
\end{eqnarray}
Then we may find that
\begin{eqnarray}
 \langle \sigma_x \rangle = \frac{\Tr[\sigma_x e^{-\beta H_S}]}{\Tr[e^{-\beta H_S}]} = - x \frac{\tanh[\sqrt{x^2 + z^2} \beta ]}{\sqrt{x^2 + z^2}}
\end{eqnarray}
where $x =  \lambda^2 f_1 f_2 K(\omega) $ and $z = \frac{\omega'}{2}$, and we expand it up to the second order of $\lambda$, i.e., 
\begin{eqnarray}
 \langle \sigma_x \rangle = -\frac{2 x}{\omega'} \tanh[\frac{\beta \omega}{2}] + O(\lambda^3).
\end{eqnarray}
In \cite{Guarnieri} the authors also put $\omega' = \omega$, such that
\begin{eqnarray}
 \langle \sigma_x \rangle &=& - \frac{4 \lambda^2 f_1 f_2}{\pi \omega} \int_{0}^{+\infty}  d \Omega \  \left( \frac{\gamma_s(\Omega) \omega \tanh[\frac{\beta \omega}{2}]}{ \Omega^2 - \omega^2} + \frac{\omega^2 \gamma_a(\Omega)}{\Omega (\Omega^2 - \omega^2)} \right) \\
\end{eqnarray}

Now let us compare $\gamma_{a,s}(\Omega)$ with the correlation function for a bosonic bath:
\begin{eqnarray}
f(t) = \int_0^\infty d \Omega \ J(\Omega) \left(\coth[\frac{\beta \Omega}{2}] \cos(\Omega t) - i \sin (\Omega t) \right)
\end{eqnarray}
\begin{eqnarray}
f(t) &=& \frac{1}{2} \int_{-\infty}^{+\infty} d\Omega \ e^{- i \Omega t} \gamma(\Omega) = \frac{1}{2} \int_{-\infty}^{+\infty} d\Omega \ \gamma(\Omega) \left(\cos(\Omega t) - i \sin (\Omega t) \right) \\
&=& \frac{1}{2} \int_{0}^{+\infty} d\Omega \ (\gamma_s(\Omega)+\gamma_a(\Omega)) \left(\cos(\Omega t) - i \sin (\Omega t) \right) +  \frac{1}{2} \int_{0}^{+\infty} d\Omega \ (\gamma_s(\Omega)-\gamma_a(\Omega)) \left(\cos(\Omega t) + i \sin (\Omega t) \right) \\
&=&  \frac{1}{\pi} \int_{0}^{+\infty} d\Omega \ \left(\gamma_s(\Omega) \cos(\Omega t) - i \gamma_a(\Omega) \sin (\Omega t) \right) =   \frac{1}{\pi} \int_{0}^{+\infty} d\Omega \ \gamma_a(\Omega) \left(\frac{\gamma_s(\Omega)}{\gamma_a(\Omega)} \cos(\Omega t) - i \sin (\Omega t) \right) \\
&=&  \frac{1}{\pi} \int_{0}^{+\infty} d\Omega \ \gamma_a(\Omega) \left(\frac{1+e^{-\beta \Omega}}{1-e^{-\beta \Omega}} \cos(\Omega t) - i \sin (\Omega t) \right) =   \frac{1}{\pi} \int_{0}^{+\infty} d\Omega  \gamma_a(\Omega) \left(\coth[\frac{\beta \Omega}{2}] \cos(\Omega t) - i \sin (\Omega t) \right)
\end{eqnarray}
According to this, we have the following relations:
\begin{eqnarray}
\gamma_a(\Omega) =  \pi J(\Omega) =  \pi \omega_a(\Omega), \ \ \gamma_s(\Omega) = \pi J(\Omega) \coth[\frac{\beta \Omega}{2}] = \pi \omega_s(\Omega)
\end{eqnarray}
and the final result becomes:
\begin{eqnarray} \label{our_formula}
  \langle \sigma_x \rangle &=&   - \frac{4 \lambda^2 f_1 f_2}{\omega} \int_{0}^{+\infty}  d \Omega \  \left( \frac{\omega_s(\Omega) \omega \tanh[\frac{\beta \omega}{2}]}{ \Omega^2 - \omega^2} - \frac{\omega^2 \omega_a(\Omega)}{\Omega (\Omega^2 - \omega^2)} \right).
\end{eqnarray}
On the other hand, in \cite{Guarnieri} we have
\begin{eqnarray} \label{gaurnieri_formula}
\langle \sigma_x \rangle = \frac{2 \lambda^2 f_1 f_2}{\omega} [\Delta_s (\omega) \tanh[\frac{\beta \omega}{2}] + \Delta_a (\omega) - \Delta_a(0)]
\end{eqnarray}
where
\begin{eqnarray}
&\Delta_s (\omega) = \int_0^\infty d \Omega \ \omega_s(\Omega) \left(\frac{1}{\Omega+\omega} - \frac{1}{\Omega-\omega} \right) = - 2 \int_0^\infty d\Omega  \frac{\omega_s(\Omega) \omega}{\Omega^2 - \omega^2}, \\
&\Delta_a (\omega) = \int_0^\infty d \Omega \ \omega_a(\Omega) \left(\frac{1}{\Omega+\omega} + \frac{1}{\Omega-\omega} \right) = 2 \int_0^\infty d\Omega  \frac{\omega_a(\Omega) \Omega}{\Omega^2 - \omega^2}, \\
& \Delta_a(\omega) - \Delta_a(0) = 2 \int_0^\infty d \Omega \ \omega_a(\Omega) \frac{\Omega^2 - (\Omega^2 - \omega^2)}{\Omega (\Omega^2 - \omega^2)} = 2 \int_0^\infty d \Omega  \frac{\omega_a(\Omega) \omega^2}{\Omega (\Omega^2 - \omega^2)},
\end{eqnarray}
such that \eqref{gaurnieri_formula} is equal to \eqref{our_formula}.

\section{Bloch-Redfield master equation (derivation)}
We shall derive the Bloch-Redfield master equation in terms of $\tilde \gamma$ \eqref{gamma_interaction} and $\tilde{\mathcal{S}}$ \eqref{S_interaction} coefficients starting from the von Neumann equation:
\begin{eqnarray}
\tilde{\mathcal{L}}^R_t[\tilde \rho(t)] &=& - \int_0^{t} ds \ \Tr_R [H_I (t), [H_I(s), \tilde \rho(t) \otimes \gamma_R]],
\end{eqnarray}
which is derived according to the Born-Markov approximation. We expand commutators and put an explicit form of the interaction Hamiltonian \eqref{interaction_hamiltonian}:
\begin{eqnarray*}
  \tilde{\mathcal{L}}^R_t[\tilde \rho(t)] &=& - \int_0^{t} ds \ \Tr_R [H_I (t), [H_I(s), \tilde \rho(t) \otimes \gamma_R]] = \int_0^{t} ds \Tr_R \left([H_I(s)\tilde \rho(t) \otimes \gamma_R, H_I (t)] - [\tilde \rho(t) \otimes \gamma_R \ H_I(s), H_I (t)]\right) \\
  &=& \int_0^{t} ds \Tr_R \big[ H_I(s)\tilde \rho(t) \otimes \gamma_R  H_I (t) - \tilde \rho(t) \otimes \gamma_R H_I(s) H_I (t) - H_I (t) H_I(s)\tilde \rho(t) \otimes \gamma_R  +   H_I (t) \tilde \rho(t) \otimes \gamma_R H_I(s) \big] \\
  &=& \sum_{\alpha, \beta} \int_0^{t} ds \big[ A_\alpha (s)\tilde \rho(t) A_\beta(t) \langle R_\beta (t) R_\alpha (s) \rangle_{\gamma_R} +   A_\beta (t) \tilde \rho(t) A_\alpha(s) \langle R_\alpha (s) R_\beta (t) \rangle_{\gamma_R}  \\ 
  && \ \ \ \ \ \ \ \ \ - \tilde \rho(t) A_\alpha (s) A_\beta (t) \langle R_\alpha (s) R_\beta (t) \rangle_{\gamma_R} - A_\beta (t) A_\alpha (s)\tilde \rho(t) \langle R_\beta (t) R_\alpha (s) \rangle_{\gamma_R} \big]  \\
  &=& \sum_{\alpha, \beta} \int_0^{t} ds \left[\langle R_\alpha (t) R_\beta (s) \rangle_{\gamma_R} \left(A_\beta (s)\tilde \rho(t) A_\alpha(t) - A_\alpha (t) A_\beta (s)\tilde \rho(t)\right) \right] + \text{h.c.} \\
 \end{eqnarray*}
After introducing the jump operators \eqref{jump_operators_definition_appendix}, we get 
 \begin{eqnarray}
 \tilde{\mathcal{L}}^R_t[\tilde \rho(t)] &=& \sum_{\omega,\omega'} \sum_{\alpha, \beta} \int_0^{t} ds \ e^{- i (\omega' s + \omega t)} \langle R_\alpha (t) R_\beta (s) \rangle_{\gamma_R} \left(A_\beta (\omega')\tilde \rho(t) A_\alpha(\omega) - A_\alpha (\omega) A_\beta (\omega')\tilde \rho(t)\right)  + \text{h.c.} \\
 &=& \sum_{\omega,\omega'} \sum_{\alpha, \beta} \int_0^{t} ds \ e^{i (\omega t - \omega' s)} \langle R_\alpha (t) R_\beta (s) \rangle_{\gamma_R} \left(A_\beta (\omega')\tilde \rho(t) A^\dag_\alpha(\omega) - A^\dag_\alpha (\omega) A_\beta (\omega')\tilde \rho(t)\right)  + \text{h.c.} \\
 &=& \sum_{\omega,\omega'} \sum_{\alpha, \beta} \ \tilde \Gamma_{\alpha \beta}(\omega, \omega', t) \left(A_\beta (\omega')\tilde \rho(t) A^\dag_\alpha(\omega) - A^\dag_\alpha (\omega) A_\beta (\omega')\tilde \rho(t)\right) + \text{h.c.}
 \end{eqnarray}
where we put the definition:
   \begin{eqnarray}
   \tilde \Gamma_{\alpha \beta}(\omega, \omega', t) \equiv \int_0^{t} ds \ e^{i(\omega t - \omega' s)} \langle R_\alpha (t) R_\beta (s) \rangle_{\gamma_R}.
 \end{eqnarray}
This can be further simplified to the form:
\begin{eqnarray} \label{gamma_interaction_picture}
 \tilde \Gamma_{\alpha \beta}(\omega, \omega', t) = e^{i(\omega - \omega') t} \int_0^{t} ds \ e^{i \omega' s } \langle R_\alpha (s) R_\beta (0) \rangle_{\gamma_R} \equiv e^{i(\omega - \omega') t} \ \Gamma_{\alpha \beta}(\omega', t)
\end{eqnarray}
where we changed the variables in the integrand $s \to t-s$ and use the property $\langle R_\alpha (t) R_\beta (s) \rangle_{\gamma_R} = \langle R_\alpha (t-s) R_\beta (0) \rangle_{\gamma_R}$. 
Next, we rewritten the hermitian conjugate part in the form: 
 \begin{align}
    &\sum_{\omega,\omega'} \sum_{\alpha, \beta} \tilde \Gamma^*_{\alpha \beta}(\omega, \omega', t) \left(A_\beta (\omega')\tilde \rho(t) A^\dag_\alpha(\omega) - A^\dag_\alpha (\omega) A_\beta (\omega')\tilde \rho(t)\right)^\dag \\ 
    &= \sum_{\omega,\omega'} \sum_{\alpha, \beta} \tilde \Gamma^*_{\alpha \beta}(\omega, \omega', t) \left(A_\alpha (\omega)\tilde \rho(t) A^\dag_\beta(\omega') - \tilde \rho(t) A^\dag_\beta (\omega') A_\alpha (\omega) \right) \\
    &= \sum_{\omega,\omega'} \sum_{\alpha, \beta} \tilde \Gamma^*_{\beta \alpha}(\omega', \omega, t) \left(A_\beta (\omega')\tilde \rho(t) A^\dag_\alpha(\omega) - \tilde \rho(t) A^\dag_\alpha (\omega) A_\beta (\omega') \right)
 \end{align}
Finally, we get
 \begin{eqnarray}
 \tilde{\mathcal{L}}^R_t[\tilde \rho(t)] &=& \sum_{\omega,\omega'} \sum_{\alpha, \beta} \ (\tilde \Gamma_{\alpha \beta}(\omega, \omega', t) + \tilde \Gamma^*_{\beta \alpha}(\omega', \omega, t))  A_\beta (\omega')\tilde \rho(t) A^\dag_\alpha(\omega) \\
 &-& \frac{1}{2} \sum_{\omega,\omega'} \sum_{\alpha, \beta} \left( \tilde \Gamma_{\alpha \beta}(\omega, \omega', t) A^\dag_\alpha (\omega) A_\beta (\omega')\tilde \rho(t) + \tilde \Gamma^*_{\beta \alpha}(\omega', \omega, t) \tilde \rho(t) A^\dag_\alpha (\omega) A_\beta (\omega') \right) \\
&-& \frac{1}{2} \sum_{\omega,\omega'} \sum_{\alpha, \beta} \left( \tilde \Gamma_{\alpha \beta}(\omega, \omega', t) A^\dag_\alpha (\omega) A_\beta (\omega')\tilde \rho(t) + \tilde \Gamma^*_{\beta \alpha}(\omega', \omega, t) \tilde \rho(t) A^\dag_\alpha (\omega) A_\beta (\omega') \right) \\
 &-& \frac{1}{2} \sum_{\omega,\omega'} \sum_{\alpha, \beta} \left( \tilde \Gamma^*_{\beta \alpha}(\omega', \omega, t)) A^\dag_\alpha (\omega) A_\beta (\omega')\tilde \rho(t) + \tilde \Gamma_{\alpha \beta}(\omega, \omega', t) \tilde \rho(t) A^\dag_\alpha (\omega) A_\beta (\omega') \right) \\
  &+& \frac{1}{2} \sum_{\omega,\omega'} \sum_{\alpha, \beta} \left( \tilde \Gamma^*_{\beta \alpha}(\omega', \omega, t)) A^\dag_\alpha (\omega) A_\beta (\omega')\tilde \rho(t) + \tilde \Gamma_{\alpha \beta}(\omega, \omega', t) \tilde \rho(t) A^\dag_\alpha (\omega) A_\beta (\omega') \right) 
 \end{eqnarray}
 where the last two lines sum up to zero. 
After rearranging terms and putting the definition \eqref{gamma_interaction} and \eqref{S_interaction}, we finally obtain the master equation in the form:
 \begin{align}
 \tilde{\mathcal{L}}^R_t[\tilde \rho(t)] &=& \sum_{\omega,\omega'} \sum_{\alpha, \beta} \left[ i \tilde{\mathcal{S}}_{\alpha \beta} (\omega,\omega',t) [\tilde \rho(t), A^\dag_\alpha (\omega) A_\beta (\omega')] + \tilde \gamma_{\alpha \beta}(\omega, \omega', t) \left( A_\beta (\omega')\tilde \rho(t) A^\dag_\alpha(\omega) -\frac{1}{2} \{A^\dag_\alpha (\omega) A_\beta (\omega'), \tilde \rho(t) \} \right) \right].
 \end{align}

\section{The cumulant equation in the Schr{\"o}dinger pircture}

The cumulant equation is originally derived in the interaction picture. In order to transform the cumulant equation into the Schr{\"o}dinger picture we start with a simple observation. 
\begin{align} \label{eqn:PreDiffCum}
    \tilde{\rho}(t) = e^{i[H_0,\cdot]t}{\rho}(t).
\end{align}
The super-operator in the r.h.s. of the equation above has its unique inverse, and $\tilde{\rho}(0)={\rho}(0)$, therefore:
\begin{align} \label{cumulant_equation}
   \rho(t) =  e^{-i[H_0,\cdot]t} e^{\tilde K^{(2)}_t} \rho(0)= e^{ K^{(2)}_t} \rho(0).
\end{align}
The r.h.s. of the equation above defines the Schr{\"o}dinger picture cumulant eqaution super-operator $K^{(2)}_t$:
\begin{align}
    e^{ K^{(2)}_t} = e^{-i[H_0,\cdot]t} e^{\tilde K^{(2)}_t}.
    \label{eqn:SchCumMap}
\end{align}

The explicit form of $K^{(2)}_t$ can be found with the aid of the Baker–Campbell–Hausdorff (BCH) formula. 
\begin{align}
    e^X e^Y = \exp{X+Y+\frac{1}{2} [X,Y]+\frac{1}{12}[X,[X,Y]]-\frac{1}{12}[Y,[X,Y]]+\cdots}.\label{eqn:BCHformula}
\end{align}
We observe that in a generic case the super-operator $K^{(2)}_t$ in not of the GKSL form. This follows from the presence of multi-commutator terms in the formula \eqref{eqn:BCHformula}. These terms do not vanish, as $[K^{(2)}_t,H_0]$ is not central. Therefore, $e^{ K^{(2)}_t}$ is an example of a one-parameter family of CPTP dynamical maps that are not of the GKSL form.

\section{The cumulant equation in the differential form}

We start this Section with the following Lemma on the properties of the derivative of an exponential map.
\begin{lemma} The derivative of the exponential map is given by
\begin{align}
    \frac{d}{dt} e^{X(t)}= \left(\frac{e^{[X(t),\cdot]}-\mathbb{1}}{[X(t),\cdot]}\frac{d X(t)}{dt}\right)e^{X(t)}.
\end{align}
\label{lem:DofExp}
\end{lemma}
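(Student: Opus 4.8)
The plan is to prove this identity by the standard auxiliary-parameter trick, which reduces the computation of $\frac{d}{dt}e^{X(t)}$ --- awkward precisely because $X(t)$ and $\dot X(t)\equiv\frac{dX(t)}{dt}$ need not commute --- to solving a simple linear first-order ODE in an auxiliary variable $s$. Throughout I would treat $X(t)$ as an element of the (Banach) algebra of super-operators, so that $[X,\cdot]$ is its adjoint action on that algebra; the argument is purely algebraic and uses no special structure of $X$, hence it applies verbatim to the case $X=\tilde K^{(2)}_t$ needed later.

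First I would introduce the one-parameter family
\[
 f(s) = \left(\frac{d}{dt}e^{sX(t)}\right)e^{-sX(t)}, \qquad s\in[0,1],
\]
which satisfies $f(0)=0$ and $f(1)=\left(\frac{d}{dt}e^{X}\right)e^{-X}$, so that recovering $\frac{d}{dt}e^{X}$ amounts to computing $f(1)$ and multiplying on the right by $e^{X}$. Differentiating $f$ in $s$, using that $\partial_s$ and $\partial_t$ commute when applied to $e^{sX}$ together with the elementary relation $\partial_s e^{sX}=Xe^{sX}=e^{sX}X$, and repeatedly exploiting $e^{-sX}X=Xe^{-sX}$, I would obtain the inhomogeneous linear ODE
\[
 \frac{df}{ds} = \dot X + X f - f X = \dot X + [X, f], \qquad f(0)=0.
\]

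Next I would solve this ODE by variation of parameters (equivalently, by checking that $e^{-s[X,\cdot]}f(s)$ has derivative $e^{-s[X,\cdot]}\dot X$), which gives
\[
 f(s) = \int_0^{s} ds'\, e^{(s-s')[X,\cdot]}\,\dot X = \left(\int_0^{s} du\, e^{u[X,\cdot]}\right)\dot X,
\]
after the substitution $u=s-s'$. Evaluating at $s=1$ and performing the integral $\int_0^1 du\,e^{u[X,\cdot]}=\frac{e^{[X,\cdot]}-\mathbb{1}}{[X,\cdot]}$ yields $f(1)=\frac{e^{[X,\cdot]}-\mathbb{1}}{[X,\cdot]}\dot X$, and multiplying on the right by $e^{X}$ reproduces the claimed formula.

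The only point demanding care --- and the main, rather mild, obstacle --- is the meaning of the symbol $\frac{e^{[X,\cdot]}-\mathbb{1}}{[X,\cdot]}$ when $[X,\cdot]$ is not invertible (as happens generically, since $X$ always commutes with itself). I would emphasize that this is shorthand for the entire function $\varphi(z)=\frac{e^{z}-1}{z}=\sum_{n\ge 0}\frac{z^{n}}{(n+1)!}$ evaluated at the operator $[X,\cdot]$, which is exactly $\int_0^1 du\,e^{u[X,\cdot]}$ and is therefore always well defined; no inversion is ever actually performed. Everything else is a routine manipulation, justified by absolute convergence of the exponential series in the super-operator norm.
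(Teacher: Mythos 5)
Your proof is correct: the auxiliary family $f(s)=\bigl(\tfrac{d}{dt}e^{sX(t)}\bigr)e^{-sX(t)}$ satisfies $f(0)=0$ and $f'(s)=\dot X+[X,f(s)]$, whose solution by variation of parameters gives $f(1)=\int_0^1 du\,e^{u[X,\cdot]}\dot X=\frac{e^{[X,\cdot]}-\mathbb{1}}{[X,\cdot]}\dot X$, and your remark that the quotient is shorthand for the entire function $\varphi(z)=(e^z-1)/z$ applied to $[X,\cdot]$ correctly disposes of the only delicate point. The paper itself does not prove the lemma but merely cites Theorem 5 of an external reference, so your argument supplies the standard, self-contained derivation that the paper outsources; there is nothing to object to.
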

\begin{proof}
The proof of the above relation is identical to the proof of Theorem 5 in reference \cite{Wulf_2008} up to small modifications. 
\end{proof}

Using Lemma \ref{lem:DofExp} we instantly obtain the cumulant equation in the differential form:
\begin{align}
    \frac{d}{dt} \tilde{\rho}(t)= \left(\frac{e^{[\tilde{K}^{(2)}_t,\cdot]}-\mathbb{1}}{[\tilde{K}^{(2)}_t,\cdot]}\frac{d \tilde{K}^{(2)}_t}{dt}\right)\tilde{\rho}(t)=\tilde{\mathcal{L}}_t^C \tilde{\rho}(t).
    \label{eqn:IntCumDiff}
\end{align}
This result can also be obtained with integration of equation \eqref{eqn:PreDiffCum}. When truncated to the leading order, the above formula reproduces the Bloch-Redfield master equation since $\frac{d\tilde{K}^{(2)}_t}{dt}= \tilde{\mathcal{L}}^{R}_t$. 

Equation \eqref{eqn:IntCumDiff} can be readily transformed into the Schr{\"o}dinger picture. This is done with iterative application of the $e^{\pm i H_0 t}$ operators to the jump operators $A_i(\omega)$ inside $\tilde{K}^{(2)}_t$ super-operator.
\begin{align}
    \frac{d}{dt}  {\rho}(t)&= \left(-i[H_0,\cdot]+e^{-i[H_0,\cdot]t}\frac{e^{[\tilde{K}^{(2)}_t,\cdot]}-\mathbb{1}}{[\tilde{K}^{(2)}_t,\cdot]}\tilde{\mathcal{L}}^{R}_t e^{i[H_0,\cdot]t}\right) \rho(t) \\
    &=\left(-i[H_0,\cdot]+\frac{e^{[\bar{K}^{(2)}_t,\cdot]}-\mathbb{1}}{[\bar{K}^{(2)}_t,\cdot]}\bar{\mathcal{L}}^{R}_t \right)\rho(t)=\mathcal{L}_t^C {\rho}(t),
    \label{eqn:SchCumDiff1}
\end{align}
where 
\begin{align}
  &\bar K^{(2)}_t[\rho] \nonumber\\
  &= \lambda^2 \int_0^t ds \sum_{\omega,\omega'}\sum_{\alpha \beta}e^{i(\omega-\omega')(s-t)} \left(i {\mathcal{S}}_{\alpha \beta}(\omega,\omega',s) [\rho, A_\alpha^\dag(\omega) A_\beta(\omega')] +   \gamma_{\alpha \beta}(\omega,\omega',s) \left(A_\beta(\omega') \rho A_\alpha^\dag(\omega) - \frac{1}{2} \{A_\alpha^\dag(\omega) A_\beta(\omega'), \rho \} \right)\right),\\
  &\bar{\mathcal{L}}^{R}_t={\mathcal{L}}^{R}_t+i[H_0,\cdot].
\end{align}
Moreover, we observe that:
\begin{align}
    \mathcal{L}_t^C = \mathcal{L}^{R}_t + \mathcal{O}(\lambda^4).
\end{align}

Unfortunately, the problem of the long-time limit of the above super-operator was not resolved yet. This situation makes determination of the higher-order corrections to the steady state of the cumulant equation even more involving. 

Equation \eqref{eqn:SchCumDiff1} can be compared with the differential form of the Schr{\"o}dinger picture cumulant equation obtained with Lemma \ref{lem:DofExp} and the super-operator in equation \eqref{eqn:SchCumMap}.
\begin{align}
    \frac{d}{dt} {\rho}(t)= \left(\frac{e^{[{K}^{(2)}_t,\cdot]}-\mathbb{1}}{[{K}^{(2)}_t,\cdot]}\frac{d {K}^{(2)}_t}{dt}\right){\rho}(t)={\mathcal{L}}_t^C {\rho}(t).
    \label{eqn:SchCumDiff2}
\end{align}
The above formula has only a formal meaning, as the ${K}^{(2)}_t$ super-operator does not possess a closed form formula. We present it only for the curiosity of the reader.

\section{Extracting the correction from a density matrix}

To extract the second order correction from the reaction coordinate, we started by obtaining the steady state density matrix, which is given by a Gibbs state of the form:
\begin{equation}
    \rho_{\lambda} = \frac{e^{-\beta H}}{\mathcal{Z}}.
\end{equation} 
By taking the logarithm, one obtains 
\begin{equation}\label{eq:log}
    \log(\mathcal{Z}) + \log(\rho_{\lambda}) =-\beta H.
\end{equation}
We then expand $H$ 
\begin{align}
    & H = H_0 + \lambda^2 H_2 + \lambda^4 H_4+ \dots 
\end{align}
substituting in Eq. \eqref{eq:log}
\begin{align}\label{eq:log2}
\log(\mathcal{Z}) + \log(\rho_{\lambda}) =-\beta (H_0 + \lambda^2 H_2 + \lambda^4 H_4+\dots) 
\end{align}
We now impose our gauge $\Tr[H]=0$. 
Then tracing out both sides we obtain 
\begin{align}
    \log(\mathcal{Z}) = -\frac{1}{d} \Tr[\log(\rho_{\lambda})]
\end{align}
By substituting back into Eq.\eqref{eq:log2} and rearranging terms one obtains
\begin{equation}
    H_{2} = \frac{1}{\lambda^{2}} \Big[ \frac{1}{\beta} \Big( \frac{1}{d} \Tr[\log(\rho_{\lambda})] -\log(\rho_{\lambda})\Big) - \Big( H_{0}+ \mathcal{O}(\lambda^{4}) \Big)\Big]
\end{equation}
Finally as $\lambda$ approaches zero
\begin{equation}
    \lim_{\lambda \to 0} H_{2} = \lim_{\lambda \to 0} \frac{1}{\lambda^{2}} \Big[ \frac{1}{\beta} \Big( \frac{1}{d} \Tr[\log(\rho_{\lambda})] -\log(\rho_{\lambda})\Big) -  H_{0} \Big]
\end{equation}

\end{document}